\documentclass[acmsmall, screen, nonacm]{acmart}

\usepackage{cases}
\usepackage{IEEEtrantools}

\newcommand{\bs}{\boldsymbol}
\newcommand{\mb}{\mathbb}
\newcommand{\mc}{\mathcal}

\newcommand{\blank}{\;\;}
\renewcommand{\l}{\langle}
\renewcommand{\r}{\rangle}
\newcommand{\ua}{\rangle}
\DeclareMathOperator*{\argmax}{arg\,max}

\begin{document}

\title{Exploiting Scheduling Flexibility via State-Based Scheduling When Guaranteeing Worst-Case Services}

\author{Yike Xu}
\authornote{Yike Xu is the corresponding author.}
\email{yike\_xu@hotmail.com}

\author{Mark S. Andersland}
\email{mark-andersland@uiowa.edu}

\begin{abstract}
Even when providing long-run, worst-case guarantees to competing flows of unit-sized tasks, a slot-timed, constant-capacity server's scheduler may retain significant, short-run, scheduling flexibility. Existing worst-case scheduling frameworks offer only limited opportunities to characterize and exploit this flexibility. We introduce a state-based framework that overcomes these limitations. Each flow's guarantee is modeled as a worst-case service that can be updated as tasks arrive and are served. Taking all flows' worst-case services as a collective state, a state-based scheduler ensures, from slot to slot, transitions between schedulable states. This constrains its scheduling flexibility to a polytope consisting of all feasible schedules that preserve schedulability.

We fully characterize this polytope, enabling scheduling flexibility to be fully exploited. But, as our framework is general, full exploitation is computationally complex. To reduce complexity, we show: that when feasible schedules exist, at least one can be efficiently identified by simply maximizing the server's capacity slack; that a special class of worst-case services, min-plus services, can be efficiently specified and updated using the min-plus algebra; and that efficiency can be further improved by restricting attention to a min-plus service subclass, dual-curve services. This last specialization turns out to be a dynamic extension of service curves that approaches near practical viability while maintaining all features essential to our framework.
\end{abstract}

\ccsdesc[500]{General and reference~Performance}
\ccsdesc[500]{Networks~Packet scheduling}
\ccsdesc[500]{Software and its engineering~Scheduling}
\ccsdesc[500]{Theory of computation~Scheduling algorithms}
\ccsdesc[500]{Mathematics of computing~Permutations and combinations}
\ccsdesc[500]{Mathematics of computing~Matroids and greedoids}

\keywords{Worst-case guarantees, cumulative vectors, state space, polymatroids, supermodular functions, permutohedra, priorities, fairness, min-plus algebra, service curves, EDF scheduling.}

\maketitle

\section{Introduction}\label{S:introduction}

Given a slot-timed, constant-capacity server, what short-run scheduling decisions must be made to provide long-run, worst-case guarantees to competing flows of unit-sized tasks? Although this problem has been extensively investigated, questions remain. Consider the following example:

\begin{example}\label{Ex:appetizer}
	Suppose that two flows compete for service from a $c$-task-per-slot server. In slot~$t$, each has $50c$ tasks queued in its buffer, but the tasks from flow~$1$ must be served before $t+99$, while those from flow~$2$, before $t+100$. Clearly, during interval $[t, t+100)$, the server needs to be work-conserving in the sense that it always serves as many tasks as it can. If, in each slot, the server's capacity is split half-and-half between the two flows, $c/2$ tasks from flow~$1$ will miss their deadlines. If instead, tasks are scheduled earliest-deadline-first (EDF), all deadlines will be met as flow~$1$ will be served exclusively during interval $[t, t+50)$ and flow~$2$, during $[t+50, t+100)$. In fact, if the scheduler is smart enough, it can freely apportion service to both flows during interval $[t, t+99)$ as long as it leaves $c$~tasks from flow~$2$ to be served in slot $t+99$.	
\end{example}

Of course, this example is highly simplified. In fact, outlining the smart scheduler's flexibility cannot be so easy if, for instance, task deadlines are not so uniform, or some tasks are allowed to arrive after slot~$t$ but before their deadlines. Nonetheless, the example suffices to highlight that, even when providing long-run, worst-case guarantees to competing flows, the scheduler may retain a significant degree of short-run flexibility. In this paper, we introduce a state-based framework in which this flexibility can be characterized and exploited. Our principal contributions are as follows.

\begin{itemize}
	\item {\bf Modeling Guarantees as States:} A natural model of a flow's long-run, worst-case guarantee is a worst-case service that maps each realization of its arrival process to a worst-case service process. We show that this map can be updated by discounting the service already received. So worst-case services are states that can be updated as tasks arrive and are served.
	
	\item {\bf Introducing State-Based Scheduling:} Taking all flows' worst-case services as a collective state, we find the condition that this collective state satisfies if and only if it is schedulable. A state-based scheduler must ensure, from slot to slot, transitions between schedulable states.	This constrains the scheduler's flexibility by constraining its state evolution.
	
	\item {\bf Characterizing Scheduling Flexibility:} In each slot, the state-based scheduler's flexibility is constrained to a feasible polytope, consisting of all feasible schedules that preserve schedulability. We assemble this polytope from a sequence of permutohedral slices.
	
	\item {\bf Exploiting Scheduling Flexibility:} Schedules can be freely selected from the feasible polytope to enforce, for instance, priority or fairness criteria. We show that this flexibility can also be traded for efficiency by selecting max-slack schedules or their per-class extensions.
	
	\item {\bf Reducing Complexity:} Our framework is computationally complex. We identify three complexity-reducing specializations that culminate in dual-curve services, a dynamic extension of service curves that approaches near practical viability while maintaining all features essential to our framework. To illustrate this, we revisit Example~\ref{Ex:appetizer}  at the end of the paper.
\end{itemize}

\subsection{A State-Based Framework}\label{SS:framework}

In \cite{Cruz:1991A, Cruz:1991B, Parekh:1993, Parekh:1994}, cumulative curves were introduced to characterize arrival and service processes. They become cumulative vectors in slot-timed systems. In particular, arrival vectors can be used to count a flow's cumulative task arrivals, and departure vectors, its cumulative served tasks. We extend the definition of arrival vectors to include initially queued tasks and define a worst-case service to be a map from each such queued arrival vector to a worst-case departure vector. Including queued tasks in the arrival counts enables worst-case services to be defined not only when buffers are empty but also when they are non-empty. This is necessary for these services to be states that can be updated as tasks arrive and are served.

Taking all flows' worst-case services as a collective state, the key to state-based scheduling is finding the schedulability condition on this collective state necessary and sufficient to ensure that all flows' worst-case services can be guaranteed. Once this condition is found, in each slot, it can be used, on the one hand, to admit or deny new service requests, and on the other, to identify all feasible schedules that preserve schedulability. To find the schedulability condition, we introduce the concept of the spectrum of a worst-case service. It specifies, during each slot interval, the least capacity that must be reserved to guarantee the worst-case service. We use this concept to formulate the schedulability condition and then use the condition to identify all feasible schedules. The set of feasible schedules turns out to be a feasible polytope. A principal constraint on this polytope is determined by a baseline function that specifies the least number of tasks that must be served from any given subset of flows. As the baseline function is supermodular, we show that, when the total service is fixed, the resulting slice of the feasible polytope is a permutohedron, a special polytope from polymatroid theory \cite{Edmonds:1970}.

\subsection{Three Specializations}\label{SS:specailizations}

A downside of our framework's generality is its complexity. This complexity is two-fold. On the one hand, to fully exploit the flexibility of selecting any feasible schedule, the scheduler must fully determine the feasible polytope, which is combinatorially difficult. On the other hand, worst-case services, as uncountably infinite, full-blown maps between cumulative vectors, are challenging to specify and update. To address these difficulties, but also for their own merits, we consider three specializations: max-slack schedules, min-plus services, and dual-curve services.

Max-slack schedules maximize the server's capacity slack, that is, they leave maximum room for the server to admit new service requests. When the total service is fixed, the max-slack schedule is feasible if a feasible schedule exists. Moreover, aggregating flows into classes such that, intra-class, flows are max-slack scheduled, enables intermediate tradeoffs between flexibility and efficiency, because feasible inter-class schedules still form permutohedra, but of lower dimension.

Min-plus services are motivated by the insight that, among all worst-case services that share the same spectrum, there exists a maximum and this maximum can be constructed using the min-plus algebra. So, by construction, min-plus services can be completely identified by their spectra. Accordingly, to specify and update these services, instead of an uncountably infinite map between cumulative vectors, we need only specify and update a countably infinite spectral matrix.

An alternative definition of min-plus services enables a simplification of their update rule. Motivated by this simplification, their efficiency can be further improved by restricting attention to a min-plus service subclass, dual-curve services. To specify and update these services, instead of a spectral matrix, we need only specify and update a pair of cumulative vectors, one static and one dynamic. We call them dual-curve, as opposed to dual-vector, services to highlight their connection to service curves. Each service curve, according to \cite{Cruz:1995}, can be specified by a static cumulative vector. Adding a dynamic vector, yields the dynamic extension, a dual-curve service. It is well known that service curves can be guaranteed by EDF scheduling \cite{Sariowan:1999}. When applied to dual-curve services, EDF scheduling results in max-slack schedules. In contrast, state-based scheduling is able to identify all feasible schedules, max-slack or non-max-slack.

The rest of the paper is organized as follows. In Section~\ref{S:serverModel}, we introduce our service model. In Section~\ref{S:WCS}, we define worst-case services and their spectra. In Sections~\ref{S:schedulability} and \ref{S:feasible}, we introduce state-based scheduling, find the schedulability condition, and then use this condition to identify all feasible schedules. In Sections~\ref{S:MSS}, \ref{S:MPS}, and \ref{S:DVS}, we introduce, respectively, max-slack schedules, min-plus services, and dual-curve services. To save space, all proofs are relegated to appendices.\footnote{In particular, all proofs of the theorems appearing in Sections \ref{S:WCS}-\ref{S:MPS} can be found in Appendices \ref{A:WCSProof}-\ref{A:MPSProof}, respectively.} Also, some alternative proofs, as well as more results, comments, and examples, can be found in an extended exposition \cite{Xu:2021}.

\section{The Service Model}\label{S:serverModel}

Our service model is discrete in that time is slotted and all tasks are of unit size. As illustrated in Fig.~\ref{F:serverModel}, for each flow, indexed by $\omega\in\Omega := \{1,2,\ldots,n\}$, at the beginning of slot~$t$, $a^\omega$ tasks arrive and are immediately queued behind the $b^\omega$ tasks left unserved prior to slot~$t$. During slot~$t$, the scheduler determines $d^\omega$, the number of tasks to be served. As tasks cannot be served before they arrive,
\begin{equation}\label{E:causalityA}
	d^\omega \leq q^\omega := a^\omega+b^\omega,
\end{equation}
where $q^\omega$ is the number of tasks queued in flow~$\omega$'s buffer. Within each flow, the service order is first come, first served, so that, at the end of slot~$t$, the first $d^\omega$ tasks queued in flow~$\omega$'s buffer depart and are served, leaving
\begin{equation}\label{E:bUpdate}
	\dot{b}^\omega = q^\omega-d^\omega
\end{equation}
tasks unserved prior to slot $t+1$. As in (\ref{E:causalityA}) and (\ref{E:bUpdate}), unless noted otherwise, all variables are implicitly indexed by current slot~$t$. To index next slot $t+1$, we add a dot, as in $\dot{b}^\omega$.

\begin{figure}[t]
	\centering \scalebox{0.972}{\includegraphics{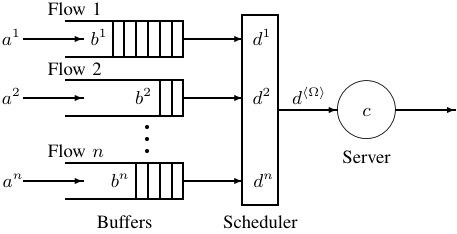}} \caption{In our service model, the tasks arrive from $n$ distinct flows. Each flow's tasks are buffered separately, either physically or virtually. All flows, however, share a single $c$-task-per-slot server, access to which is controlled by a scheduler.}
	\label{F:serverModel}
	\Description{The service model, composed of a buffer system, a scheduler and a constant-capacity server.}
\end{figure}

The scheduler's choices are constrained. Denoting the ensemble of flow variables, $[x^1,x^2,\ldots,x^n]$, by $x^{[\Omega]}$, and the sum, $\sum_{\omega \in \Omega} x^\omega$, by $x^{\l\Omega\r}$, it follows that the selected schedule, $d^{[\Omega]}$, must satisfy a {\it \textbf{causality constraint}},
\begin{equation}\label{E:causalityB}
	d^{[\Omega]} \leq q^{[\Omega]} = a^{[\Omega]}+b^{[\Omega]},
\end{equation}
which restates (\ref{E:causalityA}) in ensemble form, and a {\it \textbf{capacity constraint}},
\begin{equation}\label{E:capacity}
	d^{\l\Omega\r} \leq c,
\end{equation}
which requires that the number of tasks served not exceed the server's capacity. We call a $d^{[\Omega]}$ that satisfies both (\ref{E:causalityB}) and (\ref{E:capacity}) a {\it \textbf{valid schedule}}. The scheduler may only select among valid schedules.

\section{Worst-Case Services}\label{S:WCS}

In this section, we define worst-case services. We show that these services are states that can be updated as tasks arrive and are served. We also introduce the spectrum of a worst-case service as a characterization of the capacities that must be reserved to guarantee it. As a worst-case service is updatable, so is its spectrum.

\subsection{Definition}\label{SS:WCSDef}

A {\it \textbf{cumulative vector}} is a semi-infinite, non-decreasing vector that starts with $0$, with elements in $\mb{N}^+  := \mb{N} \cup \{\infty\}$, where $\mb{N}$ denotes the set of natural numbers. Let $\mb{U}$ be the set of all cumulative vectors. Then $\bs{x}=[x_j]_{j \in \mb{N}}=[x_0,x_1,x_2,\ldots] \in \mb{U}$ if $x_0=0$, and for all $j \in \mb{N}$, $x_j \in \mb{N}^+$ and $x_j \leq x_{j+1}$. Important, useful subsets of $\mb{U}$ include
\begin{equation}\label{E:UxEntries}
	\text{(a)}\blank\mb{U} \ua x := \{\bs{x}\in\mb{U} | x_1 \geq x\}, \blank\text{and}\blank
	\text{(b)}\blank\mb{U}|x := \{\bs{x}\in\mb{U} | x_1=x\}.
\end{equation}

For each flow, we use an {\it \textbf{arrival vector}}, $\bs{a}=[a_j]_{j\in\mb{N}} \in \mb{U}$, to count its cumulative task arrivals, and a {\it \textbf{departure vector}}, $\bs{d}=[d_j]_{j\in\mb{N}} \in \mb{U}$, to count its cumulative served tasks. In particular, for all $j>0$, $a_j$ and $d_j$ count the tasks that, respectively, arrive and are served during interval $[t, t+j)$, that is, from slot~$t$ to $t+j-1$. Here, as we are referencing a generic flow, we suppress its index. Notice that $\bs{a}$ and $\bs{d}$ completely characterize the arrival process of, and the service process to, the flow. To be precise, let
\begin{equation}\label{E:tau}
	\tau_h(\bs{x}):= \max \{j\in\mb{N}^+|x_j < h\}, \blank\blank h=1,2,3,\ldots.
\end{equation}
Then the $h$th task in $\bs{a}$ arrives in slot $t+\tau_h(\bs{a})$, while the $h$th task in $\bs{d}$ is served in slot $t+\tau_h(\bs{d})$. According to (\ref{E:tau}), $\tau_h(\bs{x}) = \infty$ if $h > x_\infty := \lim_{j \to \infty} x_j$. For instance, if $h > d_\infty$, $\tau_h(\bs{d}) = \infty$, implying that the $h$th task, even if it exists, is never served.

During each $[t, t+j)$, as the number of tasks served cannot exceed the number of arrivals plus the number of tasks left unserved prior to slot $t$,
\begin{subnumcases}{\label{E:arrivalCases}
	d_j \leq q_j :=}
		0       & if $j=0$,\label{E:arrivalCaseA}\\
		a_j + b & if $j>0$,\label{E:arrivalCaseB}
\end{subnumcases}
or in vector form
\begin{equation}\label{E:vecArrival}
	\bs{d} \leq \bs{q} := \bs{a}+b\bs\delta,
\end{equation}
where $\bs\delta=[\delta_j]_{j \in \mb{N}} := [0, 1, 1, \ldots] \in \mb{U}$, that is, $\delta_0 := 0$ and $\delta_j := 1$ for all $j>0$. Clearly, (\ref{E:vecArrival}) is the vector extension of (\ref{E:causalityA}). By definition, $\bs{q} \in \mb{U} \ua b$, where $\mb{U} \ua b$ is defined according to (\ref{E:UxEntries})-(a). As $b$ is entirely fixed by the flow's past, $\bs{q}$ can be viewed as a bijective function of $\bs{a}$ mapping $\mb{U}$ to $\mb{U} \ua b$. Compared to $\bs{a}$, it is as if the $b$ tasks already queued in the buffer are miscounted by $\bs{q}$ as new arrivals. So we call $\bs{q}$ the {\it \textbf{queued arrival vector}}. Since the server maps each $\bs{q}$ to some $\bs{d}$, a natural way to specify a service is in terms of a map from each $\bs{q}$ to a worst-case $\bs{d}$.

\begin{definition}\label{D:WCS}
	For a flow with $b$ tasks left unserved prior to slot~$t$, $\bs\psi: \mb{U} \ua b \rightarrow \mb{U}$ is a \textbf{worst-case service} if
	\begin{equation}\label{E:WCSDef}
		\bs\psi(\bs{q}) \leq \bs{q} \blank\blank \forall \bs{q} \in \mb{U} \ua b.
	\end{equation}
	The flow is said to be guaranteed worst-case service $\bs\psi$ if
	\begin{equation}\label{E:WCSDef+}
		\bs{d} \geq \bs\psi(\bs{q}) \blank\blank \forall \bs{q} \in \mb{U} \ua b.
	\end{equation}
\end{definition}

Since $\bs\psi$ is conditioned on $b$, whenever we refer to $\bs\psi$, we implicitly refer to the pair, $(\bs\psi, b)$. As illustrated in Fig.~\ref{F:Bounds}, to guarantee $\bs\psi$, $\bs{d}$ must lie between $\bs{q}$ and $\bs\psi(\bs{q})$, which explains why we need (\ref{E:WCSDef}) in Definition \ref{D:WCS}. It can also be seen in Fig.~\ref{F:Bounds} that flow backlogs and task delays are bounded by, respectively, the vertical and horizontal distances between $\bs{q}$ and $\bs\psi(\bs{q})$. We can, in fact, even design worst-case services to guarantee given backlog or delay bounds.\footnote{For more on these performance bounds and their use in analyzing and designing worst-case services, see Appendix \ref{A:performance}.} Notice that, by definition, to specify a worst-case service, we need only identify a $\bs\psi(\bs{q}) \leq \bs{q}$ for each $\bs{q}\in\mb{U} \ua b$. This is not practical in general because $\mb{U} \ua b$ is uncountably infinite. Nonetheless, the theoretical possibility of specifying services so broadly itself underlies our framework's generality.

\begin{figure}[t]
	\centering \scalebox{0.875}{\includegraphics{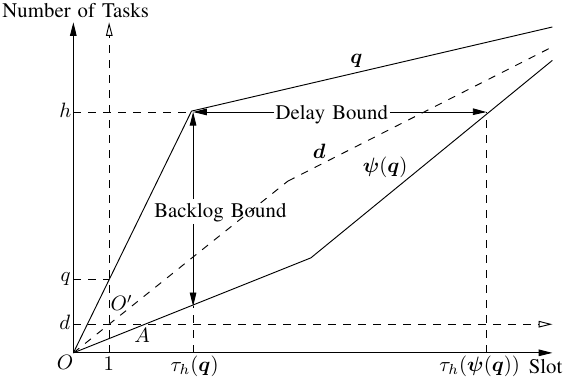}} \caption{When worst-case service $\bs\psi$ is guaranteed, $\bs{d}$ must lie between $\bs{q}$ and $\bs\psi(\bs{q})$. Flow backlogs and task delays are bounded by, respectively, the vertical and horizontal distances between $\bs{q}$ and $\bs\psi(\bs{q})$. Updating $\bs\psi$ reduces to re-expressing $\bs{\psi}(\bs{q})$ in the translated coordinate frame with origin $O'$.}
	\label{F:Bounds}
	\Description{Performance bounds and the update rule for a worst-case service.}
\end{figure}

\subsection{The Update Rule}\label{SS:WCSUpdate}

When $a$ tasks arrive in slot~$t$, the set of possible $\bs{a}$'s shrinks from $\mb{U}$ to $\mb{U}|a$, so due to (\ref{E:vecArrival}) and (\ref{E:causalityA}), that of possible $\bs{q}$'s shrinks from $\mb{U} \ua b$ to $\mb{U}|q$, where $\mb{U}|a$ and $\mb{U}|q$ are defined according to (\ref{E:UxEntries})-(b). In this paper, as in practice, we assume the scheduler to be {\it \textbf{causal}} in the sense that it cannot foresee future arrivals. So it cannot foresee which $\bs{q} \in \mb{U}|q$ will be realized, and thus must ensure that
\begin{equation}\label{E:WCSImmediate}
	d \geq p := \max_{\bs{q} \in \mb{U}|q} \psi_1(\bs{q}),
\end{equation}
to ensure that $\bs{d} \geq \bs\psi(\bs{q})$ can be guaranteed no matter which $\bs{q}\in\mb{U}|q$ is realized. According to (\ref{E:WCSDef}),
\begin{equation}\label{E:p<q}
	p \leq \max_{\bs{q} \in \mb{U}|q} q_1 = q.
\end{equation}

The immediate portion of $\bs\psi$ to be met by $d$ is denoted by $p$ in (\ref{E:WCSImmediate}). But what about the remaining portion? As we will see, it turns out to be yet another worst-case service. Intuitively, as illustrated in Fig.~\ref{F:Bounds}, after $d$ tasks have been served, $\bs\psi(\bs{q})$ can be re-expressed in a translated coordinate frame in which the origin moves from $O$ to $O'$ at $(1,d)$ in the original frame. Discounting the immediate portion met by $d$, $\bs{\psi}(\bs{q})$ is truncated in this new frame. A new worst-case departure vector can then, roughly speaking, be constructed by splicing the line segment $\overline{O'A}$ to the truncated $\bs{\psi}(\bs{q})$, that is, by replacing $\overline{OA}$ with $\overline{O'A}$.

To formalize this intuition, observe first that since the counting process for $\bs{d}$ starts from slot~$t$ while that for $\bs{\dot{d}}$ starts from $t+1$, $d_{j+1}$ and $\dot{d}_j$ count, respectively, the number of tasks that are served during intervals $[t, t+j+1)$ and $[t+1, t+j+1)$. Therefore, when $d$ tasks are served in slot~$t$, for all $j\in\mb{N}$,
$d_{j+1}=\dot{d}_j+d$. This can be rewritten in vector form as
\begin{equation}\label{E:vecdUpdate}
	\bs{d} = \mc{R}\bs{\dot{d}}+d\bs\delta,
\end{equation}
where $\mc{R}: \mb{U} \rightarrow \mb{U}|0$ is the right-shift operator defined by $[\mc{R}\bs{x}]_{j+1} := x_j$ for all $j\in\mb{N}$. Similarly, when $a$ tasks arrive in slot~$t$, $a_{j+1}=\dot{a}_j+a$, so using (\ref{E:arrivalCaseB}), (\ref{E:causalityA}), and (\ref{E:bUpdate}), we have
\begin{subnumcases}{\label{E:qUpdateCases} 
	q_{j+1} = a_{j+1}+b = \dot{a}_j+a+b = \dot{a}_j+q =}
		q           & if $j=0$,\label{E:qUpdateCaseA}\\
		\dot{q}_j+d & if $j>0$,\label{E:qUpdateCaseB}
\end{subnumcases}
where the last equality holds because $\dot{a}_0 = 0$, and if $j>0$, as $\dot{q}_j = \dot{a}_j+\dot{b}$ and $\dot{b} = q-d$, $\dot{a}_j+q = \dot{q}_j-\dot{b}+q = \dot{q}_j+d$. Since $\bs{\dot{q}} = \bs{\dot{a}}+\dot{b}\bs\delta$, (\ref{E:qUpdateCases}) can be rewritten in vector form as
\begin{equation}\label{E:vecqUpdate}
	\bs{q} = \mc{R}\bs{\dot{a}}+q\bs\delta = \mc{R}(\bs{\dot{q}}-\dot{b}\bs\delta)+q\bs\delta.
\end{equation}

Observe next that if $\bs\psi$ is guaranteed, according to Definition \ref{D:WCS}, for all $\bs{q} \in \mb{U}|q$, $\bs{d} \geq \bs\psi(\bs{q})$. So, according to (\ref{E:vecdUpdate}), $\mc{R}\bs{\dot{d}} \geq \bs\psi(\bs{q})-d\bs\delta$. Let $x^+$ denote $\max \{x, 0\}$, and $\mc{R}^{-1}: \mb{U}|0 \rightarrow \mb{U}$ denote the inverse of $\mc{R}$, that is, $[\mc{R}^{-1}\bs{x}]_j= x_{j+1}$ for all $j\in\mb{N}$. Then, as $\mc{R}\bs{\dot{d}} \in \mb{U}|0$, $\mc{R}\bs{\dot{d}} \geq \bs\psi(\bs{q})-d\bs\delta$ implies that $\bs{\dot{d}} \geq \mc{R}^{-1}(\bs\psi(\bs{q})-d\bs\delta)^+$. Notice that (\ref{E:vecqUpdate}) establishes that $\bs{q}$ is a bijective function of $\bs{\dot{q}}$, mapping $\mb{U}\ua\dot{b}$ to $\mb{U}|q$. So $\mc{R}^{-1}(\bs\psi(\bs{q})-d\bs\delta)^+$ is also a function of $\bs{\dot{q}}$. This fact leads to the following update rule.

\begin{theorem}\label{T:WCSUpdate}
	For a flow with $b$ tasks left unserved prior to slot~$t$, when $a$ tasks arrive and $d$ tasks are served in slot~$t$, if (\ref{E:causalityA}) and (\ref{E:WCSImmediate}) hold, that is, if $q \geq d \geq p$, the flow is guaranteed worst-case service $\bs\psi$ if and only if
	\begin{equation} \label{E:WCSUpdateB}
		\forall \bs{\dot{q}} \in \mb{U}\ua\dot{b},\blank\blank
		\bs{\dot{d}} \geq \bs{\dot\psi}(\bs{\dot{q}}) := \mc{R}^{-1}(\bs\psi(\bs{q})-d\bs\delta)^+ = \mc{R}^{-1}(\bs\psi(\mc{R}(\bs{\dot{q}}-\dot{b}\bs\delta)+q\bs\delta)-d\bs\delta)^+,
	\end{equation}
	where $\bs{\dot\psi}$ is a worst-case service for the flow in slot $t+1$.
\end{theorem}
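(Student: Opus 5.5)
The plan is to split the argument into two parts: first show that $\bs{\dot\psi}$ is a well-defined worst-case service in slot $t+1$, and then prove the equivalence componentwise using (\ref{E:vecdUpdate}) and (\ref{E:vecqUpdate}).

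\emph{Well-definedness.} By (\ref{E:vecqUpdate}), the map $\bs{\dot q}\mapsto\bs q=\mc{R}(\bs{\dot q}-\dot b\bs\delta)+q\bs\delta$ is a bijection from $\mb{U}\ua\dot b$ onto $\mb{U}|q$. Indeed, $\bs{\dot q}-\dot b\bs\delta\in\mb{U}$ exactly when $\dot q_1\geq\dot b$, and then the image has first entry $q$ and is non-decreasing. So $\bs\psi(\bs q)$ is defined for every $\bs{\dot q}\in\mb{U}\ua\dot b$.

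Next we check that $\bs{\dot\psi}(\bs{\dot q})\in\mb{U}$. The vector $(\bs\psi(\bs q)-d\bs\delta)^+$ has zeroth entry $0$, and taking the positive part of a non-decreasing vector keeps it non-decreasing. It is cumulative, so applying $\mc{R}^{-1}$ (dropping index $0$) yields a non-decreasing vector. Its first entry is $(\psi_1(\bs q)-d)^+$, and this equals $0$ because $\psi_1(\bs q)\leq p\leq d$ by (\ref{E:WCSImmediate}). Hence the zeroth entry of $\bs{\dot\psi}(\bs{\dot q})$ is $0$, and $\bs{\dot\psi}(\bs{\dot q})\in\mb{U}$. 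Infinite entries need only the convention $\infty-d=\infty$.

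The bound (\ref{E:WCSDef}) for $\bs{\dot\psi}$ follows from $\psi_{j+1}(\bs q)\leq q_{j+1}=\dot q_j+d$ for $j>0$, which is (\ref{E:qUpdateCaseB}). This gives $(\psi_{j+1}(\bs q)-d)^+\leq\dot q_j$, using $\dot q_j\geq0$; the case $j=0$ is trivial.

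\emph{Equivalence.} Because the scheduler is causal and $a$ is known, guaranteeing $\bs\psi$ from now on means $\bs d\geq\bs\psi(\bs q)$ for all $\bs q\in\mb{U}|q$. The remaining $\bs q\in\mb{U}\ua b\setminus\mb{U}|q$ are no longer realizable, and I will state this reading explicitly in the proof.

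Fix $\bs{\dot q}$ and its image $\bs q$. Write $\bs d=\mc{R}\bs{\dot d}+d\bs\delta$ as in (\ref{E:vecdUpdate}), and compare the two sides entry by entry.
\begin{itemize}
\item At index $0$ both sides are $0$.
\item At index $1$, we need $d\geq\psi_1(\bs q)$, which holds by $d\geq p$.
\item At index $j+1$ with $j\geq1$, the condition $d_{j+1}\geq\psi_{j+1}(\bs q)$ reads $\dot d_j+d\geq\psi_{j+1}(\bs q)$.
\end{itemize}
Since $\dot d_j\geq0$, the last condition is equivalent to $\dot d_j\geq(\psi_{j+1}(\bs q)-d)^+=\dot\psi_j(\bs{\dot q})$. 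Therefore $\bs d\geq\bs\psi(\bs q)$ if and only if $\bs{\dot d}\geq\bs{\dot\psi}(\bs{\dot q})$. Ranging over the bijection then gives the "if and only if" in (\ref{E:WCSUpdateB}).

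\emph{Main obstacle.} The subtle step is the quantifier bookkeeping in the equivalence, not any computation. The guarantee must be taken over $\mb{U}|q$ rather than $\mb{U}\ua b$. Also, $\bs{\dot d}$ depends on the realized future, so $\bs{\dot d}$ should be read as a function of $\bs{\dot q}$, equivalently of $\bs q$. With that made explicit, everything reduces to the entrywise computation above. The hypothesis $q\geq d$ is used only to guarantee $\dot b\geq0$, so that $\mb{U}\ua\dot b$ makes sense.
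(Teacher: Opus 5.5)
Your proof is correct and is essentially the paper's argument. The equivalence comes from $\bs d=\mc{R}\bs{\dot d}+d\bs\delta$ together with $d\ge p$ and $\dot d_j\ge 0$; you work entrywise where the paper works in vector form. The worst-case-service property comes from the same bound $\dot\psi_j(\bs{\dot q})=(\psi_{j+1}(\bs q)-d)^+\le(q_{j+1}-d)^+=\dot q_j$.

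One wording fix: $\bs\psi(\bs q)-d\bs\delta$ itself need not be non-decreasing, since its index-$1$ entry $\psi_1(\bs q)-d\le 0$ can be negative while its index-$0$ entry is $0$. Its positive part is non-decreasing, though, and that is all your well-definedness step needs.
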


According to this theorem, $\bs\psi$ is a state variable that can be updated to $\bs{\dot\psi}$, the remaining portion of $\bs\psi$ to be guaranteed after slot~$t$. Recall that $\bs\psi$ is conditioned on $b$, so $\bs{\dot\psi}$ is conditioned on $\dot{b}$. According to (\ref{E:bUpdate}), $b$ is also a state variable that can be updated to $\dot{b}$. Therefore, whenever we update $\bs\psi$ to $\bs{\dot\psi}$, we implicitly update $(\bs\psi, b)$ to $(\bs{\dot\psi}, \dot{b})$ through (\ref{E:WCSUpdateB}) and (\ref{E:bUpdate}).

\subsection{The Spectrum}\label{SS:signatures}

Full-blown maps between cumulative vectors are difficult to visualize. The concept of spectrum helps us distill what is essential for our purposes. The key question is: for the causal scheduler to guarantee worst-case service $\bs\psi$, what capacity must be reserved during each slot interval? To guarantee $\bs\psi$, according to (\ref{E:WCSDef+}), $d_j \geq \psi_j(\bs{q})$ tasks must be served during interval $[t,t+j)$. But according to (\ref{E:arrivalCases}), only $d_i \leq q_i$ tasks can be served during $[t, t+i)$. It follows that $(d_j-d_i)^+ \geq (\psi_j(\bs{q})-q_i)^+$, where $(d_j-d_i)^+$ counts the number of tasks that are served during $[t+i, t+j)$, and is $0$ by default if $i \geq j$. This observation, which holds for all $\bs{q} \in \mb{U} \ua b$, motivates the following definition.

\begin{definition}\label{D:spectrum}
	Given worst-case service $\bs\psi$, for all $i, j \in \mb{N}$, the \textbf{spectral value} of $\bs\psi$ over interval $[t+i, t+j)$ is
	\begin{equation}\label{E:signature}
		\lambda_{ij}(\bs\psi):= \max_{\bs{q} \in \mb{U} \ua b} (\psi_j(\bs{q})-q_i)^+.
	\end{equation}
	We call the collection of all such values the \textbf{spectrum} of $\bs\psi$.
\end{definition}

By construction, $\lambda_{ij}(\bs\psi)$ is the least capacity that must be reserved during interval $[t+i, t+j)$ to ensure that $\bs{d} \geq \bs\psi(\bs{q})$ can be guaranteed no matter which $\bs{q}\in\mb{U}\ua b$ is realized. It is immediate from (\ref{E:signature}) that $\lambda_{ij}(\bs\psi) \geq \lambda_{ij}(\bs{\psi'})$ if $\bs\psi \geq \bs{\psi'}$, that is, if $\bs\psi(\bs{q}) \geq \bs{\psi'}(\bs{q})$ for all $\bs{q}\in\mb{U}\ua b$. When no confusion can be introduced, we will denote $\lambda_{ij}(\bs\psi)$ by $\lambda_{ij}$. The next theorem lists some basic properties of $\lambda_{ij}$.

\begin{theorem}\label{T:SigProp}
	For all $i, j \in \mb{N}$,
	\begin{equation}\label{E:SigPropEntries}
		\textup{(a)}\blank \lambda_{ij} = 0 \textup{ if } i \geq j,
		\blank
		\textup{(b)}\blank \lambda_{ij} \leq \lambda_{i, j+1},
		\blank
		\textup{(c)}\blank \lambda_{ij} \geq \lambda_{i+1, j},
		\blank\textup{and}\blank
		\textup{(d)}\blank \lambda_{ij} \leq (\lambda_{0j}-b)^+ \textup{ if } i>0.
	\end{equation}
\end{theorem}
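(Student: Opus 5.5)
Each of the four properties follows directly from Definition~\ref{D:spectrum} together with two facts: $\bs\psi(\bs{q})$ and $\bs{q}$ are cumulative vectors (non-decreasing, starting at $0$), and $\bs\psi(\bs{q}) \leq \bs{q}$ by (\ref{E:WCSDef}). For (a) and (b), I would bound the maximand pointwise in $\bs{q}$. For (c), I would also use pointwise monotonicity. For (d), I would argue by cases on each $\bs{q}$.

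For (a), take $i \geq j$. Then for every $\bs{q} \in \mb{U}\ua b$ we have $\psi_j(\bs{q}) \leq q_j \leq q_i$, by (\ref{E:WCSDef}) and monotonicity of $\bs{q}$. So every term $(\psi_j(\bs{q})-q_i)^+$ vanishes, and $\lambda_{ij}=0$.

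For (b), since $\bs\psi(\bs{q})$ is non-decreasing, $\psi_j(\bs{q}) \leq \psi_{j+1}(\bs{q})$. Hence $(\psi_j(\bs{q})-q_i)^+ \leq (\psi_{j+1}(\bs{q})-q_i)^+$ for each $\bs{q}$. Taking the maximum (supremum) over $\bs{q}\in\mb{U}\ua b$ preserves this inequality.

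For (c), the argument is the same but uses monotonicity of $\bs{q}$: $q_i \leq q_{i+1}$ gives $(\psi_j(\bs{q})-q_{i+1})^+ \leq (\psi_j(\bs{q})-q_i)^+$.

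For (d), fix $i>0$ and any $\bs{q} \in \mb{U}\ua b$. By (\ref{E:vecArrival}) and the definition of $\mb{U}\ua b$, $q_i = a_i + b \geq b$ for $i>0$. Hence $\psi_j(\bs{q}) - q_i \leq \psi_j(\bs{q}) - b$. Since $q_0=0$, we have $\lambda_{0j} = \max_{\bs{q}}\psi_j(\bs{q})$, so $\psi_j(\bs{q}) \leq \lambda_{0j}$. Combining these gives $(\psi_j(\bs{q})-q_i)^+ \leq (\lambda_{0j}-b)^+$, and taking the max over $\bs{q}$ yields (d).

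The only point needing care, and so the main obstacle, is the meaning of ``max'' over the uncountable set $\mb{U}\ua b$, with values possibly $\infty$ in $\mb{N}^+$. I would interpret it as a supremum in $\mb{N}^+$, adopting the conventions $\infty - x = \infty$ for finite $x$, and take $(\infty-\infty)^+$ as $0$ in the case $q_i = \infty$. The last case is in fact harmless: $q_i=\infty$ forces every later $q_j$ with $j\geq i$ to be $\infty$, but when $j>i$ we still have $\psi_j \leq q_j$, so the pointwise inequalities above remain valid under these conventions. All four arguments are pointwise bounds followed by a supremum, so they go through verbatim.
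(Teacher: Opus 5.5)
Your proof is correct and follows the paper's argument essentially step for step. Both use pointwise bounds on $(\psi_j(\bs{q})-q_i)^+$ coming from the monotonicity of $\bs{q}$ and $\bs\psi(\bs{q})$, from $\bs\psi(\bs{q})\leq\bs{q}$, and from $q_i\geq b$ for $i>0$, and then take the maximum; the only difference is that in (a) the paper uses $\psi_j(\bs{q})\leq\psi_i(\bs{q})\leq q_i$ where you use $\psi_j(\bs{q})\leq q_j\leq q_i$, which is immaterial. Your convention $(\infty-\infty)^+=0$ is extra rigor the paper leaves implicit, and some such convention is in fact needed for (a) to hold when $q_i=\infty$.
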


\subsubsection*{\textup{[}Updating the Spectrum\textup{]}} According to Theorem~\ref{T:WCSUpdate}, given $q \geq d \geq p$, we can update $\bs\psi$ to $\bs{\dot\psi}$. Denote $\lambda_{ij}(\bs{\dot\psi})$ by $\dot\lambda_{ij}$. Then, for all $i,j\in\mb{N}$, using (\ref{E:signature}) and (\ref{E:WCSUpdateB}), we have
\[
	\dot\lambda_{ij} = \lambda_{ij}(\bs{\dot\psi}) = \max_{\bs{\dot{q}} \in \mb{U}\ua \dot{b}} (\dot\psi_j(\bs{\dot{q}})-\dot{q}_i)^+ = \max_{\bs{q} \in \mb{U}|q} ([\mc{R}^{-1}(\bs\psi(\bs{q})-d\bs\delta)^+]_j-\dot{q}_i)^+ = \max_{\bs{q} \in \mb{U}|q} (\psi_{j+1}(\bs{q})-d-\dot{q}_i)^+,
\]
where the second equality holds because, due to (\ref{E:vecqUpdate}), $\bs{\dot{q}} \in \mb{U}\ua \dot{b}$ is equivalent to $\bs{q} \in \mb{U}|q$. This implies that if $i=0$, as $\dot{q}_0 = 0$,
\begin{equation}\label{E:SigUpdProof+}
	\dot\lambda_{0j} = \max_{\bs{q} \in \mb{U}|q} (\psi_{j+1}(\bs{q})-d)^+ = \left(\max_{\bs{q} \in \mb{U}|q} \psi_{j+1}(\bs{q})-d\right)^+,
\end{equation}
and if $i>0$, due to (\ref{E:qUpdateCaseB}),
\begin{equation}\label{E:SigUpdProof++}
	\dot\lambda_{ij} = \max_{\bs{q} \in \mb{U}|q} (\psi_{j+1}(\bs{q})-q_{i+1})^+.
\end{equation}
Let
\begin{equation}\label{E:CondSig}
	\lambda_{ij}(\bs\psi|q) := \max_{\bs{q} \in \mb{U}|q} (\psi_j(\bs{q})-q_i)^+,
\end{equation}
and denote it by $\hat\lambda_{ij}$. Rewriting (\ref{E:SigUpdProof+}) and (\ref{E:SigUpdProof++}) in terms of $\hat\lambda_{ij}$, we obtain the next theorem.\footnote{For an intuitive interpretation of Theorem \ref{T:SigUpdate}, see Appendix \ref{A:SigUpdateInterpret}.}

\begin{theorem}\label{T:SigUpdate}
	Given $\bs{\dot\psi}$ in Theorem~\ref{T:WCSUpdate}, for all $i, j \in \mb{N}$,	\begin{subnumcases}{\label{E:SigUpdateCases}
		\dot\lambda_{ij}=}
			(\hat\lambda_{0, j+1}-d)^+ & \textup{if $i=0$,}\label{E:SigUpdateCaseA}\\
			\hat\lambda_{i+1, j+1}     & \textup{if $i>0$.}\label{E:SigUpdateCaseB}
	\end{subnumcases}
\end{theorem}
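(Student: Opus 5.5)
The plan is to start from the two displayed identities (\ref{E:SigUpdProof+}) and (\ref{E:SigUpdProof++}) derived just before the statement. The only remaining work is to rewrite their right-hand sides in terms of the conditional spectral values $\hat\lambda_{ij}=\lambda_{ij}(\bs\psi|q)$ defined in (\ref{E:CondSig}). Those identities rest on two ingredients. The first is the definition of $\bs{\dot\psi}$ in (\ref{E:WCSUpdateB}). The second is the bijection $\bs{\dot{q}}\leftrightarrow\bs{q}$ between $\mb{U}\ua\dot{b}$ and $\mb{U}|q$ given by (\ref{E:vecqUpdate}). Before using them, I would briefly verify that they are legitimate.

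The first check is the identity $[\mc{R}^{-1}(\bs\psi(\bs{q})-d\bs\delta)^+]_j=(\psi_{j+1}(\bs{q})-d)^+$. This holds because $\delta_{j+1}=1$ for every $j\in\mb{N}$.

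The second check is that nesting the positive parts collapses, i.e.
\[
((x-d)^+-y)^+=(x-d-y)^+ \quad\text{for } y\ge 0 .
\]
This is needed since $\dot{q}_i\ge 0$. If $x-d\ge 0$ the two sides agree trivially. Otherwise both sides vanish.

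The third check is the change of the maximization domain from $\bs{\dot q}\in\mb{U}\ua\dot b$ to $\bs q\in\mb{U}|q$. This is valid because (\ref{E:vecqUpdate}) is a bijection; I take that fact as established in the text preceding Theorem~\ref{T:WCSUpdate}.

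For the case $i=0$, we have $\hat\lambda_{0,j+1}=\max_{\bs q\in\mb U|q}(\psi_{j+1}(\bs q)-q_0)^+=\max_{\bs q\in\mb U|q}(\psi_{j+1}(\bs q))^+$, using $q_0=0$. Since $\psi_{j+1}(\bs q)\ge 0$, the positive part can be dropped, so $\hat\lambda_{0,j+1}=\max_{\bs q\in\mb U|q}\psi_{j+1}(\bs q)$. Substituting this into (\ref{E:SigUpdProof+}) gives $\dot\lambda_{0j}=(\hat\lambda_{0,j+1}-d)^+$. One small point must be recorded here: the map $x\mapsto(x-d)^+$ is nondecreasing, so the maximum may be taken inside or outside the positive part. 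Maxima over possibly infinite values can be read as suprema throughout, and the same monotonicity argument still applies.

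For the case $i>0$, I would use (\ref{E:qUpdateCaseB}), namely $\dot q_i+d=q_{i+1}$. This turns $(\psi_{j+1}(\bs q)-d-\dot q_i)^+$ into $(\psi_{j+1}(\bs q)-q_{i+1})^+$. Maximizing over $\mb U|q$ then gives exactly $\hat\lambda_{i+1,j+1}$ by (\ref{E:CondSig}).

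None of these steps is a real obstacle. The only delicate point is the first line of the preceding derivation: justifying that the maximum over $\bs{\dot q}$ equals the maximum over $\bs q\in\mb U|q$, and that the nested positive parts collapse. I would state both explicitly rather than leave them implicit, and the theorem then follows.
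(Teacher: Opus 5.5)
Your proposal is correct and follows the paper's own argument. The paper's proof simply points back to the derivation preceding the statement, which changes the maximization domain via the bijection (\ref{E:vecqUpdate}) and reduces the expression to $\max_{\bs q}(\psi_{j+1}(\bs q)-d-\dot q_i)^+$. It then splits into the case $i=0$ (using $\dot q_0=0$) and the case $i>0$ (using $\dot q_i+d=q_{i+1}$), and rewrites both in terms of $\hat\lambda_{ij}$. Your explicit checks make precise two steps the paper leaves implicit: the collapse of nested positive parts (valid because $\dot q_i\ge 0$), and moving the maximum through the nondecreasing map $x\mapsto(x-d)^+$.
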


We call $\hat\lambda_{ij}$ the {\it \textbf{conditional spectral value}} of $\bs\psi$ over interval $[t+i, t+j)$, and the collection of all such values, the {\it \textbf{conditional spectrum}} of $\bs\psi$. Comparing (\ref{E:CondSig}) to (\ref{E:signature}), it is seen that the only difference is that $\bs{q}$'s range shrinks from $\mb{U} \ua b$ to $\mb{U} | q$, so for all $i, j \in \mb{N}$,
\begin{equation}\label{E:CondSigPropF}
	\hat\lambda_{ij} \leq \lambda_{ij}.
\end{equation}
According to (\ref{E:CondSig}) and (\ref{E:WCSImmediate}),
\begin{equation}\label{E:p-lambda}
	\hat\lambda_{01} = \max_{\bs{q} \in \mb{U}|q} \psi_1(\bs{q}) = p.
\end{equation}
Also, according to (\ref{E:CondSig}),
\begin{equation}\label{E:CondSigPropE}
	\hat\lambda_{1j} = \max_{\bs{q} \in \mb{U}|q} (\psi_j(\bs{q})-q_1)^+ = \left(\max_{\bs{q} \in \mb{U}|q} \psi_j(\bs{q})-q\right)^+ = (\hat\lambda_{0j}-q)^+,
\end{equation}
Comparing this to (\ref{E:SigUpdateCaseA}), it is immediate that, given any $d \leq q$,
\begin{equation}\label{E:lambdaBound}
	\dot\lambda_{0j} \geq \hat\lambda_{1, j+1}.
\end{equation}
the lower bound of which is achieved when $d=q$. In addition, paralleling Theorem~\ref{T:SigProp}, the next theorem lists more basic properties of $\hat\lambda_{ij}$.

\begin{theorem}\label{T:CondSigProp}
	For all $i, j \in \mb{N}$,
	\begin{equation}\label{E:CondSigPropEntries}
		\textup{(a)}\blank \hat\lambda_{ij} = 0 \textup{ if } i \geq j,
		\blank
		\textup{(b)}\blank \hat\lambda_{ij} \leq \hat\lambda_{i, j+1},
		\blank
		\textup{(c)}\blank \hat\lambda_{ij} \geq \hat\lambda_{i+1, j},
		\blank\textup{and}\blank
		\textup{(d)}\blank \hat\lambda_{ij} \leq (\lambda_{0j}-q)^+ \textup{ if } i>0.
	\end{equation}
\end{theorem}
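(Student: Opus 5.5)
We will argue each part directly from definition (\ref{E:CondSig}), using only that every $\bs{q}\in\mb{U}|q$ is a cumulative vector with $q_0=0$ and $q_1=q$, and that $\bs\psi(\bs{q})\in\mb{U}$ with $\bs\psi(\bs{q})\le\bs{q}$ by (\ref{E:WCSDef}). The strategy is to bound each term inside the maximum pointwise, for each fixed $\bs{q}\in\mb{U}|q$, and then take the maximum over $\mb{U}|q$. This mirrors what is presumably done for Theorem~\ref{T:SigProp}, with the range $\mb{U}\ua b$ replaced by $\mb{U}|q$.

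For (a), fix $i\ge j$ and $\bs{q}\in\mb{U}|q$. Since $\bs\psi(\bs{q})\le\bs{q}$ and $\bs{q}$ is non-decreasing, we have $\psi_j(\bs{q})\le q_j\le q_i$. Hence $(\psi_j(\bs{q})-q_i)^+=0$, and the maximum is $0$.

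For (b), use that $\bs\psi(\bs{q})$ is non-decreasing, so $\psi_j(\bs{q})\le\psi_{j+1}(\bs{q})$. The term for $(i,j)$ is therefore dominated by the term for $(i,j+1)$ for each $\bs{q}$, and the inequality survives the maximum.

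For (c), use $q_i\le q_{i+1}$. This gives $(\psi_j(\bs{q})-q_{i+1})^+\le(\psi_j(\bs{q})-q_i)^+$, and taking maxima yields the claim.

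For (d), take $i>0$. Then $q_i\ge q_1=q$ for every $\bs{q}\in\mb{U}|q$, so
\[
(\psi_j(\bs{q})-q_i)^+\le(\psi_j(\bs{q})-q)^+ .
\]
Maximizing over $\mb{U}|q$ gives $\hat\lambda_{ij}\le\hat\lambda_{1j}=(\hat\lambda_{0j}-q)^+$, which is (\ref{E:CondSigPropE}). Since $x\mapsto(x-q)^+$ is monotone, (\ref{E:CondSigPropF}) gives $\hat\lambda_{0j}\le\lambda_{0j}$, and hence $(\hat\lambda_{0j}-q)^+\le(\lambda_{0j}-q)^+$.

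Alternatively, one could bypass (\ref{E:CondSigPropE}). We would bound $\psi_j(\bs{q})\le\max_{\bs{q}'\in\mb{U}\ua b}(\psi_j(\bs{q}')-q'_0)^+=\lambda_{0j}$, which is valid because $\mb{U}|q\subseteq\mb{U}\ua b$ (as $q\ge b$) and $q'_0=0$. The same bound then follows directly.

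All steps are routine, and no step is a real obstacle. The only points needing care are in (d): invoking monotonicity of $(\cdot-q)^+$, and checking the inclusion $\mb{U}|q\subseteq\mb{U}\ua b$. That inclusion is what lets the conditional spectral value be bounded by the unconditional one $\lambda_{0j}$ rather than only by $\hat\lambda_{0j}$.
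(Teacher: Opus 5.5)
Your proof is correct and follows the paper's argument essentially step for step: parts (a)--(c) rest on the same pointwise comparisons inside the maximum in (\ref{E:CondSig}), and (d) on $q_i \geq q_1 = q$ for $i>0$. The only difference is that the paper's proof stops at the sharper bound $(\hat\lambda_{0j}-q)^+$, whereas you add the step $\hat\lambda_{0j} \leq \lambda_{0j}$ from (\ref{E:CondSigPropF}) to reach the bound $(\lambda_{0j}-q)^+$ exactly as stated.
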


\section{State-Based Scheduling}\label{S:schedulability}

When each flow in our service model is guaranteed a worst-case service, we call the model a {\it \textbf{worst-case system}}. For all $\omega\in\Omega$, let $\bs\psi^\omega$ denote the worst-case service guaranteed to flow~$\omega$, and let $\bs\psi^{[\Omega]}$ denote the worst-case system. Since $\bs\psi^\omega$ is a state of flow~$\omega$, $\bs\psi^{[\Omega]}$ can be taken as the collective state of all flows in the system. Recall that $\bs\psi^\omega$ is conditioned on $b^\omega$, so $\bs\psi^{[\Omega]}$ is conditioned on $b^{[\Omega]}$. Accordingly, whenever we refer to $\bs\psi^{[\Omega]}$, we implicitly refer to $(\bs\psi^{[\Omega]}, b^{[\Omega]})$. Given $\bs\psi^{[\Omega]}$, can the server guarantee all flows their respective worst-case services simultaneously? If yes, how? To guarantee $\bs\psi^{[\Omega]}$, given any $a^{[\Omega]}$, a valid $d^{[\Omega]} \geq p^{[\Omega]}$ must be selected, where $d^{[\Omega]} \geq p^{[\Omega]}$ is the ensemble version of (\ref{E:WCSImmediate}). But this is not enough because $d^{[\Omega]}$ must also, via (\ref{E:WCSUpdateB}), induce a $\bs{\dot\psi}^{\raisebox{-0.9mm}{\scriptsize$[\Omega]$}}$ that can be guaranteed. This motivates the following definition.

\begin{definition}\label{D:schedulability}
	The \textbf{schedulability condition} is a condition on $\bs\psi^{[\Omega]}$ such that: (1) if it is not satisfied,  $\bs\psi^{[\Omega]}$ cannot be guaranteed; and (2) if it is satisfied, given any $a^{[\Omega]}$, there exists at least one valid $d^{[\Omega]} \geq p^{[\Omega]}$ that induces a  $\bs{\dot\psi}^{\raisebox{-0.9mm}{\scriptsize$[\Omega]$}}$ that satisfies the condition in the next slot.\footnote{Here (1) is necessary for the schedulability condition to be unique because otherwise, a schedulability condition for a server with capacity $c' < c$ would also be a schedulability condition for a server with capacity $c$.} We call such a $d^{[\Omega]}$, a \textbf{feasible schedule}, and a $\bs\psi^{[\Omega]}$ satisfying the schedulability condition, \textbf{schedulable}.
\end{definition}

Schedulability is the key to state-based scheduling because, once schedulable, a worst-case system can remain schedulable. Starting with a schedulable $\bs\psi^{[\Omega]}$, a state-based scheduler works iteratively. In each slot, it may first admit or deny new service requests provided that the reconfigured $\bs\psi^{[\Omega]}$ remains schedulable. Next, given any $a^{[\Omega]}$, it must select a feasible schedule, $d^{[\Omega]}$, to ensure that $\bs{\dot\psi}^{\raisebox{-0.9mm}{\scriptsize$[\Omega]$}}$ is schedulable in the next slot. It is illuminating to trace the scheduler's state evolution in its state space. In Fig.~\ref{F:StatePath}, a path from state~$A$ to $D$ is illustrated for a two-flow system. Notice that the states visited never leave the schedulable region bounded by the {\it Pareto} frontier, on which the service guaranteed to one flow must be reduced to improve that guaranteed to the other.

\begin{figure}[t]
	\centering \scalebox{0.875}{\includegraphics{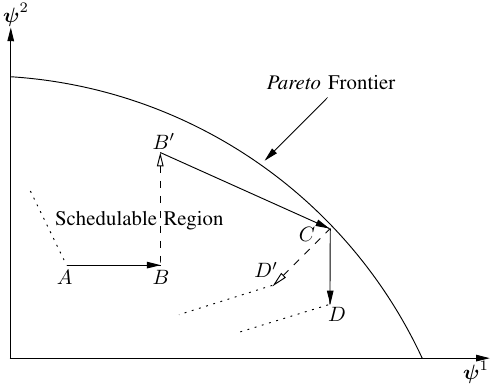}} \caption{In this two-flow case, the state-space path from state~$A$ to $D$ never leaves the schedulable region bounded by the {\it Pareto} frontier. The jump from state~$B$ to $B'$ reflects the admission of a new service request from flow~$2$. Had a different feasible schedule been selected in state~$C$, the path would have diverged to $D'$.}
	\label{F:StatePath}
	\Description{The schedulable region, the {\it Pareto} frontier, and a state path.}
\end{figure}

One clear advantage of state-based scheduling is that it is fully dynamic.\footnote{Although it is not the focus of the paper, this advantage is very attractive when worst-case services need to be negotiated on the fly. For instance, it allows on-the-fly adjustment of periodic or sporadic tasks' service parameters in real-time systems.} In each slot, $\bs\psi^{[\Omega]}$ can be reconfigured as long as it remains schedulable. For instance, in Fig.~\ref{F:StatePath}, the jump from state~$B$ to $B'$ reflects the admission of a new service request from flow~$2$. Another advantage is that, in each slot, a state-based scheduler has the flexibility to select any feasible schedule to serve the flows. Notice that different selections will induce different future states. For instance, in Fig.~\ref{F:StatePath}, had a different feasible schedule been selected in state~$C$, the path would have diverged to $D'$.

But what is the schedulability condition? Recall, from Section~\ref{SS:signatures}, that the least capacity that must be reserved during interval $[t+i, t+j)$ to guarantee worst-case service $\bs\psi^\omega$ to flow~$\omega$ is given by spectral value $\lambda_{ij}(\bs\psi^\omega)$. Intuitively, if $\bs\psi^{[\Omega]}$ can be guaranteed, then, during each $[t+i, t+j)$, the total capacity that must be reserved, $\sum_{\omega\in\Omega} \lambda_{ij}(\bs\psi^\omega)$, cannot exceed the server's available capacity, $(j-i)^+c$. If this reserve capacity is not available, $\bs\psi^{[\Omega]}$ cannot be guaranteed. This reserve capacity constraint turns out to be the schedulability condition that we seek.

\begin{theorem}\label{T:schedulability}
	A worst-case system, $\bs\psi^{[\Omega]}$, is schedulable if and only if its spectrum system satisfies the following condition:
	\begin{equation}\label{E:schedulability}
		\forall i,j\in\mb{N}, \blank\blank \lambda_{ij}^{\l\Omega\r} = \sum_{\omega\in\Omega} \lambda_{ij}(\bs\psi^\omega) \leq c_{ij} := (j-i)^+c.\footnote{This schedulability condition can be used as a normative formula to guide the design of multiplexing systems by measuring and comparing the capacity utilization of different multiplexing schemes. For details, see Appendix \ref{A:gains}.}
	\end{equation}
	A valid $d^{[\Omega]}$ is a feasible schedule if and only if the spectrum system that it induces in the next slot satisfies (\ref{E:schedulability}), that is,
	\begin{equation}\label{E:schedulability+}
		\forall i,j\in\mb{N}, \blank\blank \dot\lambda_{ij}^{\l\Omega\r} = \sum_{\omega\in\Omega} \lambda_{ij}(\bs{\dot\psi}^\omega) \leq \dot{c}_{ij} = c_{i+1, j+1}.
	\end{equation}
\end{theorem}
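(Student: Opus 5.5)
The plan has two halves: necessity of (\ref{E:schedulability}), and sufficiency together with the characterization (\ref{E:schedulability+}). Theorem~\ref{T:SigUpdate} does most of the bookkeeping in the second half.

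\emph{Necessity.} Suppose $\lambda_{ij}^{\l\Omega\r} > c_{ij}$ for some $i<j$; if $i \ge j$, Theorem~\ref{T:SigProp}(a) makes the condition trivial. For each $\omega$, pick a maximizer $\bs{q}^\omega \in \mb{U}\ua b^\omega$ of (\ref{E:signature}), or a near-maximizer, since each spectral value is an integer. The flows' arrival processes are independent, so the joint realization $\bs{q}^{[\Omega]}$ is admissible. If the system were guaranteed, then (\ref{E:WCSDef+}) and (\ref{E:vecArrival}) would give $d^\omega_j - d^\omega_i \ge (\psi^\omega_j(\bs{q}^\omega) - q^\omega_i)^+ = \lambda_{ij}(\bs\psi^\omega)$ for every $\omega$. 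Summing over $\omega$, the server would have to serve more than $(j-i)c$ tasks during $[t+i,t+j)$, which contradicts (\ref{E:capacity}) applied slot by slot. So condition (\ref{E:schedulability}) satisfies part (1) of Definition~\ref{D:schedulability}.

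\emph{Sufficiency and the feasibility characterization.} Fix $a^{[\Omega]}$. The second statement of the theorem is just part (2) of Definition~\ref{D:schedulability} once (\ref{E:schedulability}) is known to be the condition. It therefore remains to show that some valid $d^{[\Omega]} \ge p^{[\Omega]}$ satisfies (\ref{E:schedulability+}). By Theorem~\ref{T:SigUpdate}, (\ref{E:SigUpdateCaseB}) and (\ref{E:CondSigPropF}), for $i>0$ we get $\dot\lambda^{\l\Omega\r}_{ij} = \hat\lambda^{\l\Omega\r}_{i+1,j+1} \le \lambda^{\l\Omega\r}_{i+1,j+1} \le c_{i+1,j+1}$, and this holds for every choice of $d$. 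So only the row $i=0$ constrains $d$. We need $p^\omega \le d^\omega \le q^\omega$ and $d^{\l\Omega\r} \le c$, with
\[
\textstyle\sum_{\omega}(\hat\lambda^\omega_{0,j+1}-d^\omega)^+ \le jc \quad \text{for all } j\ge 1,
\]
where $j=0$ reduces to $d^{\l\Omega\r}\le c$ because $\hat\lambda_{01}=p$ by (\ref{E:p-lambda}). The available bounds are:
\begin{itemize}
\item $\hat\lambda^{\l\Omega\r}_{0,j+1} \le (j+1)c$;
\item $\hat\lambda^{\l\Omega\r}_{1,j+1} = \sum_\omega(\hat\lambda^\omega_{0,j+1}-q^\omega)^+ \le jc$, using (\ref{E:CondSigPropE});
\item the same two bounds hold for every subset of flows, since all terms are nonnegative.
\end{itemize}

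\emph{Constructing $d$.} I would build $d^{[\Omega]}$ by an EDF-style greedy rule. View flow $\omega$'s truncated demand $\min(\hat\lambda^\omega_{0,k},q^\omega)$, $k=1,2,\ldots$, as units carrying deadlines. These are nondecreasing in $k$ by Theorem~\ref{T:CondSigProp}(b), and the first $p^\omega$ units have deadline $1$. In slot $t$, serve up to $c$ units in earliest-deadline order. This gives $d\le q$ and $d^{\l\Omega\r}\le c$, and $d \ge p$ because $p^{\l\Omega\r}=\hat\lambda^{\l\Omega\r}_{01}\le c$.

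\emph{Main obstacle: the row-$0$ inequality for each $j$.} If the greedy rule never hit capacity, then $d^\omega$ covers every unit with deadline at most $j+1$ up to $q^\omega$. The leftover is then exactly $\hat\lambda_{1,j+1}^{\l\Omega\r}\le jc$. If capacity was hit, let $k$ be the latest deadline still being served. Flows then split into a set $S$ whose deadline-$\le k$ units are all served and its complement, and the total served is $c$. Bounding the leftover by $\hat\lambda^{\l\Omega\r}_{0,j+1}-c\le jc$ handles $j+1\le k$ directly. For larger $j$, I would combine the subset form of both bounds above; this is the standard EDF supply/demand exchange argument. An alternative to keep in reserve is to derive the required lower bounds on $d^{\l S\r}$ for every subset $S$ and verify them with a Hall/polymatroid argument.
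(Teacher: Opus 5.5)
Your necessity half is correct and is the paper's own reserve-capacity argument. Your sufficiency route, a direct EDF construction, differs from the paper's and is viable. However, the step you flag as the main obstacle is not established: your case split is inverted. One claimed bound is false, and the case you defer to an unspecified exchange or Hall argument is actually the easy one.

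\textbf{The inverted case split.} Let $k$ be the latest deadline served. When capacity binds, EDF gives $p^\omega_{k-1}\le d^\omega\le p^\omega_k$ for every $\omega$, together with $d^{\l\Omega\r}=c$. For $j+1<k$ you have $d^\omega\ge p^\omega_{j+1}$. Any flow with $d^\omega>\hat\lambda^\omega_{0,j+1}$ then contributes $0$ rather than a negative amount, so $\sum_\omega(\hat\lambda^\omega_{0,j+1}-d^\omega)^+$ can strictly exceed $\hat\lambda^{\l\Omega\r}_{0,j+1}-c$. For example, take dual-curve flows with $\bs{v}=\bs{0}$, $\bs{u}^1=10\mc{R}^2\bs\delta$, $\bs{u}^2=3\mc{R}\bs\delta$, $q^{[\Omega]}=[10,3]$, $c=10$, and $j=1$. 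EDF serves $d^2=3$ and $d^1=7$, the sum is $0$, but $\hat\lambda^{\l\Omega\r}_{02}-c=-7$.

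\textbf{The repair.} It uses (\ref{E:p}). For $d^\omega\le q^\omega$,
\[
(\hat\lambda^\omega_{0,j+1}-d^\omega)^+=\hat\lambda^\omega_{1,j+1}+(p^\omega_{j+1}-d^\omega)^+ .
\]
If $j+1<k$, or capacity never binds, the last term vanishes, and the row-$0$ sum is $\hat\lambda^{\l\Omega\r}_{1,j+1}\le\lambda^{\l\Omega\r}_{1,j+1}\le jc$. If $j+1\ge k$, then $d^\omega\le p^\omega_{j+1}$ for all $\omega$. The sum is then exactly $\hat\lambda^{\l\Omega\r}_{1,j+1}+p^{\l\Omega\r}_{j+1}-c=\hat\lambda^{\l\Omega\r}_{0,j+1}-c\le jc$. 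So the direct bound belongs to large $j$, the ``never hit capacity'' bound belongs to small $j$, and no subset bounds are needed. This is precisely the computation in the paper's proof of Theorem~\ref{L:MSS}. Once corrected, your route proves existence of a feasible schedule directly. That is more elementary than the paper's sufficiency argument, which obtains existence from Theorem~\ref{T:FP}: supermodularity of $\beta$ and $\beta_\mu$, existence of a feasible $\mu$, and non-emptiness of the permutohedron $\mb{P}(\beta_\mu)$.

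\textbf{The $j=0$ row.} A smaller slip affects the second statement of the theorem. The $j=0$ row of (\ref{E:schedulability+}) reads
\[
\sum_\omega(\hat\lambda^\omega_{01}-d^\omega)^+=\sum_\omega(p^\omega-d^\omega)^+\le\dot c_{00}=c_{11}=0 .
\]
So it \emph{is} the constraint $d^{[\Omega]}\ge p^{[\Omega]}$, not $d^{\l\Omega\r}\le c$. You need exactly this for the ``if'' direction of the feasibility characterization. The theorem claims that a merely valid $d^{[\Omega]}$ satisfying (\ref{E:schedulability+}) is feasible, whereas Definition~\ref{D:schedulability} also demands $d^{[\Omega]}\ge p^{[\Omega]}$. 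So the second statement is not literally part~(2) of the definition until you observe that (\ref{E:schedulability+}) forces $d^{[\Omega]}\ge p^{[\Omega]}$, which is how the paper's proof concludes.
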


\section{Feasible Schedules}\label{S:feasible}

In this section, we identify all feasible schedules. It is in this regard that polymatroid theory comes to our attention. So we first give a primer on supermodular functions and permutohedra, both from polymatroid theory. We then introduce the baseline function and use it to show that the set of feasible schedules is a polytope that can be assembled from a sequence of permutohedral slices.

\subsection{Supermodular Functions and Permutohedra}\label{SS:polymatroid}

Polymatroid theory was first developed in \cite{Edmonds:1970}. An extensive survey can be found in \cite{Schrijver:2003} (ch. 44-49). Of particular interest are supermodular functions and permutohedra. Henceforth, for all $\Gamma,\Gamma'\subseteq\Omega$, we use $\Gamma+\Gamma'$ and $\Gamma\Gamma'$ to denote $\Gamma\cup\Gamma'$ and $\Gamma\cap\Gamma'$ respectively, and extending the notation of $x^{\l\Omega\r}$,  for all $\Gamma\subseteq\Omega$, we use $x^{\l\Gamma\r}$ to denote $\sum_{\omega\in\Gamma} x^\omega$, with $x^{\l\phi\r}:=0$.

\begin{definition}\label{D:supermodular}
	$\chi:2^\Omega \rightarrow \mb{N}$ is a \textbf{supermodular function} over $\Omega$ if
	\begin{equation}\label{E:supermodularEntries}
		\textup{(a)}\blank \chi(\phi) = 0,
		\blank\textup{and}\blank
		\textup{(b)}\blank  \forall \Gamma,\Gamma'\subseteq\Omega, ~\chi(\Gamma)+\chi(\Gamma') \leq \chi(\Gamma+\Gamma')+\chi(\Gamma\Gamma').\footnote{Our definition of supermodularity is more restrictive than is standard. But no generality is lost. We restrict the range of $\chi$ to $\mb{N}$, instead of $\mb{R}$, because our service model is discrete and non-negative. We require that $\chi(\phi) = 0$ because $\mb{P}(\chi)$ is empty when $\chi(\phi) > 0$.}
	\end{equation}
	If $\chi$ is supermodular, $\mb{P}(\chi)$ is the \textbf{permutohedron} generated by $\chi$ such that $d^{[\Omega]} \in \mb{P}(\chi)$ if and only if
	\begin{equation}\label{E:permutohedronEntries}
		\textup{(a)}\blank \forall \Gamma\subseteq\Omega, ~d^{\l\Gamma\r} \geq \chi(\Gamma),
		\blank\textup{and}\blank
		\textup{(b)}\blank d^{\l\Omega\r} = \chi(\Omega).
	\end{equation}
\end{definition}

By definition, $\mb{P}(\chi)$ is an $(n-1)$-polytope in general. Counted with multiplicity, it has $n!$ vertices that can be indexed by the $n!$ permutations over $\Omega$, which explains why such polytopes are called permutohedra. Permutohedra of dimensions $n=3$ and $n=4$ are illustrated in Fig.~\ref{F:Polymatroids}. A {\it \textbf{permutation}} over $\Omega$ is a bijective map, $\pi: \Omega \rightarrow \{1,2,\ldots,n\}$. Denote by $v_\pi^{[\Omega]}(\chi)$, the vertex of $\mb{P}(\chi)$ indexed by $\pi$. Then $v_\pi^{[\Omega]}(\chi)$ is the unique solution to the system of linear equations defined by
\begin{equation}\label{E:vertexSumEntries}
	\textup{(a)}\blank v_\pi^{\l\Gamma_\pi^i\r}(\chi) = \chi(\Gamma_\pi^i), ~i=0,1,2,\ldots,n,
	\blank\textup{with}\blank 
	\textup{(b)}\blank \Gamma_\pi^i := \{\omega\in\Omega | \pi(\omega) \leq i\},
\end{equation}
so for each $\omega\in\Omega$,
\begin{equation}\label{E:vertex}
	v_\pi^\omega(\chi) = \chi(\Gamma_\pi^{\pi(\omega)}) -\chi(\Gamma_\pi^{\pi(\omega)-1}).
\end{equation}

\begin{figure}[t]
	\centering \scalebox{0.625}{\includegraphics{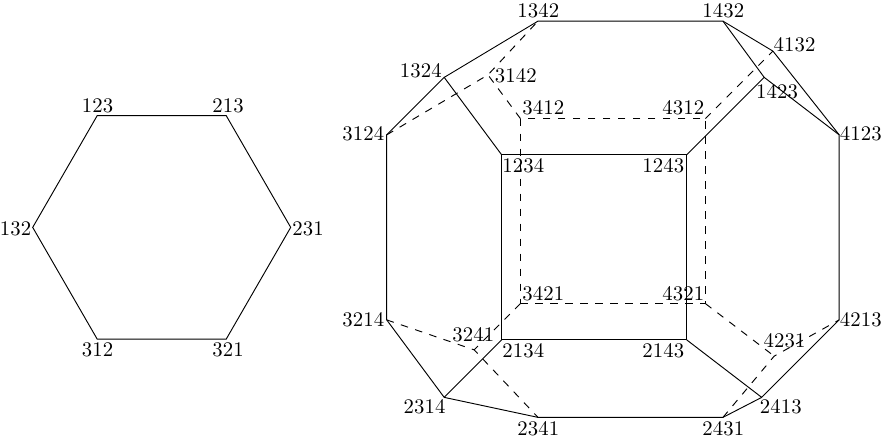}} \caption{When $n=3$, the permutohedron is a hexagon with $6$ vertices and $6$ edges. When $n=4$, it is a truncated octahedron with $24$ vertices, $36$ edges and $14$ facets, among which $6$ are rectangles and $8$ are hexagons.} \label{F:Polymatroids}
	\Description{Permutohedra of orders $n=3$ and $n=4$.}
\end{figure}

\subsection{The Feasible Polytope and Feasible Permutohedra}\label{SS:FP}

According to Theorem~\ref{T:schedulability}, if $d^{[\Omega]}$ is feasible, it must ensure that $\dot\lambda_{ij}^{\l\Omega\r} \leq c_{i+1, j+1}$ for all $i,j \in \mb{N}$. If $\bs\psi^{[\Omega]}$ is schedulable, this requirement is satisfied by default in the case that $i > 0$, because using (\ref{E:SigUpdateCaseB}), (\ref{E:CondSigPropF}), and (\ref{E:schedulability}), we have $\dot\lambda_{ij}^{\l\Omega\r} = \hat\lambda_{i+1, j+1}^{\l\Omega\r} \leq \lambda_{i+1, j+1}^{\l\Omega\r} \leq c_{i+1, j+1}$. So we need only focus on the case that $i=0$. In this case, given any $d^{[\Omega]} \leq q^{[\Omega]}$, according to (\ref{E:SigUpdateCaseA}) and (\ref{E:lambdaBound}), $\dot\lambda_{0j}^{\l\Omega\r} \leq c_{1, j+1}$ implies that, for all $\Gamma\subseteq\Omega$.
\[
	c_{1, j+1} \geq  \dot\lambda_{0j}^{\l\Omega\r} = \dot\lambda_{0j}^{\l\Gamma\r}+\dot\lambda_{0j}^{\l\overline\Gamma\r} = \sum_{\omega\in\Gamma} (\hat\lambda_{0, j+1}^\omega-d^\omega)^+ + \dot\lambda_{0j}^{\l\overline\Gamma\r} \geq \hat\lambda_{0, j+1}^{\l\Gamma\r}-d^{\l\Gamma\r} + \hat\lambda_{1, j+1}^{\l\overline\Gamma\r},
\]
where $\overline\Gamma$ denotes $\Omega\setminus\Gamma$. This in turn implies that
\begin{equation}\label{E:beta}
	 \forall \Gamma\subseteq\Omega, \blank\blank d^{\l\Gamma\r} \geq \beta(\Gamma) := \max_{j \in \mb{N}} (\hat\lambda_{0, j+1}^{\l\Gamma\r} + \hat\lambda_{1, j+1}^{\l\overline\Gamma\r}-c_{1, j+1}),
\end{equation}
As (\ref{E:beta}) specifies the least number of tasks that must be served from any subset of flows to guarantee $\bs\psi^{[\Omega]}$, we call it the {\it \textbf{baseline constraint}} and $\beta$ the {\it \textbf{baseline function}}. The importance of $\beta$ is highlighted in the next theorem.

\begin{theorem}\label{T:polytope}
	If $\bs\psi^{[\Omega]}$ is schedulable, a valid $d^{[\Omega]}$ is a feasible schedule if and only if it satisfies the baseline constraint, (\ref{E:beta}).
\end{theorem}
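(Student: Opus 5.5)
By Theorem~\ref{T:schedulability}, a valid $d^{[\Omega]}$ is feasible if and only if the induced spectrum system satisfies (\ref{E:schedulability+}). The discussion preceding (\ref{E:beta}) already shows that every condition in (\ref{E:schedulability+}) with $i>0$ holds automatically when $\bs\psi^{[\Omega]}$ is schedulable. So the plan is to prove that, for a valid $d^{[\Omega]}$, the remaining conditions
\[
\forall j\in\mb{N},\blank \dot\lambda_{0j}^{\l\Omega\r}\le c_{1,j+1}
\]
are equivalent to the baseline constraint (\ref{E:beta}).

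\emph{Necessity.} This direction is the computation displayed just before (\ref{E:beta}). Fix $\Gamma\subseteq\Omega$. For $\omega\in\Gamma$, use $(\hat\lambda^\omega_{0,j+1}-d^\omega)^+\ge\hat\lambda^\omega_{0,j+1}-d^\omega$. For $\omega\in\overline\Gamma$, use (\ref{E:lambdaBound}), which applies because validity gives $d^\omega\le q^\omega$. Together these yield $d^{\l\Gamma\r}\ge \hat\lambda_{0,j+1}^{\l\Gamma\r}+\hat\lambda_{1,j+1}^{\l\overline\Gamma\r}-c_{1,j+1}$ for every $j$. Maximizing over $j$ gives (\ref{E:beta}).

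\emph{Sufficiency.} This is the main step. Fix $j$ and choose the particular set
\[
\Gamma:=\{\omega\in\Omega \mid \hat\lambda^\omega_{0,j+1}\ge d^\omega\}.
\]
For $\omega\in\Gamma$, $\dot\lambda^\omega_{0j}=\hat\lambda^\omega_{0,j+1}-d^\omega$ exactly. For $\omega\in\overline\Gamma$, $\dot\lambda^\omega_{0j}=0$. We therefore need $\hat\lambda_{1,j+1}^{\omega}=0$ for $\omega\in\overline\Gamma$. By (\ref{E:CondSigPropE}), $\hat\lambda^\omega_{1,j+1}=(\hat\lambda^\omega_{0,j+1}-q^\omega)^+$, and since $\hat\lambda^\omega_{0,j+1}<d^\omega\le q^\omega$, this vanishes. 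Hence
\[
\dot\lambda_{0j}^{\l\Omega\r}=\hat\lambda_{0,j+1}^{\l\Gamma\r}-d^{\l\Gamma\r}+\hat\lambda_{1,j+1}^{\l\overline\Gamma\r}\le c_{1,j+1},
\]
where the inequality holds because $d^{\l\Gamma\r}\ge\beta(\Gamma)\ge$ the $j$-th term in the maximum defining $\beta$.

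The only delicate points are the choice of $\Gamma$ depending on $j$, which makes the inequalities in the necessity chain tight, and the infinite values. If some $\hat\lambda$ equals $\infty$, both sides must be read in $\mb{N}^+$. Under schedulability, (\ref{E:CondSigPropF}) and (\ref{E:schedulability}) make every $\hat\lambda_{ij}$ finite, since $\lambda_{ij}\le c_{ij}<\infty$, so no issue arises. With the conditions for $i>0$ handled by the earlier remark, Theorem~\ref{T:schedulability} then gives feasibility.
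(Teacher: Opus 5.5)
Your proof is correct and takes essentially the same route as the paper: reduce to the $i=0$ conditions via Theorem~\ref{T:schedulability}, get necessity from the derivation of (\ref{E:beta}), and for sufficiency bound $\dot\lambda_{0j}^{\l\Omega\r}$ using the set $\Gamma$ that attains the sum of positive parts. The only difference is that the paper writes $\dot\lambda_{0j}^{\l\Omega\r}=\max_{\Gamma\subseteq\Omega}(\hat\lambda_{0,j+1}^{\l\Gamma\r}-d^{\l\Gamma\r})$ and then drops the nonnegative $\hat\lambda_{1,j+1}^{\l\overline\Gamma\r}$ term from the lower bound on $\beta(\Gamma)$, so your check that this term vanishes on $\overline\Gamma$ is valid but not needed.
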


According to this theorem, if $\bs\psi^{[\Omega]}$ is schedulable, the set of feasible schedules is completely determined by (\ref{E:causalityB}), (\ref{E:capacity}), and (\ref{E:beta}), all linear constraints. So it is an $n$-polytope in general. We call it the {\it \textbf{feasible polytope}} and denote it by $\mb{F}$. In the two-flow case illustrated in Fig.~\ref{F:FeasibleRegion}, it is the hexagon, $ABCDEF$. In three-flow cases, it resembles a diamond. But, in more general cases, what is $\mb{F}$'s structure? To derive its structure, we need a key property of $\beta$.

\begin{figure}[t]
	\centering \scalebox{0.875}{\includegraphics{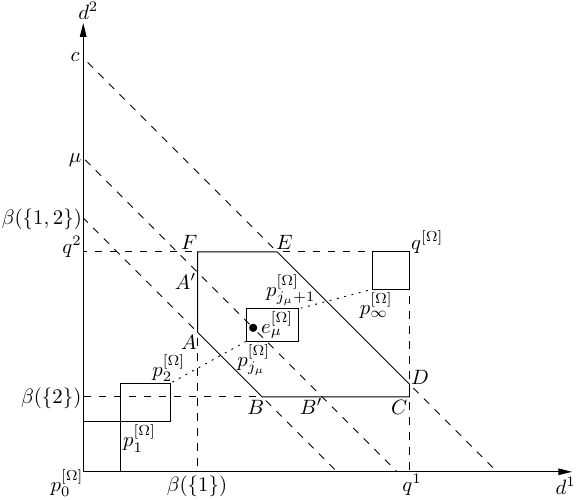}} \caption{In this two-flow case, the feasible polytope is the hexagon, $ABCDEF$. Its intersection with $\mb{H}_\mu$ is a feasible permutohedron, line segment $\overline{A'B'}$. Max-slack schedule $e_\mu^{[\Omega]}$ lies in the intersection of $\mb{H}_\mu$ and the elevating staircase of rectangles, $[p_j^{[\Omega]}, p_{j+1}^{[\Omega]}]$, $j = 0, 1, 2, \ldots$, ending with $[p_\infty^{[\Omega]}, q^{[\Omega]}]$.}
	\label{F:FeasibleRegion}
	\Description{The feasible polytope, a feasible permutohedron, and a max-slack schedule.}
\end{figure}

\begin{theorem}\label{T:betaProp}
	If $\bs\psi^{[\Omega]}$ is schedulable, $\beta$ is supermodular.
\end{theorem}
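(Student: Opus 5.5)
The plan is to exploit the special structure of the functions inside the maximum in (\ref{E:beta}). For each $j\in\mb{N}$, define
\[
f_j(\Gamma) := \hat\lambda_{0, j+1}^{\l\Gamma\r} + \hat\lambda_{1, j+1}^{\l\overline\Gamma\r}-c_{1, j+1}.
\]
Each $f_j$ is modular, but a maximum of modular functions is not supermodular in general. What rescues us is (\ref{E:CondSigPropE}): $\hat\lambda^\omega_{1,j+1}=(\hat\lambda^\omega_{0,j+1}-q^\omega)^+$. Hence
\[
m_j^\omega := \hat\lambda^\omega_{0,j+1}-\hat\lambda^\omega_{1,j+1} = \min\{\hat\lambda^\omega_{0,j+1},\, q^\omega\}.
\]
We can therefore write $f_j(\Gamma)=m_j^{\l\Gamma\r}+h_j$, where $h_j:=\hat\lambda_{1,j+1}^{\l\Omega\r}-c_{1,j+1}$ does not depend on $\Gamma$. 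By Theorem~\ref{T:CondSigProp}(b), $\hat\lambda^\omega_{0,j+1}$ is nondecreasing in $j$, so $m_j^\omega\geq 0$ is nondecreasing in $j$ for every $\omega$. This monotonicity is the key ingredient.

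First I will check that $\beta$ is well defined with values in $\mb{N}$. For boundedness, $f_j(\Gamma)\le f_j(\Omega)=\hat\lambda_{0,j+1}^{\l\Omega\r}-c_{1,j+1}$, which by (\ref{E:CondSigPropF}) and (\ref{E:schedulability}) is at most $c_{0,j+1}-c_{1,j+1}=c$. So the values are integers bounded above by $c$, and the maximum over $j$ is attained. Next I will establish $\beta(\phi)=0$. We have $\beta(\phi)=\max_j h_j$. Taking $j=0$ gives $h_0=\hat\lambda_{11}^{\l\Omega\r}-c_{11}=0$ by Theorem~\ref{T:CondSigProp}(a). Conversely, for every $j$, $h_j\le\lambda_{1,j+1}^{\l\Omega\r}-c_{1,j+1}\le 0$ by (\ref{E:CondSigPropF}) and schedulability (\ref{E:schedulability}). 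Finally, since $m_j^\omega\ge0$, each $f_j$ is monotone in $\Gamma$, so $\beta(\Gamma)\ge\beta(\phi)=0$.

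For the supermodular inequality, fix $\Gamma,\Gamma'\subseteq\Omega$. Pick maximizers $j$ for $\beta(\Gamma)$ and $k$ for $\beta(\Gamma')$, and by symmetry assume $j\le k$. I will pair the smaller index with the intersection and the larger with the union. Since $\beta(\Gamma\Gamma')\ge f_j(\Gamma\Gamma')$ and $\beta(\Gamma+\Gamma')\ge f_k(\Gamma+\Gamma')$, it suffices to show
\[
f_j(\Gamma)+f_k(\Gamma')\le f_j(\Gamma\Gamma')+f_k(\Gamma+\Gamma').
\]
The constants $h_j,h_k$ cancel from both sides. The remaining difference is
\[
m_j^{\l\Gamma\r}-m_j^{\l\Gamma\Gamma'\r}+m_k^{\l\Gamma'\r}-m_k^{\l\Gamma+\Gamma'\r}=m_j^{\l\Gamma\setminus\Gamma'\r}-m_k^{\l\Gamma\setminus\Gamma'\r},
\]
which is $\le0$ because $m_j^\omega\le m_k^\omega$ for $j\le k$. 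This yields $\beta(\Gamma)+\beta(\Gamma')\le\beta(\Gamma+\Gamma')+\beta(\Gamma\Gamma')$.

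The main obstacle is recognizing the structural fact above: the gap $\hat\lambda_{0,j+1}-\hat\lambda_{1,j+1}$ equals $\min\{\hat\lambda_{0,j+1},q\}$ and is therefore monotone in $j$. Once that is in hand, the pairing of indices is routine. A secondary point is that schedulability is used only to get $\beta(\phi)=0$, nonnegativity, and finiteness, i.e.\ membership in the class of Definition~\ref{D:supermodular}.
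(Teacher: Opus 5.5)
Your proof is correct and takes essentially the same route as the paper. Your $m_j^\omega$ is exactly the paper's $p_{j+1}^\omega=\min\{\hat\lambda_{0,j+1}^\omega,q^\omega\}=\hat\lambda_{0,j+1}^\omega-\hat\lambda_{1,j+1}^\omega$; the paper likewise writes $\beta(\Gamma)$ as a maximum over $j$ of $p_{j+1}$ summed over $\Gamma$ plus a $\Gamma$-independent term, gets $\beta(\phi)=0$ and nonnegativity from the $j=0$ term together with schedulability, and pairs the larger maximizing index with $\Gamma\cup\Gamma'$ and the smaller with $\Gamma\cap\Gamma'$ using monotonicity of $p_j$ in $j$, while your check that the maximum over $j$ is attained (values bounded above by $c$) is a small detail the paper leaves implicit when defining its argmax index.
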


Given $\beta$'s supermodularity, it would seem that $\mb{F}$ should be related to some permutohedron. The problem is that $d^{\l\Omega\r}$ must remain constant in a permutohedron, which is not the case for $\mb{F}$. This observation motivates us to intersect $\mb{F}$ with the hyperplane,
$\mb{H}_\mu$, defined by $d^{\l\Omega\r} = \mu$, as the intersection, $\mb{F}_\mu:=\mb{F}\cap\mb{H}_\mu$, turns out to be a permutohedron.

\subsubsection*{\textup{[}Feasible Permutohedron\textup{]}}When $d^{[\Omega]} \in \mb{F}_\mu$, it must satisfy (\ref{E:causalityB}), (\ref{E:capacity}), and (\ref{E:beta}). Hence, for $\mb{F}_\mu$ to be non-empty, $\mu$ must satisfy
\begin{equation}\label{E:muFeasible}
	\beta(\Omega) \leq \mu \leq \min \{c, q^{\l\Omega\r}\}.
\end{equation}
We call such a $\mu$ {\it \textbf{feasible}}. If $\bs\psi^{[\Omega]}$ is schedulable, at least one feasible $\mu$ exists. To see this, on the one hand, using (\ref{E:beta}), (\ref{E:CondSigPropF}), and (\ref{E:schedulability}), we have
\[
	\beta(\Omega) = \max_{j \in \mb{N}} (\hat\lambda_{0, j+1}^{\l\Omega\r} -c_{1, j+1}) \leq \max_{j \in \mb{N}} (\lambda_{0, j+1}^{\l\Omega\r} -c_{1, j+1}) \leq \max_{j \in \mb{N}} (c_{0, j+1} -c_{1, j+1}) = c.
\]
On the other hand, according to (\ref{E:CondSigPropE}) and (\ref{E:CondSigPropF}), $\hat\lambda_{0, j+1} \leq q+\hat\lambda_{1, j+1} \leq q+\lambda_{1, j+1}$, so using (\ref{E:beta}) and (\ref{E:schedulability}), we also have
\[
	\beta(\Omega) = \max_{j \in \mb{N}} (\hat\lambda_{0, j+1}^{\l\Omega\r} -c_{1, j+1}) \leq \max_{j \in \mb{N}} (q^{\l\Omega\r}+\lambda_{1, j+1}^{\l\Omega\r} -c_{1, j+1}) \leq q^{\l\Omega\r} + \max_{j \in \mb{N}} (c_{1, j+1} -c_{1, j+1}) = q^{\l\Omega\r}.
\]

For clarity, when $d^{[\Omega]} \in \mb{H}_\mu$, we denote it by $d_\mu^{[\Omega]}$ so that, by definition, $d_\mu^{\l\Omega\r} = \mu$. Then, when $d_\mu^{[\Omega]} \in \mb{F}_\mu$, since (\ref{E:beta}) and (\ref{E:causalityB}) imply that $d_\mu^{\l\Gamma\r} \geq \beta(\Gamma)$ and $d_\mu^{\l\Gamma\r}+q^{\l\overline\Gamma\r} \geq d_\mu^{\l\Omega\r} = \mu$,
\begin{equation}\label{E:betaMu}
	\forall \Gamma\subseteq\Omega, \blank\blank d_\mu^{\l\Gamma\r} \geq \beta_\mu(\Gamma) := \max\{\beta(\Gamma), \mu-q^{\l\overline\Gamma\r}\}.
\end{equation}
If $\mu$ is feasible, it is also immediate from (\ref{E:muFeasible}) that $\beta_\mu(\Omega) = \max\{\beta(\Omega), \mu\} = \mu$. So, when $d_\mu^{[\Omega]} \in \mb{F}_\mu$, we have both
\begin{equation}\label{E:permuBetaMuEntries}
	\text{(a)}\blank \forall \Gamma\subseteq\Omega, ~d_\mu^{\l\Gamma\r} \geq \beta_\mu(\Gamma),
	\blank\text{and}\blank
	\text{(b)}\blank d_\mu^{\l\Omega\r} = \mu = \beta_\mu(\Omega).
\end{equation}
Comparing (\ref{E:permuBetaMuEntries}) to (\ref{E:permutohedronEntries}), it is clear that, if $\beta_\mu$ is supermodular, $d_\mu^{[\Omega]} \in \mb{P}(\beta_\mu)$, and thus $\mb{F}_\mu \subseteq \mb{P}(\beta_\mu)$. But is $\beta_\mu$ supermodular? Yes, according to the next theorem, it is, and moreover, $\mb{F}_\mu$ is exactly $\mb{P}(\beta_\mu)$.

\begin{theorem}\label{T:FP}
	If $\bs\psi^{[\Omega]}$ is schedulable and $\mu$ is feasible, $\beta_\mu$ is supermodular, and $\mb{F}_\mu = \mb{P}(\beta_\mu)$.
\end{theorem}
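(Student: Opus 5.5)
Two things must be shown: that $\beta_\mu$ is supermodular, and that $\mb{F}_\mu=\mb{P}(\beta_\mu)$. The inclusion $\mb{F}_\mu\subseteq\mb{P}(\beta_\mu)$ was already derived before the statement, via (\ref{E:permuBetaMuEntries}), once supermodularity is known. So the substance is supermodularity and the reverse inclusion.

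\emph{Supermodularity.} We have $\beta_\mu(\phi)=\max\{\beta(\phi),\mu-q^{\l\Omega\r}\}$. Since $\mu\le q^{\l\Omega\r}$ and $\beta(\phi)\le 0$ (the term $\hat\lambda_{1,j+1}^{\l\Omega\r}-c_{1,j+1}\le 0$ by (\ref{E:CondSigPropF}) and (\ref{E:schedulability}), with equality at $j=0$), we get $\beta_\mu(\phi)=0$.

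For condition (b) of Definition~\ref{D:supermodular}, write $\gamma(\Gamma):=\mu-q^{\l\overline\Gamma\r}$. This is modular: $\gamma(\Gamma)+\gamma(\Gamma')=\gamma(\Gamma+\Gamma')+\gamma(\Gamma\Gamma')$. By Theorem~\ref{T:betaProp}, $\beta$ is supermodular. The maximum of a supermodular and a modular function is not supermodular in general, so I would argue by cases on which term attains each of $\beta_\mu(\Gamma)$ and $\beta_\mu(\Gamma')$.
\begin{itemize}
\item If both maxima come from the same function, the inequality follows from that function's super/modularity, bounding the right side below by the same function.
\item The mixed case is $\beta_\mu(\Gamma)=\beta(\Gamma)$ and $\beta_\mu(\Gamma')=\gamma(\Gamma')$. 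Here I would need an interaction inequality such as $\beta(\Gamma)+\gamma(\Gamma')\le\max\{\beta,\gamma\}(\Gamma+\Gamma')+\max\{\beta,\gamma\}(\Gamma\Gamma')$.
\end{itemize}
The natural tool for the mixed case is a ``causality'' monotonicity property of $\beta$: $\beta(\Gamma)-\beta(\Gamma\Gamma')\le q^{\l\Gamma\setminus\Gamma'\r}$, i.e.\ adding flows raises $\beta$ by at most their queues. I would prove this directly from (\ref{E:beta}) using (\ref{E:CondSigPropE}), which gives $\hat\lambda_{0,j+1}^\omega\le q^\omega+\hat\lambda_{1,j+1}^\omega$. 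Moving $\omega$ from $\overline\Gamma$ into $\Gamma$ therefore changes each term inside the max by at most $q^\omega$. With this, $\beta(\Gamma)+\gamma(\Gamma')\le\beta(\Gamma\Gamma')+q^{\l\Gamma\setminus\Gamma'\r}+\mu-q^{\l\overline{\Gamma'}\r}=\beta(\Gamma\Gamma')+\gamma(\Gamma+\Gamma')$, since $\overline{\Gamma+\Gamma'}=\overline{\Gamma'}\setminus\Gamma$. That closes the mixed case. This monotonicity lemma is the step I expect to need the most care, because it is the only place the specific structure of $\beta$ enters beyond Theorem~\ref{T:betaProp}.

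\emph{Reverse inclusion.} Take $d_\mu^{[\Omega]}\in\mb{P}(\beta_\mu)$. The baseline constraint (\ref{E:beta}) holds since $\beta_\mu\ge\beta$. The capacity constraint holds since $d_\mu^{\l\Omega\r}=\mu\le c$. For causality, apply (\ref{E:permutohedronEntries})(a) to $\Gamma=\Omega\setminus\{\omega\}$:
\[
\mu-d_\mu^\omega=d_\mu^{\l\Omega\setminus\{\omega\}\r}\ge\beta_\mu(\Omega\setminus\{\omega\})\ge\mu-q^\omega,
\]
so $d_\mu^\omega\le q^\omega$. Nonnegativity comes from the singleton constraint together with $\beta_\mu(\{\omega\})\ge\beta(\{\omega\})$; if that is negative I would instead use $d^\omega\ge p^\omega\ge0$, which is implicit in $\mb{N}$-valued schedules. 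Thus $d_\mu^{[\Omega]}$ is a valid schedule satisfying (\ref{E:beta}), hence feasible by Theorem~\ref{T:polytope}, and it lies in $\mb{H}_\mu$. Therefore $d_\mu^{[\Omega]}\in\mb{F}_\mu$.
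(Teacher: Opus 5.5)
Your proposal is correct and follows the paper's proof essentially step for step. You use the same causality lemma $\beta(\Gamma)-\beta(\Gamma\Gamma')\le q^{\l\Gamma\setminus\Gamma'\r}$; your term-by-term derivation from (\ref{E:CondSigPropE}) is equivalent to the paper's use of $p_{j+1}=\hat\lambda_{0,j+1}-\hat\lambda_{1,j+1}$ at the maximizer $j_\beta^\Gamma$. You also use the same four-case supermodularity argument and the same reverse inclusion via $\Gamma=\Omega\setminus\{\omega\}$. Your hedge on nonnegativity is unnecessary, because Theorem~\ref{T:betaProp} already gives $\beta\ge 0$, so $\beta_\mu(\{\omega\})\ge 0$ and $\beta_\mu$ maps into $\mb{N}$.
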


Since $\mb{P}(\beta_\mu)$ is non-empty, this theorem establishes that, if $\bs\psi^{[\Omega]}$ is schedulable, given any $a^{[\Omega]}$, at least one feasible schedule exists.\footnote{Notice that $\beta_\mu$ implicitly depends on $a^{[\Omega]}$. This can be seen in the following chain of dependence: $\beta_\mu$ on $\beta$ via (\ref{E:betaMu}), $\beta$ on $\hat\lambda_{ij}^{[\Omega]}$ via (\ref{E:beta}), $\hat\lambda_{ij}^{[\Omega]}$ on $q^{[\Omega]}$ via (\ref{E:CondSig}), and finally $q^{[\Omega]}$ on $a^{[\Omega]}$ via (\ref{E:causalityB}).} We call $\mb{P}(\beta_\mu)$ the {\it \textbf{feasible permutohedron}} under $\mu$. When $n=2$, it is a line segment, which is $\overline{A'B'}$ in Fig.~\ref{F:FeasibleRegion}. When $n=3$, it is a hexagon. When $n=4$, it is a truncated octahedron. These two cases are depicted in Fig.~\ref{F:Polymatroids}. Although each $\mb{P}(\beta_\mu)$ is but one slice of $\mb{F}$, by putting all of the slices together, we can assemble the entire $\mb{F}$.

According to Theorem~\ref{T:FP}, to select a feasible schedule, we can first select a feasible $\mu$, which fixes the total service, and then select $d_\mu^{[\Omega]}$ from $\mb{P}(\beta_\mu)$.\footnote{For an alternative, and complementary, approach to selecting feasible schedules, see Appendix \ref{A:permuBase}.} For instance, if a vertex of $\mb{P}(\beta_\mu)$, $v_\pi^{[\Omega]}(\beta_\mu)$, is selected, according to (\ref{E:vertexSumEntries})-(a), $v_\pi^{\l\Gamma_\pi^i\r}(\beta_\mu) = \beta_\mu(\Gamma_\pi^i)$, $i=0,1,2,\ldots,n$. Comparing this to (\ref{E:betaMu}), it is immediate that  $d_\mu^{\l\Gamma_\pi^i\r}$ is minimized by this selection. So a strict flow priority order is enforced by selecting $v_\pi^{[\Omega]}(\beta_\mu)$. In particular,  according to (\ref{E:vertexSumEntries})-(b), $\pi(\omega)$ can be viewed as a priority index such that the larger the $\pi(\omega)$, the higher the priority that flow $\omega$ enjoys. When, instead of enforcing priorities, the objective is ensuring fairness, priorities can be assigned equally by selecting the vertex centroid of $\mb{P}(\beta_\mu)$,
\begin{equation}\label{E:fair}
	v_\text{F}^{[\Omega]}(\beta_\mu):= \frac{1}{n!} \sum_{\pi\in\Pi^\Omega} v_\pi^{[\Omega]}(\beta_\mu),
\end{equation}
where $\Pi^\Omega$ is the set of all permutations over $\Omega$. Of course, (\ref{E:fair}) may not result in an integral point, in which case, it can be rounded. Interestingly, $v_\text{F}^{[\Omega]}(\beta_\mu)$'s definition coincides with that of the {\it Shapley} value from cooperative game theory \cite{Shapley:1953}.

\section{Max-Slack Schedules}\label{S:MSS}

A downside of our framework's generality is its complexity. One source of this complexity is the fact that, to fully exploit the flexibility of selecting any $d_\mu^{[\Omega]}$ from $\mb{P}(\beta_\mu)$, such as (\ref{E:fair}), we must calculate all $2^n$ values of $\beta_\mu$. However, for a special class of schedules, max-slack schedules, all of this calculation can be avoided. In the remainder of this section, we first define max-slack schedules and investigate their properties. We then show how per-class max-slack schedules can be constructed to enable intermediate tradeoffs between flexibility and efficiency.

\subsection{Definition and Properties}\label{SS:MSSDef}

Let $\bs{p} = [p_j]_{j \in \mb{N}} \in \mb{U}$, with
\begin{equation}\label{E:p+}
	p_j := \min \{\hat\lambda_{0j}, q\}.
\end{equation}
It is immediate from (\ref{E:CondSig}) that $p_j =  \max_{\bs{q} \in \mb{U}|q} \min \{\psi_j(\bs{q}), q\}$. That is to say, to guarantee $\bs{d} \geq \bs\psi(\bs{q})$ no matter which $\bs{q} \in \mb{U}|q$ is realized, at least $p_j$ of the $q$ tasks queued in the buffer must be served during interval $[t, t+j)$. According to (\ref{E:p+}) and (\ref{E:CondSigPropEntries})-(a, b), $\bs{p}$ is a cumulative vector. It is also a natural extension of $p$ because according to (\ref{E:p+}), (\ref{E:p-lambda}), and (\ref{E:p<q}), $p_1 = \min\{\hat\lambda_{01}, q\} = \min\{p, q\} = p$. Significantly, the condition for $p_j^{[\Omega]}$ to be feasible is especially simple.

\begin{theorem}\label{L:pFeasible}
{\it If $\bs\psi^{[\Omega]}$ is schedulable, $p_j^{[\Omega]}$ is a feasible schedule if $\beta(\Omega) \leq p_j^{\l\Omega\r} \leq c$.}	
\end{theorem}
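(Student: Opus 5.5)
By Theorem~\ref{T:polytope}, it suffices to show that $p_j^{[\Omega]}$ is valid and satisfies the baseline constraint (\ref{E:beta}). Validity is nearly immediate. Causality (\ref{E:causalityB}) holds because (\ref{E:p+}) gives $p_j^\omega \leq q^\omega$ for every flow. Capacity (\ref{E:capacity}) is the hypothesis $p_j^{\l\Omega\r} \leq c$. The real work is to verify that $p_j^{\l\Gamma\r} \geq \beta(\Gamma)$ for every $\Gamma\subseteq\Omega$, using only the hypothesis $\beta(\Omega)\leq p_j^{\l\Omega\r}$ together with schedulability.

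I will treat the ``tail'' indices $k$ in the maximum defining $\beta(\Gamma)$ by comparing $k+1$ with $j$.

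\emph{Case $k+1 \leq j$.} By (\ref{E:CondSigPropEntries})-(b), $\hat\lambda_{0,k+1}^\omega \leq \hat\lambda_{0j}^\omega$. Combined with $\hat\lambda_{0,k+1}^\omega \leq q^\omega$ (via (\ref{E:CondSigPropE}) and (\ref{E:CondSigPropF})), this gives, for flows in $\Gamma$,
\[
\hat\lambda_{0,k+1}^\omega \leq \min\{\hat\lambda_{0j}^\omega,\; q^\omega+\hat\lambda_{1,k+1}^\omega\} \leq p_j^\omega+\hat\lambda_{1,k+1}^\omega .
\]
Writing $x^\omega := \hat\lambda_{0,k+1}^\omega - \hat\lambda_{1,k+1}^\omega$, the term indexed by $k$ in $\beta(\Gamma)$ equals $x^{\l\Gamma\r} + \hat\lambda_{1,k+1}^{\l\Omega\r} - c_{1,k+1}$. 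Since $x^\omega \leq p_j^\omega$, and since schedulability gives $\hat\lambda_{1,k+1}^{\l\Omega\r} \leq \lambda_{1,k+1}^{\l\Omega\r} \leq c_{1,k+1}$, this term is at most $p_j^{\l\Gamma\r}$.

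\emph{Case $k+1 > j$.} Here I want to reduce to the $\Omega$-level hypothesis. For $\omega\notin\Gamma$, I will use $p_j^\omega \geq \hat\lambda_{0,k+1}^\omega - \hat\lambda_{1,k+1}^\omega$ \emph{minus something}, but the inequality runs the wrong way. Instead I will use the form of $\beta(\Omega)$ directly. The term indexed by $k$ in $\beta(\Gamma)$ equals
\[
\hat\lambda_{0,k+1}^{\l\Omega\r} - c_{1,k+1} - \bigl(\hat\lambda_{0,k+1}^{\l\overline\Gamma\r} - \hat\lambda_{1,k+1}^{\l\overline\Gamma\r}\bigr).
\]
Its first part, $\hat\lambda_{0,k+1}^{\l\Omega\r} - c_{1,k+1}$, is at most $\beta(\Omega) \leq p_j^{\l\Omega\r}$. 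It therefore suffices to show that $\hat\lambda_{0,k+1}^\omega - \hat\lambda_{1,k+1}^\omega \geq p_j^\omega$ for $\omega\notin\Gamma$. By (\ref{E:CondSigPropE}), $\hat\lambda_{0,k+1}-\hat\lambda_{1,k+1} = \min\{\hat\lambda_{0,k+1}, q\}$, which is $p_{k+1}$. Monotonicity of the cumulative vector $\bs p$ then gives $p_{k+1}^\omega \geq p_j^\omega$, so the term is at most $p_j^{\l\Omega\r}-p_j^{\l\overline\Gamma\r}=p_j^{\l\Gamma\r}$.

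The same identity $\hat\lambda_{0,k+1}-\hat\lambda_{1,k+1}=p_{k+1}$ also cleans up the first case: for $k+1\leq j$, that term equals $p_{k+1}^{\l\Gamma\r}+\hat\lambda_{1,k+1}^{\l\Omega\r}-c_{1,k+1}\leq p_j^{\l\Gamma\r}$. I expect the main obstacle to be the second case, namely recognizing that the hypothesis on $\beta(\Omega)$ must absorb the large-$k$ terms. The key step there is the identity $\hat\lambda_{0,k+1}-\hat\lambda_{1,k+1}=p_{k+1}$, which follows from (\ref{E:CondSigPropE}) by splitting on whether $\hat\lambda_{0,k+1}\leq q$. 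With both cases handled, taking the maximum over $k$ yields (\ref{E:beta}), and Theorem~\ref{T:polytope} completes the proof.
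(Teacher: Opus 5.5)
Your proof is correct and is essentially the paper's argument. The paper likewise rewrites $\beta(\Gamma)$ using $\hat\lambda_{0,k+1}-\hat\lambda_{1,k+1}=p_{k+1}$ and splits on whether $k+1\le j$ or $k+1>j$. In the first case it invokes schedulability, and in the second it uses $\beta(\Omega)\le p_j^{\langle\Omega\rangle}$ together with monotonicity of $\boldsymbol{p}$; the only difference is that the paper applies this only at the maximizing index $j_\beta^\Gamma$, whereas you bound every term. One small slip: your prose claim $\hat\lambda_{0,k+1}^\omega\le q^\omega$ should read $\hat\lambda_{0,k+1}^\omega\le q^\omega+\hat\lambda_{1,k+1}^\omega$, which is what your display actually uses.
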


As illustrated in Fig.~\ref{F:FeasibleRegion}, $p_j^{[\Omega]}$'s are disjoint points separated by irregular gaps. When $\mb{H}_\mu$ intersects $[p_{j_\mu}^{[\Omega]}, p_{j_\mu+1}^{[\Omega]}]$, the intersection lies between $p_{j_\mu}^{[\Omega]}$ and $p_{j_\mu+1}^{[\Omega]}$. This motivates the following definition.

\begin{definition}\label{D:MSS}
	$e_\mu^{[\Omega]} \in \mb{H}_\mu$ is a \textbf{max-slack schedule} under~$\mu$~if
	\begin{equation}\label{E:MSSDefEntries}
		\textup{(a)}\blank e_\mu^{[\Omega]}\in[p_{j_\mu}^{[\Omega]}, p_{j_\mu+1}^{[\Omega]}] \textup{ if } j_\mu < \infty,
		\blank\textup{and}\blank
		\textup{(b)}\blank e_\mu^{[\Omega]}\in[p_\infty^{[\Omega]}, q^{[\Omega]}] \textup{ if } j_\mu = \infty,
	\end{equation}
	where
	\begin{equation}\label{E:jMu}
		j_\mu := \tau_{\mu+1}(\bs{p}^{\l\Omega\r}).
	\end{equation}
\end{definition}

Intuitively, $e_\mu^{[\Omega]}$ lies in the intersection of $\mb{H}_\mu$ and the elevating staircase of hypercuboids, $[p_j^{[\Omega]}, p_{j+1}^{[\Omega]}]$, $j = 0, 1, 2, \ldots$, ending with $[p_\infty^{[\Omega]}, q^{[\Omega]}]$, which reduces to a staircase of rectangles in Fig.~\ref{F:FeasibleRegion}. The introduction of the last hypercuboid, $[p_\infty^{[\Omega]}, q^{[\Omega]}]$, ensures that $e_\mu^{[\Omega]} \leq q^{[\Omega]}$. Clearly, if $\mu > q^{\l\Omega\r}$, no intersection is possible, so no $e_\mu^{[\Omega]}$ exists. Otherwise, the location of the intersection is determined by $j_\mu$. To see why, notice that, if $\mu \leq q^{\l\Omega\r}$, using (\ref{E:jMu}) and (\ref{E:tau}), we have
\begin{equation}\label{E:jMu+Entries}
	\textup{(a)}\blank \mu\in[p_{j_\mu}^{\l\Omega\r}, p_{j_\mu+1}^{\l\Omega\r}) \textup{ if } j_\mu < \infty,
	\blank\textup{and}\blank
	\textup{(b)}\blank \mu\in[p_\infty^{\l\Omega\r}, q^{\l\Omega\r}] \text{ if } j_\mu = \infty.
\end{equation}
So $\mb{H}_\mu$ intersects the staircase somewhere in $[p_{j_\mu}^{[\Omega]}, p_{j_\mu+1}^{[\Omega]}]$ if $j_\mu < \infty$, and in $[p_\infty^{[\Omega]}, q^{[\Omega]}]$ if $j_\mu = \infty$.

\subsubsection*{\textup{[}The Properties of \texorpdfstring{$e_\mu^{[\Omega]}$\textup{]}}} Let $\mu = p_j^{\l\Omega\r}$. If $\bs\psi^{[\Omega]}$ is schedulable, according to (\ref{E:muFeasible}), and Theorem~\ref{L:pFeasible}, $p_j^{[\Omega]} \in \mb{F}_\mu$ if $\mu$ is feasible. Recall that $\mu$'s feasibility is necessary for $\mb{F}_\mu$ to be non-empty. So, given $\mu = p_j^{\l\Omega\r}$, $p_j^{[\Omega]} \in \mb{F}_\mu$ if $\mb{F}_\mu$ is non-empty. We further conjecture that, given any $\mu$, $e_\mu^{[\Omega]} \in \mb{F}_\mu$ if $\mb{F}_\mu$ is non-empty. This conjecture is confirmed by the next theorem.

\begin{theorem}\label{T:MSS}
	{If~} $\mb{F}_\mu$ is non-empty, $e_\mu^{[\Omega]}$ exists, and $e_\mu^{[\Omega]} \in \mb{F}_\mu$.
\end{theorem}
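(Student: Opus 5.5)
Since $\mb{F}_\mu$ is non-empty, $\mu$ is feasible, so by (\ref{E:muFeasible}) we have $\beta(\Omega)\le\mu\le\min\{c,q^{\l\Omega\r}\}$. As noted after Definition~\ref{D:MSS}, $\mu\le q^{\l\Omega\r}$ gives (\ref{E:jMu+Entries}), so $\mb{H}_\mu$ meets the relevant hypercuboid. Hence $e_\mu^{[\Omega]}$ exists. Any such point automatically satisfies $e_\mu^{[\Omega]}\le q^{[\Omega]}$, because $p_j\le q$ by (\ref{E:p+}), and it satisfies $e_\mu^{\l\Omega\r}=\mu\le c$. So $e_\mu^{[\Omega]}$ is valid, and by Theorem~\ref{T:polytope} it remains to prove the baseline constraint $e_\mu^{\l\Gamma\r}\ge\beta(\Gamma)$ for every $\Gamma\subseteq\Omega$.

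Write $e=e_\mu^{[\Omega]}$ and $k=j_\mu$, so that $p_k^{[\Omega]}\le e\le p_{k+1}^{[\Omega]}$ componentwise, with $p_{k+1}$ read as $q$ when $k=\infty$. We must show that for every $j\in\mb{N}$,
\[
e^{\l\Gamma\r}\ \ge\ \hat\lambda_{0,j+1}^{\l\Gamma\r}+\hat\lambda_{1,j+1}^{\l\overline\Gamma\r}-c_{1,j+1}.
\]
Split into the cases $j+1\le k$ and $j+1\ge k+1$.

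\textbf{Case $j+1\le k$.} Use the lower bound $e\ge p_k\ge p_{j+1}$ on $\Gamma$. For each flow, (\ref{E:p+}) and (\ref{E:CondSigPropE}) give $\hat\lambda_{0,j+1}=p_{j+1}+\hat\lambda_{1,j+1}$. Indeed, if $\hat\lambda_{0,j+1}\le q$ then $\hat\lambda_{1,j+1}=0$, and otherwise $p_{j+1}=q$. The right-hand side therefore equals
\[
p_{j+1}^{\l\Gamma\r}+\hat\lambda_{1,j+1}^{\l\Omega\r}-c_{1,j+1},
\]
and $\hat\lambda_{1,j+1}^{\l\Omega\r}\le\lambda_{1,j+1}^{\l\Omega\r}\le c_{1,j+1}$ by (\ref{E:CondSigPropF}) and schedulability (\ref{E:schedulability}). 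So the right-hand side is at most $p_{j+1}^{\l\Gamma\r}\le e^{\l\Gamma\r}$.

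\textbf{Case $j\ge k$.} Now use the upper bound on $\overline\Gamma$:
\[
e^{\l\Gamma\r}=\mu-e^{\l\overline\Gamma\r}\ \ge\ \mu-p_{k+1}^{\l\overline\Gamma\r}\ \ge\ \mu-p_{j+1}^{\l\overline\Gamma\r},
\]
with $q$ in place of $p_{k+1}$ when $k=\infty$, and $p_{j+1}\le q$ throughout. Using the same per-flow identity, the required inequality becomes
\[
\mu\ \ge\ \hat\lambda_{0,j+1}^{\l\Omega\r}-c_{1,j+1}.
\]
This holds because $\mu\ge\beta(\Omega)$ and $\beta(\Omega)=\max_j(\hat\lambda_{0,j+1}^{\l\Omega\r}-c_{1,j+1})$ by (\ref{E:beta}) with $\Gamma=\Omega$.

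The main obstacle I anticipate is that this case split is too clean. In Case 1 the inequality $p_k\ge p_{j+1}$ needs $j+1\le k$, and in Case 2 the inequality $p_{k+1}\le p_{j+1}$ needs $j\ge k$; together these do cover all $j$, so the split should close. The points needing care are the identity $\hat\lambda_{0,j+1}=p_{j+1}+\hat\lambda_{1,j+1}$, which must hold uniformly including infinite values, and the boundary case $k=\infty$. In that case only Case 1 with finite $j$ and the bound $e\ge p_\infty\ge p_{j+1}$ are used, and this remains valid. If the uniform identity fails, I would instead use only the inequality $\hat\lambda_{0,j+1}\le p_{j+1}+\hat\lambda_{1,j+1}$, which is the direction both cases need.
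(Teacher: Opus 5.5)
Your proof is correct, but it takes a different route from the paper's.

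\textbf{The paper's route.} The paper never examines $\beta(\Gamma)$ for $\Gamma\neq\Omega$. It takes any witness $d_\mu^{[\Omega]}\in\mb{F}_\mu$ and invokes Theorem~\ref{L:MSS}. For $i=0$ that theorem gives $\dot\lambda_{0j}^{\l\Omega\r}(d_\mu^{[\Omega]})=\hat\lambda_{1,j+1}^{\l\Omega\r}+\sum_{\omega}(p_{j+1}^\omega-d_\mu^\omega)^+\ge\hat\lambda_{1,j+1}^{\l\Omega\r}+(p_{j+1}^{\l\Omega\r}-\mu)^+$, with equality at $e_\mu^{[\Omega]}$ for every $j$. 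Hence $\dot\lambda_{ij}^{\l\Omega\r}(e_\mu^{[\Omega]})\le\dot\lambda_{ij}^{\l\Omega\r}(d_\mu^{[\Omega]})\le\dot c_{ij}$, and feasibility follows from the second part of Theorem~\ref{T:schedulability}.

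\textbf{Your route.} You check the baseline constraint of Theorem~\ref{T:polytope} term by term, splitting the maximum in (\ref{E:beta}) at $j_\mu$. This is essentially the paper's proof of Theorem~\ref{L:pFeasible}, extended from the corner points $p_j^{[\Omega]}$ to the whole hypercuboid.

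\textbf{What each buys.} Your argument uses only $\beta(\Omega)\le\mu\le\min\{c,q^{\l\Omega\r}\}$. So it shows directly, without Theorem~\ref{T:FP}, that a schedulable state admits the feasible schedule $e_\mu^{[\Omega]}$ for every feasible $\mu$. The paper's argument yields the stronger max-slack property: all $\dot\lambda_{ij}^{\l\Omega\r}$ are minimized simultaneously. That explains why the shape of $\mb{F}_\mu$ is irrelevant, and it transfers almost verbatim to Theorem~\ref{T:HMSS}. It also needs no schedulability of $\bs\psi^{[\Omega]}$, only a witness in $\mb{F}_\mu$.

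Three small points.

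First, you use schedulability of $\bs\psi^{[\Omega]}$ twice without stating it: through Theorem~\ref{T:polytope}, and for $\hat\lambda_{1,j+1}^{\l\Omega\r}\le c_{1,j+1}$ in Case~1. This is legitimate, since $\mb{F}$ is only introduced under that standing assumption, but say so. Alternatively, a witness $d_\mu^{[\Omega]}$ supplies both uses. The quantities $\dot\lambda_{ij}^{\l\Omega\r}=\hat\lambda_{i+1,j+1}^{\l\Omega\r}$ for $i>0$ do not depend on the schedule. And $\hat\lambda_{1,j+1}^{\l\Omega\r}\le\dot\lambda_{0j}^{\l\Omega\r}(d_\mu^{[\Omega]})\le c_{1,j+1}$ by (\ref{E:lambdaBound}).

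Second, existence of $e_\mu^{[\Omega]}$ requires an integral point of $\mb{H}_\mu$ in the hypercuboid. One exists because the corners are integral and $\mu$ is an integer between their sums; state this.

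Third, your fallback remark is wrong. Case~2 needs $p_{j+1}^{\l\overline\Gamma\r}+\hat\lambda_{1,j+1}^{\l\overline\Gamma\r}\le\hat\lambda_{0,j+1}^{\l\overline\Gamma\r}$, which is the opposite of the inequality you propose to keep. This is harmless, because the identity is exact. It is $\min\{x,q\}+(x-q)^+=x$ for all $x\in\mb{N}^+$, i.e.\ (\ref{E:p}). Under schedulability every $\hat\lambda_{0,j+1}\le(j+1)c$ is finite anyway.
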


According to Definition \ref{D:MSS}, $e_\mu^{[\Omega]}$ is confined to a hypercuboid that can be easily determined from $\mu$, $\bs{p}^{[\Omega]}$, and $q^{[\Omega]}$ alone. According to Theorems~\ref{T:FP} and \ref{T:MSS}, $e_\mu^{[\Omega]}$ is feasible if $\bs\psi^{[\Omega]}$ is schedulable and $\mu$ is feasible. So, to construct a feasible $e_\mu^{[\Omega]}$, no calculation of $\beta_\mu$ is required at all. Of course, to ensure that $\mu$ is feasible, according to (\ref{E:muFeasible}), we still need to calculate $\beta(\Omega)$. But even this calculation can be avoided if we let $\mu=\min\{c, q^{\l\Omega\r}\}$, that is, if we require that the server be work-conserving. Underlying all of these simplifications is the fact that the shape of $\mb{F}_\mu$ is irrelevant in Theorem~\ref{T:MSS}. This is explained by the next theorem.

\begin{theorem}\label{L:MSS}
	If $e_\mu^{[\Omega]}$ exists, for all $i,j\in\mb{N}$
	\begin{equation}\label{E:MSS}
		\dot\lambda_{ij}^{\l\Omega\r}(e_\mu^{[\Omega]}) = \min_{d_\mu^{[\Omega]} \leq q^{[\Omega]}} \dot\lambda_{ij}^{\l\Omega\r}(d_\mu^{[\Omega]}).
	\end{equation}
\end{theorem}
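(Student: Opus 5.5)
The plan is to fix $i,j$ and show that $e_\mu^{[\Omega]}$ attains a lower bound on $\dot\lambda_{ij}^{\l\Omega\r}(d_\mu^{[\Omega]})$ that holds for every $d_\mu^{[\Omega]}\leq q^{[\Omega]}$ in $\mb{H}_\mu$. Two cases follow from Theorem~\ref{T:SigUpdate}.

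\emph{Case $i>0$.} By (\ref{E:SigUpdateCaseB}), $\dot\lambda_{ij}^\omega=\hat\lambda_{i+1,j+1}^\omega$. This does not depend on the schedule, so (\ref{E:MSS}) holds trivially.

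\emph{Case $i=0$.} By (\ref{E:SigUpdateCaseA}),
\[
\dot\lambda_{0j}^{\l\Omega\r}(d^{[\Omega]})=\sum_{\omega\in\Omega}\bigl(\hat\lambda_{0,j+1}^\omega-d^\omega\bigr)^+ .
\]
The first key step is to separate out a schedule-independent part using $\bs p$ from (\ref{E:p+}). For any flow and any $d\leq q$,
\[
(\hat\lambda_{0,j+1}-d)^+=(p_{j+1}-d)^+ + (\hat\lambda_{0,j+1}-q)^+ .
\]
To check this, consider two cases. If $\hat\lambda_{0,j+1}\leq q$, then $p_{j+1}=\hat\lambda_{0,j+1}$ and the second term vanishes. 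If $\hat\lambda_{0,j+1}>q$, then $p_{j+1}=q\geq d$, and the right side equals $q-d+\hat\lambda_{0,j+1}-q$. Summing over $\omega$, the second term is a constant independent of $d^{[\Omega]}$. It therefore suffices to show that $e_\mu^{[\Omega]}$ minimizes
\[
S_k(d^{[\Omega]}):=\sum_{\omega}(p_k^\omega-d^\omega)^+
\]
over $d_\mu^{[\Omega]}\leq q^{[\Omega]}$, with $k=j+1$.

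The second step is the elementary bound
\[
S_k(d_\mu^{[\Omega]})\;\geq\;\Bigl(\sum_\omega (p_k^\omega-d_\mu^\omega)\Bigr)^+=(p_k^{\l\Omega\r}-\mu)^+ .
\]
Equality holds whenever $d_\mu^{[\Omega]}$ is componentwise comparable with $p_k^{[\Omega]}$. If $d_\mu^{[\Omega]}\geq p_k^{[\Omega]}$, both sides are $0$, since then $\mu\geq p_k^{\l\Omega\r}$. If $d_\mu^{[\Omega]}\leq p_k^{[\Omega]}$, both sides equal $p_k^{\l\Omega\r}-\mu\geq 0$.

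The final step, which is the only place the definition of $e_\mu^{[\Omega]}$ enters, is to show that $e_\mu^{[\Omega]}$ is comparable with every $p_k^{[\Omega]}$. This uses two facts: each $\bs p^\omega$ is nondecreasing in $k$, and $e_\mu^{[\Omega]}$ lies in the box of Definition~\ref{D:MSS}.
\begin{itemize}
\item If $j_\mu<\infty$ and $k\leq j_\mu$, then $e_\mu^{[\Omega]}\geq p_{j_\mu}^{[\Omega]}\geq p_k^{[\Omega]}$.
\item If $j_\mu<\infty$ and $k\geq j_\mu+1$, then $e_\mu^{[\Omega]}\leq p_{j_\mu+1}^{[\Omega]}\leq p_k^{[\Omega]}$.
\item If $j_\mu=\infty$, then $e_\mu^{[\Omega]}\geq p_\infty^{[\Omega]}\geq p_k^{[\Omega]}$ for all $k$.
\end{itemize}
Since $e_\mu^{[\Omega]}\leq q^{[\Omega]}$ by construction, it is an admissible competitor. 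Hence it attains the lower bound, and (\ref{E:MSS}) follows.

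I expect the only delicate point to be the splitting identity in the first step. This is exactly why the minimization in (\ref{E:MSS}) is restricted to $d_\mu^{[\Omega]}\leq q^{[\Omega]}$. I would also double-check the boundary conventions when $p_k^\omega=\infty$ is impossible (since $p_k\leq q<\infty$) and when $k=0$, where $p_0=0$ and the claim is trivial.
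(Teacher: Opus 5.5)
Your proof is correct and follows essentially the same route as the paper's. Your schedule-independent term $(\hat\lambda_{0,j+1}-q)^+$ is exactly $\hat\lambda_{1,j+1}$ by (\ref{E:CondSigPropE}), so your splitting identity is the paper's decomposition $\dot\lambda_{0j}=\hat\lambda_{1,j+1}+(p_{j+1}-d)^+$, followed by the same positive-part-of-the-sum lower bound and the same attainment argument via the box in Definition~\ref{D:MSS}; your direct case check of the identity (which avoids the paper's subtraction and hence any $\infty-\infty$ when $\hat\lambda_{0,j+1}=\infty$) and your reliance on componentwise comparability alone (without (\ref{E:jMu+Entries})) are only minor streamlinings.
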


Notice that, in (\ref{E:MSS}), $\dot\lambda_{ij}^{\l\Omega\r}$'s dependence on $e_\mu^{[\Omega]}$ or $d_\mu^{[\Omega]}$ through (\ref{E:SigUpdateCases}) is spelled out explicitly. According to Theorems~\ref{T:MSS} and \ref{L:MSS}, if $\mb{F}_\mu$ is non-empty, $e_\mu^{[\Omega]}$ is not just any feasible schedule in $\mb{F}_\mu$ but one that minimizes $\dot\lambda_{ij}^{\l\Omega\r}$ for all $i,j \in \mb{N}$ simultaneously. According to Theorem~\ref{T:schedulability}, this implies that $e_\mu^{[\Omega]}$ leaves maximum room for the server to admit new service requests in the next slot, which explains why we call it the max-slack schedule. Moreover, as $e_\mu^{[\Omega]}$ minimizes $\dot\lambda_{ij}^{\l\Omega\r}$ for all $i,j \in \mb{N}$ simultaneously, if any valid $d_\mu^{[\Omega]}$ can keep $\dot\lambda_{ij}^{\l\Omega\r} \leq \dot{c}_{ij}$ for all $i,j \in \mb{N}$, so can $e_\mu^{[\Omega]}$. Thus, according to Theorem~\ref{T:schedulability}, as long as $\mb{F}_\mu$ is non-empty, that is, as long as any $d_\mu^{[\Omega]}$ is feasible, so is $e_\mu^{[\Omega]}$. This explains not only why Theorem~\ref{T:MSS} holds but also why the shape of $\mb{F}_\mu$ is irrelevant in Theorem~\ref{T:MSS}. 

\subsection{Per-Class Max-Slack Schedules}\label{SS:HMSS}

By selecting $e_\mu^{[\Omega]}$ to avoid all calculation of $\beta_\mu$, we lose almost all flexibility. To explore the full flexibility-efficiency continuum instead of being limited to either extreme, we can aggregate flows into classes. Let $\mc{P} \subseteq 2^\Omega$ be a {\it \textbf{partition}} of $\Omega$ so that $\bigcup_{\Gamma\in\mc{P}} \Gamma = \Omega$ and $\Gamma\Gamma'=\phi$ for all distinct $\Gamma,\Gamma'\in\mc{P}$. When $\Gamma \in \mc{P}$ is used as an index, as in $\nu^\Gamma$, we denote the ensemble of all $\nu^\Gamma$'s by $\nu^{[\mc{P}]}$, and the sum, $\sum_{\Gamma\in\mc{P}} \nu^\Gamma$, by $\nu^{\l\mc{P}\r}$. We call $\nu^{[\mc{P}]}$ an {\it \textbf{inter-class schedule}} if, for each $\Gamma\in\mc{P}$, $\nu^\Gamma$ is the total service to flows in $\Gamma$ so that, flow-by-flow, schedules are restricted to
\begin{equation}\label{E:Hnu}
	\mb{H}(\nu^{[\mc{P}]}) := \{d^{[\Omega]} | d^{\l\Gamma\r} = \nu^\Gamma ~\forall \Gamma\in\mc{P}\}.
\end{equation}
Let $\mu = \nu^{\l\mc{P}\r}$, and for clarity, denote $\nu^{[\mc{P}]}$ by $\nu_\mu^{[\mc{P}]}$. It is immediate that $\mb{H}(\nu_\mu^{[\mc{P}]}) \subseteq \mb{H}_\mu$. Additionally, let $\mb{F}(\nu_\mu^{[\mc{P}]}) := \mb{F} \cap \mb{H}(\nu_\mu^{[\mc{P}]})$. So $\mb{F}(\nu_\mu^{[\mc{P}]}) \subseteq \mb{F}_\mu$.

Roughly speaking, given $\nu_\mu^{[\mc{P}]}$, for each $\Gamma\in\mc{P}$, we can construct a max-slack schedule under $\nu_\mu^\Gamma$ for $\bs\psi^{[\Gamma]}$, where $\bs\psi^{[\Gamma]}$ denotes the subsystem of $\bs\psi^{[\Omega]}$ indexed by $\Gamma$. By concatenating all such schedules, flows are, inter-class, scheduled according to $\nu_\mu^{[\mc{P}]}$ and, intra-class, max-slack scheduled. This motivates the following extensions of Definition~\ref{D:MSS} and Theorem~\ref{T:MSS}.

\begin{definition}\label{D:HMSS}
	$e^{[\Omega]}(\nu_\mu^{[\mc{P}]}) \in \mb{H}(\nu_\mu^{[\mc{P}]})$ is a \textbf{per-class max-slack schedule} under $\nu_\mu^{[\mc{P}]}$ if, for each $\Gamma\in\mc{P}$,
	\begin{equation}\label{E:HMSSDefEntries}
		\textup{(a)}\blank e^{[\Gamma]}(\nu_\mu^{[\mc{P}]}) \in	[p_{j_\mu^\Gamma}^{[\Gamma]},p_{j_\mu^\Gamma+1}^{[\Gamma]}] \textup{ if } j_\mu^\Gamma < \infty, \blank\textup{and}\blank
		\textup{(b)}\blank e^{[\Gamma]}(\nu_\mu^{[\mc{P}]}) \in	[p_\infty^{[\Gamma]} , q^{[\Gamma]}] \textup{ if } j_\mu^\Gamma = \infty,
	\end{equation}
where
\begin{equation}\label{E:jT}
	j_\mu^\Gamma := \tau_{\nu_\mu^\Gamma+1}(\bs{p}^{\l\Gamma\r}).
\end{equation}
\end{definition}

\begin{theorem}\label{T:HMSS}
	{If~} $\mb{F}(\nu_\mu^{[\mc{P}]})$ is non-empty, $e^{[\Omega]}(\nu_\mu^{[\mc{P}]})$ exists, and $e^{[\Omega]}(\nu_\mu^{[\mc{P}]}) \in \mb{F}(\nu_\mu^{[\mc{P}]})$.
\end{theorem}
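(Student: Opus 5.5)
Take any $d^{[\Omega]}\in\mb{F}(\nu_\mu^{[\mc{P}]})$, which exists by hypothesis, and replace it class by class with intra-class max-slack schedules. The result should be feasible, by the same "simultaneous minimization" mechanism that underlies Theorems~\ref{T:MSS} and \ref{L:MSS}.

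First, existence. For each $\Gamma\in\mc{P}$, $d^{[\Gamma]}\le q^{[\Gamma]}$ gives $\nu_\mu^\Gamma=d^{\l\Gamma\r}\le q^{\l\Gamma\r}$. By (\ref{E:jT}) and (\ref{E:tau}), the analogue of (\ref{E:jMu+Entries}) holds for $\bs{p}^{\l\Gamma\r}$. Since $p_j^{[\Gamma]}\le p_{j+1}^{[\Gamma]}\le p_\infty^{[\Gamma]}\le q^{[\Gamma]}$ componentwise, the box $[p_{j_\mu^\Gamma}^{[\Gamma]},p_{j_\mu^\Gamma+1}^{[\Gamma]}]$ (or $[p_\infty^{[\Gamma]},q^{[\Gamma]}]$) contains a point with coordinate sum $\nu_\mu^\Gamma$, by continuity of the sum along the box diagonal. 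Concatenating these points over $\Gamma\in\mc{P}$ gives $e^{[\Omega]}(\nu_\mu^{[\mc{P}]})\in\mb{H}(\nu_\mu^{[\mc{P}]})$. Each block is $\le q^{[\Gamma]}$, so causality holds, and the capacity constraint holds because $\mu=d^{\l\Omega\r}\le c$.

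Next, feasibility. Apply Theorem~\ref{L:MSS} to the subsystem $\bs\psi^{[\Gamma]}$ with total $\nu_\mu^\Gamma$. That theorem is purely per-flow algebra through (\ref{E:SigUpdateCases}) and involves no capacity, so it applies verbatim to any subset of flows. It gives $\dot\lambda_{ij}^{\l\Gamma\r}(e^{[\Gamma]})\le\dot\lambda_{ij}^{\l\Gamma\r}(d^{[\Gamma]})$ for all $i,j$. Summing over $\Gamma\in\mc{P}$ and using additivity of $\dot\lambda^{\l\Omega\r}$ over the partition yields
\[\dot\lambda_{ij}^{\l\Omega\r}(e^{[\Omega]}(\nu_\mu^{[\mc{P}]}))\le\dot\lambda_{ij}^{\l\Omega\r}(d^{[\Omega]})\le c_{i+1,j+1}.\]
The second inequality is Theorem~\ref{T:schedulability}, because $d^{[\Omega]}$ is feasible. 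Applying Theorem~\ref{T:schedulability} again, now to the valid schedule $e^{[\Omega]}(\nu_\mu^{[\mc{P}]})$, shows it is feasible. It lies in $\mb{H}(\nu_\mu^{[\mc{P}]})$, so it lies in $\mb{F}(\nu_\mu^{[\mc{P}]})$.

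The main obstacle is checking that Theorem~\ref{L:MSS} really transfers to a subsystem. The quantities $\bs{p}^{[\Gamma]}$ and $j_\mu^\Gamma$ must be exactly the restriction of the global objects, and nothing in its proof may use a property of the full system $\Omega$ such as schedulability. Since $\bs{p}$ and $\hat\lambda_{ij}$ are defined flow by flow, I expect this to go through. If it does not, I would reprove the per-class minimization directly. By (\ref{E:SigUpdateCaseB}) only $i=0$ matters. There, $\sum_{\omega\in\Gamma}(\hat\lambda_{0,j+1}^\omega-d^\omega)^+$ is minimized over $d^{[\Gamma]}\le q^{[\Gamma]}$ with fixed sum by filling each flow up to $p_{j+1}^\omega=\min\{\hat\lambda^\omega_{0,j+1},q^\omega\}$ before exceeding it. The staircase location of $e^{[\Gamma]}$ achieves this simultaneously for every $j$.
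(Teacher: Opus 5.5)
Your proposal follows the paper's proof essentially step for step. Existence comes from the per-class analogue of (\ref{E:jMu+Entries}), validity from $\mu = d^{\l\Omega\r}\le c$, and feasibility from rerunning Theorem~\ref{L:MSS} on each class, summing over $\mc{P}$, and invoking Theorem~\ref{T:schedulability} twice. One small repair is needed in the existence step: a point on the box diagonal need not be integral, yet schedules count unit-sized tasks, and the paper later relies on $e^{[\Omega]}(\nu_\mu^{[\mc{P}]})$ being integral when proving Theorem~\ref{T:HMSSCond}. So instead raise coordinates one unit at a time from the lower corner; this hits every integer sum in range, matching the paper's appeal to the integrality of $\nu_\mu^\Gamma$, $\bs{p}^{[\Gamma]}$, and $q^{[\Gamma]}$.
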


But how do we know if $\mb{F}(\nu_\mu^{[\mc{P}]})$ is non-empty?
By definition, $\mb{F}(\nu_\mu^{[\mc{P}]}) \subseteq \mb{F}_\mu$ while, according to Theorem~\ref{T:FP}, $\mb{F}_\mu = \mb{P}(\beta_\mu)$. So, if $\mb{F}(\nu_\mu^{[\mc{P}]})$ is non-empty, given any $d_\mu^{[\Omega]} \in \mb{F}(\nu_\mu^{[\mc{P}]})$, $d_\mu^{[\Omega]} \in \mb{P}(\beta_\mu)$, that is, $d_\mu^{[\Omega]}$ must satisfy (\ref{E:permuBetaMuEntries}). This implies that, for all $\mc{S} \subseteq \mc{P}$, $
\nu_\mu^{\l\mc{S}\r} = \sum_{\Gamma\in\mc{S}} d_\mu^{\l\Gamma\r} = d_\mu^{\l\Gamma_\mc{S}\r} \geq \beta_\mu(\Gamma_\mc{S})$, where $\nu_\mu^{\l\mc{S}\r}$ denotes $\sum_{\Gamma\in\mc{S}} \nu_\mu^\Gamma$ and $\Gamma_\mc{S}$ denotes $\bigcup_{\Gamma\in\mc{S}} \Gamma$, and that $\nu_\mu^{\l\mc{P}\r} = d_\mu^{\l\Omega\r} = \mu = \beta_\mu(\Omega)$. Therefore,
\begin{equation}\label{E:permuBetaPEntries}
	\text{(a)}\blank  \forall \mc{S} \subseteq \mc{P}, ~\nu_\mu^{\l\mc{S}\r} \geq \beta_\mu^\mc{P}(\mc{S}) :=\beta_\mu(\Gamma_\mc{S}),
	\blank\text{and}\blank	
	\text{(b)}\blank \nu_\mu^{\l\mc{P}\r} = \mu = \beta_\mu(\Omega) = \beta_\mu^\mc{P}(\mc{P}).
\end{equation}
As the sampling of $\beta_\mu$ on the algebra generated by $\mc{P}$, $\beta_\mu^\mc{P}$ is supermodular. We call $\nu_\mu^{[\mc{P}]}$ {\it \textbf{feasible}} if $\mb{F}(\nu_\mu^{[\mc{P}]})$ is non-empty. So, comparing (\ref{E:permuBetaPEntries}) to (\ref{E:permutohedronEntries}), it is immediate that a necessary condition for $\nu_\mu^{[\mc{P}]}$ to be feasible is that $\nu_\mu^{[\mc{P}]} \in \mb{P}(\beta_\mu^\mc{P})$. It turns out that this condition is also sufficient.

\begin{theorem}\label{T:HMSSCond}
	If $\bs\psi^{[\Omega]}$ is schedulable and $\mu$ is feasible, $\nu_\mu^{[\mc{P}]}$ is feasible if and only if $\nu_\mu^{[\mc{P}]} \in \mb{P}(\beta_\mu^\mc{P})$.
\end{theorem}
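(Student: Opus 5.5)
The necessity direction was shown just before the statement, so only sufficiency remains. Given $\nu_\mu^{[\mc{P}]} \in \mb{P}(\beta_\mu^\mc{P})$, I will exhibit some $d_\mu^{[\Omega]} \in \mb{P}(\beta_\mu)$ with $d_\mu^{\l\Gamma\r} = \nu_\mu^\Gamma$ for every $\Gamma \in \mc{P}$. Since $\bs\psi^{[\Omega]}$ is schedulable and $\mu$ is feasible, Theorem~\ref{T:FP} gives $\mb{F}_\mu = \mb{P}(\beta_\mu)$. Such a point then lies in $\mb{F} \cap \mb{H}(\nu_\mu^{[\mc{P}]}) = \mb{F}(\nu_\mu^{[\mc{P}]})$, which is therefore non-empty. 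The task thus reduces to a polymatroid statement: the aggregation map $A: d^{[\Omega]} \mapsto (d^{\l\Gamma\r})_{\Gamma\in\mc{P}}$ sends $\mb{P}(\beta_\mu)$ onto $\mb{P}(\beta_\mu^\mc{P})$.

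\textbf{Step 1: vertices map to vertices.} Let $\sigma$ be a permutation of the classes in $\mc{P}$. Choose any permutation $\pi$ of $\Omega$ that lists the flows class by class in the order $\sigma$, with arbitrary order inside each class. The chains $\Gamma_\pi^i$ then pass through every union of the first $k$ classes under $\sigma$. By telescoping (\ref{E:vertex}) within each class, I expect
\[
A\, v_\pi^{[\Omega]}(\beta_\mu) = v_\sigma^{[\mc{P}]}(\beta_\mu^\mc{P}).
\]
That is, every vertex of $\mb{P}(\beta_\mu^\mc{P})$ is the image of a vertex of $\mb{P}(\beta_\mu)$.

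\textbf{Step 2: a real-valued preimage.} Since $\beta_\mu^\mc{P}$ is supermodular, Edmonds' theorem gives that $\mb{P}(\beta_\mu^\mc{P})$ is the convex hull of its vertices $v_\sigma^{[\mc{P}]}(\beta_\mu^\mc{P})$. Write $\nu_\mu^{[\mc{P}]} = \sum_\sigma \theta_\sigma v_\sigma^{[\mc{P}]}(\beta_\mu^\mc{P})$ as a convex combination. Set $d^{[\Omega]} := \sum_\sigma \theta_\sigma v_{\pi_\sigma}^{[\Omega]}(\beta_\mu)$. This point lies in $\mb{P}(\beta_\mu)$ by convexity, and by linearity of $A$ it satisfies $A d^{[\Omega]} = \nu_\mu^{[\mc{P}]}$.

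\textbf{Step 3: integrality (main obstacle).} Schedules must be integral, and the point built in Step 2 may be fractional. The set of real points satisfying both (\ref{E:permuBetaMuEntries}) and $d^{\l\Gamma\r} = \nu_\mu^\Gamma$ for all $\Gamma \in \mc{P}$ is the intersection of two integral base polyhedra. The first is $\mb{P}(\beta_\mu)$. The second is the base polytope of the modular function $\Gamma' \mapsto \sum_{\Gamma\in\mc{P}} \min\{\ldots\}$, i.e.\ a product of simplices with fixed class totals; equivalently, the partition constraints are those of a partition (poly)matroid. Edmonds' polymatroid intersection theorem asserts that such an intersection, when non-empty, has integral vertices, so an integral $d_\mu^{[\Omega]}$ exists.

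If that citation proves awkward, I would instead argue directly by induction over the classes. Fixing the classes in some order, I would use supermodularity of $\beta_\mu$ to show that the residual function on the remaining flows, after fixing an integral allocation within the first class, is again supermodular and integer-valued, and that the aggregated vector restricted to the remaining classes still lies in the corresponding residual permutohedron. Making this residual bookkeeping precise is where I expect the main effort to lie.
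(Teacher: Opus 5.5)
Your proof is correct. Steps 1 and 2 are the paper's argument: the paper builds $\pi_\sigma$ by listing flows class by class in the order $\sigma$ and shows by telescoping that $v_{\pi_\sigma}^{\l\Gamma\r}(\beta_\mu)=v_\sigma^\Gamma(\beta_\mu^\mc{P})$. It then lifts a convex combination of vertices of $\mb{P}(\beta_\mu^\mc{P})$ to the same combination of the $v_{\pi_\sigma}^{[\Omega]}(\beta_\mu)$, which lies in $\mb{F}_\mu=\mb{P}(\beta_\mu)$ by Theorem~\ref{T:FP}.

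You differ only in how integrality is recovered. The paper stays inside its own machinery. It observes that the proof of Theorem~\ref{T:HMSS} never uses integrality of the witness in $\mb{F}(\nu_\mu^{[\mc{P}]})$, because the comparison $\dot\lambda_{ij}^{\l\Omega\r}(e^{[\Omega]}(\nu_\mu^{[\mc{P}]}))\le\dot\lambda_{ij}^{\l\Omega\r}(d_\mu^{[\Omega]})$ inherited from Theorem~\ref{L:MSS} holds for real $d_\mu^{[\Omega]}$. So mere real non-emptiness already forces the per-class max-slack schedule, which is integral by construction, into $\mb{F}(\nu_\mu^{[\mc{P}]})$.

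You instead use the integrality of the intersection of two integral base polyhedra. That is valid and gives more: every vertex of the slice $\mb{F}(\nu_\mu^{[\mc{P}]})$ is integral, not just one point. The cost is a heavier external theorem, whereas the paper's route is elementary and produces an explicit integral witness.

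One correction in your Step 3. The second polyhedron is not the base polytope of a modular function, since that would be a single point. It is the base polytope of the supermodular function $X\mapsto\sum_{\Gamma\in\mc{P},\,\Gamma\subseteq X}\nu_\mu^\Gamma$, which is exactly $\{d\ge 0 : d^{\l\Gamma\r}=\nu_\mu^\Gamma\ \forall\Gamma\in\mc{P}\}$. Its intersection with $\mb{P}(\beta_\mu)$ equals $\mb{P}(\beta_\mu)\cap\mb{H}(\nu_\mu^{[\mc{P}]})$ for three reasons:
\begin{itemize}
\item $\beta_\mu(\{\omega\})\ge 0$ makes $d\ge 0$ automatic on $\mb{P}(\beta_\mu)$;
\item $\nu_\mu^\Gamma\ge\beta_\mu(\Gamma)\ge 0$, so the function is well-defined and integer-valued for integral $\nu_\mu^{[\mc{P}]}$;
\item both functions take the value $\mu$ on $\Omega$, so the intersection theorem applies.
\end{itemize}
With that fixed, your inductive fallback is unnecessary.
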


According to Theorems~\ref{T:HMSS} and \ref{T:HMSSCond}, to select a per-class max-slack schedule, we can first select a feasible $\mu$, then select a feasible $\nu_\mu^{[\mc{P}]}$ from $\mb{P}(\beta_\mu^\mc{P})$, and finally construct $e^{[\Omega]}(\nu_\mu^{[\mc{P}]})$, which, according to Definition~\ref{D:HMSS}, is confined to a hypercuboid that can be easily determined from $\nu_\mu^{[\mc{P}]}$, $\bs{p}^{[\Omega]}$ and $q^{[\Omega]}$ alone. This enables intermediate tradeoffs between flexibility and efficiency because, as $|\mc{P}|$ increases or decreases, $\mb{P}(\beta_\mu^\mc{P})$ becomes harder or easier to determine, but we gain or lose flexibility.\footnote{To see how per-class max-slack schedules can be selected to enforce priorities or ensure fairness, see Appendix \ref{A:selectPerClass}.}

\section{Min-Plus Services}\label{S:MPS}

A second source of complexity in our framework is the fact that worst-case services, as uncountably infinite, full-blown maps between cumulative vectors, are difficult to specify and update. In this section, we show that, among all worst-case services that share the same spectrum, there exists a maximum and this maximum can be constructed using the min-plus algebra.\footnote{A given spectrum is almost always shared by multiple worst-case services because, as crude counts show, the cardinal of the set of all worst-cases is $\aleph_2$ while the cardinal of the set of all spectra is $\aleph_1$.} So, by construction, min-plus services can be completely identified by their spectra, which are countably infinite. We introduce two definitions for min-plus services. The first is convenient for identifying their spectra, and the second, for identifying their update rule.

\subsection{A First Definition and the Spectrum}\label{SS:MPSDef}

Although, in general, worst-case services cannot be identified by their spectra completely, they can be upper bounded by their spectra. According to (\ref{E:signature}), for all $\bs{q} \in \mb{U} \ua b$ and $i,j\in\mb{N}$, $\psi_j(\bs{q}) -q_i \leq \lambda_{ij}(\bs\psi)$, so
\begin{equation}\label{E:psiBound}
	\psi_j(\bs{q}) \leq \min_{i \in \mb{N}} (q_i+\lambda_{ij}(\bs\psi)).
\end{equation}
This bound can be rewritten concisely using the {\it \textbf{min-plus algebra}}, an algebra in which operators $\min$ and $+$ replace, respectively, operators $+$ and $\times$ in the standard algebra. First, construct the semi-infinite matrix of $\lambda_{ij}(\bs\psi)$'s,
\[
	\Lambda(\bs\psi) = [\lambda_{ij}(\bs\psi)]_{i,j\in\mb{N}} = \left[
	\begin{array}{ccccc}
		\lambda_{00}(\bs\psi) & \lambda_{01}(\bs\psi) & \lambda_{02}(\bs\psi) & \cdots\\
		\lambda_{10}(\bs\psi) & \lambda_{11}(\bs\psi) & \lambda_{12}(\bs\psi) & \cdots\\
		\lambda_{20}(\bs\psi) & \lambda_{21}(\bs\psi) & 	\lambda_{22}(\bs\psi) & \cdots\\
		\vdots                & \vdots                & \vdots                & \ddots
	\end{array}
	\right].
\]
Next, using $\otimes$ to denote min-plus matrix multiplication, and applying the matrix multiplication rule, rewrite (\ref{E:psiBound}) in matrix form as
\begin{equation}\label{E:psiBound+}
	\bs\psi(\bs{q}) \leq \bs{q} \otimes \Lambda(\bs\psi) \blank\blank \forall \bs{q} \in \mb{U} \ua b.
\end{equation}
This version of the bound motivates the following definition.

\begin{definition}\label{D:SM-MPS}
	For a flow with $b$ tasks left unserved prior to slot $t$, $S=[s_{ij}]_{i,j\in\mb{N}}$ is a \textbf{spectral matrix} if, for all $i,j\in\mb{N}$,
	\begin{equation}\label{E:SMDefEntries}
		\textup{(a)}\blank s_{ij} = 0 \textup{ if } i \geq j,
		\blank
		\textup{(b)}\blank s_{ij} \leq s_{i, j+1},
		\blank
		\textup{(c)}\blank s_{ij} \geq s_{i+1, j},
		\blank\textup{and}\blank
		\textup{(d)}\blank s_{ij} \leq (s_{0j}-b)^+ \textup{ if } i>0.
	\end{equation}
	We call $\bs\psi^S$ the \textbf{min-plus service} identified by $S$ if
	\begin{equation}\label{E:psiSVec}
		\bs\psi^S(\bs{q}):= \bs{q} \otimes S \blank\blank \forall \bs{q} \in \mb{U}\ua b.
	\end{equation}
\end{definition}

According to this definition, spectral matrices are non-negative, strictly upper triangular, and have non-decreasing rows and non-increasing columns. It also follows from (\ref{E:psiSVec}) that, for all $j\in\mb{N}$,
\begin{equation}\label{E:psiS}
	\psi_j^S(\bs{q}) = \bs{q} \otimes S_{\bs\cdot j} = \min_{i\in\mb{N}} (q_i+s_{ij}) = \min_{i \leq j} (q_i+s_{ij}),
\end{equation}
where $S_{\bs\cdot j}$ denotes the $j$th column of $S$, and the last equality holds because, according to (\ref{E:SMDefEntries})-(a),
\[
	\min_{i\in\mb{N}} (q_i+s_{ij}) = \min \left\{\min_{i<j} (q_i+s_{ij}), \min_{i \geq j} q_i\right\} = \min \left\{\min_{i<j} (q_i+s_{ij}), q_j\right\} = \min_{i \leq j} (q_i+s_{ij}).
\]
This derivation shows that $\psi_j^S(\bs{q}) \leq q_j$. So $\bs\psi^S(\bs{q})\leq \bs{q}$. Moreover, according to (\ref{E:psiS}) and (\ref{E:SMDefEntries})-(b), $\psi_j^S(\bs{q}) \leq \psi_{j+1}^S(\bs{q})$. So $\bs\psi^S(\bs{q})$ is a cumulative vector. Therefore, according to Definition~\ref{D:WCS}, $\bs\psi^S$ is indeed a worst-case service.

\subsubsection*{\textup{[}The Spectrum\textup{]}} By construction, (\ref{E:SMDefEntries}) parallels (\ref{E:SigPropEntries}). So, not surprisingly, $s_{ij}$ behaves very much like $\lambda_{ij}$, the spectral value of {\it some} worst-case service. It is fundamental that {\it this} worst-case service can be $\bs\psi^S$ itself.

\begin{theorem}\label{T:SSig}
	For all $i,j\in\mb{N}$,
	\begin{equation}\label{E:SSig}
		\lambda_{ij}(\bs\psi^S) = s_{ij}.
	\end{equation}
\end{theorem}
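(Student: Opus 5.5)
We prove the two inequalities $\lambda_{ij}(\bs\psi^S)\le s_{ij}$ and $\lambda_{ij}(\bs\psi^S)\ge s_{ij}$ separately, using Definition~\ref{D:spectrum} together with the explicit form (\ref{E:psiS}).

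\emph{Upper bound.} For every $\bs q\in\mb{U}\ua b$, (\ref{E:psiS}) gives $\psi^S_j(\bs q)=\min_{k\in\mb N}(q_k+s_{kj})\le q_i+s_{ij}$. Hence $(\psi^S_j(\bs q)-q_i)^+\le s_{ij}$, since $s_{ij}\ge 0$. Taking the maximum over $\bs q$ in (\ref{E:signature}) yields $\lambda_{ij}(\bs\psi^S)\le s_{ij}$.

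\emph{Lower bound.} If $i\ge j$, then $s_{ij}=0$ by (\ref{E:SMDefEntries})-(a), and there is nothing to prove. So assume $i<j$. We need a single queued arrival vector $\bs q\in\mb{U}\ua b$ with $\psi^S_j(\bs q)-q_i\ge s_{ij}$. The natural candidate is the extremal vector that is as small as possible up to index $i$ and then jumps to infinity:
\[
q_k=\begin{cases}0,& k=0,\\ b,& 1\le k\le i,\\ \infty,& k>i,\end{cases}
\]
with $q_k=0$ for $k\le i$ when $i=0$. This vector lies in $\mb U\ua b$. In $\psi^S_j(\bs q)=\min_{k\le j}(q_k+s_{kj})$, every term with $k>i$ is infinite. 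Every term with $k\le i$ is at least $q_i+s_{kj}\ge q_i+s_{ij}$, because $q_k\le q_i$ for $k\ge1$ and, by the column monotonicity (\ref{E:SMDefEntries})-(c), $s_{kj}\ge s_{ij}$ for $k\le i$.

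The one delicate term is $k=0$, where $q_0=0<q_i=b$ when $i>0$. That term equals $s_{0j}$, and we need $s_{0j}\ge b+s_{ij}$. This is exactly what (\ref{E:SMDefEntries})-(d) supplies: if $s_{ij}>0$, then $s_{ij}\le (s_{0j}-b)^+$ forces $s_{0j}-b\ge s_{ij}$. If $s_{ij}=0$, the required bound $\lambda_{ij}\ge 0$ is trivial. When $i=0$, the bound reduces directly to $\psi^S_j(\bs q)=\min_{k\le 0}(\cdots)=s_{0j}$. In all cases $\psi^S_j(\bs q)-q_i\ge s_{ij}$, so $\lambda_{ij}(\bs\psi^S)\ge s_{ij}$.

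I expect the main obstacle to be choosing the witness $\bs q$ and seeing why condition (d) is needed for the $k=0$ term. Infinite entries also need care: they are allowed in $\mb N^+$, and $\psi^S_j$ stays finite or equals $s_{0j}$ as required. The rest is bookkeeping with the min-plus formula.
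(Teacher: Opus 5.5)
Your proof is correct and follows essentially the same route as the paper. Both use the same upper bound via the $k=i$ term of the min-plus product, and the same extremal witness $\bs q=\mc{R}^i\bs\varepsilon+b\bs\delta$, with (c) handling the terms $1\le k\le i$ and (d) handling the $k=0$ term; the paper packages this last step as $(\min\{s_{0j},b\delta_i+s_{ij}\}-b\delta_i)^+=\min\{(s_{0j}-b\delta_i)^+,s_{ij}\}=s_{ij}$ rather than your case split on $s_{ij}>0$.

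One small wording fix: for $1\le k\le i$, the bound $q_k+s_{kj}\ge q_i+s_{kj}$ holds because $q_k=q_i=b$ for your witness, not because $q_k\le q_i$, which points the wrong way.
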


According to this theorem, each unique spectral matrix identifies a unique min-plus service with a unique spectrum, which explains why we term such matrices spectral matrices. One obvious corollary is that each min-plus service is completely identified by its spectrum. A second is that, among all worst-case services that share the same spectrum, $\Lambda = [\lambda_{ij}]_{i,j\in\mb{N}}$, there is always {\it one, and only one,} min-plus service. In fact, this service is identified by the spectral matrix $S=\Lambda$, because rewriting (\ref{E:SSig}) in matrix form yields $\Lambda(\bs\psi^S)=S=\Lambda$. Moreover, as $\Lambda(\bs\psi^S)=S$, according to (\ref{E:psiSVec}), $\bs\psi^S(\bs{q}) = \bs{q} \otimes \Lambda(\bs\psi^S)$. Comparing this to (\ref{E:psiBound+}), it is immediate that if $\Lambda(\bs\psi^S) = \Lambda(\bs\psi)$, $\bs\psi^S \geq \bs\psi$, that is, $\bs\psi^S(\bs{q}) \geq \bs\psi(\bs{q})$ for all $\bs{q} \in \mb{U}\ua b$. So, among all worst-case services that share the same spectrum, {\it the one} min-plus service identified by this spectrum is always the maximum. Notice that, as $\geq$ only defines a partial order among worst-case services, the existence of this maximum is not self-evident.

As a min-plus service is completely identified by its spectrum, so is its conditional spectrum. Denote $\lambda_{ij}(\bs\psi^S|q)$ by $\hat{s}_{ij}$. The next theorem details how $\hat{s}_{ij}$ is identified by $s_{ij}$.

\begin{theorem}\label{T:SCondSig}
	For all $i,j\in\mb{N}$,
	\begin{subnumcases}{\label{E:SCondSigCases}
		\hat{s}_{ij}=}
			\min \{s_{0j}, q+s_{1j}\}     & \textup{if $i=0$,} \label{E:SCondSigCaseA}\\
			\min \{(s_{0j}-q)^+, s_{ij}\} & \textup{if $i>0$.} \label{E:SCondSigCaseB}
	\end{subnumcases}
\end{theorem}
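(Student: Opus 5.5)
We compute $\hat{s}_{ij}=\max_{\bs{q}\in\mb{U}|q}(\psi^S_j(\bs{q})-q_i)^+$ directly from (\ref{E:psiS}), proving each case by an upper bound followed by an explicit maximizing $\bs{q}$. Throughout, $\bs{q}\in\mb{U}|q$ means $q_0=0$, $q_1=q$, and $q_i\geq q$ for $i\geq 1$. Since $q\geq b$, we have $\mb{U}|q\subseteq\mb{U}\ua b$, so (\ref{E:psiS}) applies.

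For the case $i=0$, (\ref{E:CondSig}) reduces to $\hat{s}_{0j}=\max_{\bs{q}\in\mb{U}|q}\psi^S_j(\bs{q})$ because $q_0=0$. Keeping only the terms $i'=0$ and $i'=1$ of the minimum in (\ref{E:psiS}) gives the upper bound
\[
	\psi^S_j(\bs{q})\leq\min\{s_{0j},\,q+s_{1j}\}.
\]
For $j=0$ both sides vanish by (\ref{E:SMDefEntries})-(a). For $j\geq1$, choose $\bs{q}^*$ with $q^*_0=0$, $q^*_1=q$ and $q^*_i=\infty$ for $i\geq2$. Then $\psi^S_j(\bs{q}^*)=\min\{s_{0j},q+s_{1j}\}$, since every term with $i'\geq2$ is infinite. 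For $j=1$ this uses $s_{11}=0$, so the value is $\min\{s_{01},q\}$, consistent with the formula. Hence (\ref{E:SCondSigCaseA}) holds.

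For the case $i>0$, we have $(\psi^S_j(\bs{q})-q_i)^+=0$ whenever $i\geq j$, because $\psi^S_j(\bs{q})\leq q_j\leq q_i$. The right-hand side of (\ref{E:SCondSigCaseB}) is also $0$ there, since $s_{ij}=0$. So assume $0<i<j$. For the upper bound, fix $\bs{q}\in\mb{U}|q$. Using the term $i'=i$ of (\ref{E:psiS}) gives $\psi^S_j(\bs{q})-q_i\leq s_{ij}$. Using the term $i'=0$ together with $q_i\geq q$ gives $\psi^S_j(\bs{q})-q_i\leq s_{0j}-q$. Taking positive parts yields $\hat{s}_{ij}\leq\min\{(s_{0j}-q)^+,s_{ij}\}$.

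The main work is the matching lower bound, which needs a $\bs{q}$ making $q_i$ small while every term $q_{i'}+s_{i'j}$ with $i'\neq 0,i$ stays large. Take $q_{i'}=q+x$ for $1\leq i'\leq i$ and $q_{i'}=\infty$ for $i'>i$, where $x\geq0$ is to be chosen. The minimum in (\ref{E:psiS}) then runs over $s_{0j}$ and $q+x+s_{i'j}$ for $1\leq i'\leq i$. Because columns of $S$ are non-increasing by (\ref{E:SMDefEntries})-(c), the smallest of the latter terms is $q+x+s_{ij}$. Therefore
\[
	\psi^S_j(\bs{q})-q_i=\min\{s_{0j}-q-x,\;s_{ij}\}.
\]
\begin{itemize}
	\item With $x=0$ this equals $\min\{s_{0j}-q,s_{ij}\}$, which is the claimed value whenever $s_{0j}\geq q$.
	\item If $s_{0j}<q$, the claimed value is $\min\{0,s_{ij}\}=0$, and $\hat{s}_{ij}\geq0$ holds trivially.
\end{itemize}
In fact the requirement $x\geq0$ means $x=0$ is the only useful choice, and (\ref{E:SMDefEntries})-(d) is not needed for this lower bound. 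The only delicate points are checking that the constructed $\bs{q}$ is a valid cumulative vector in $\mb{U}|q$ and handling the index edge cases $j\leq1$ and $i\geq j$.
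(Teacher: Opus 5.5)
Your proof is correct and follows the paper's argument essentially step for step. You use the same upper bounds, obtained by keeping only the $0$th and $1$st (resp.\ $0$th and $i$th) terms of the min-plus product. Your maximizers are exactly the paper's $\mc{R}\bs\varepsilon+q\bs\delta$ and $\mc{R}^i\bs\varepsilon+q\bs\delta$, evaluated via column monotonicity (c).

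One small correction: the parameter $x$ in your construction is vacuous. Membership in $\mb{U}|q$ forces $q_1=q$, so $x=0$ is the only admissible choice, not merely the ``only useful'' one. Your separate handling of the edge cases $j\le 1$ and $i\ge j$ is harmless but not needed.
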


\subsection{A Second Definition and the Update Rule}\label{SS:MPSDef+}

In Definition~\ref{D:SM-MPS}, we imposed the four conditions in  (\ref{E:SMDefEntries}) on $S$ to mirror the four spectral properties in (\ref{E:SigPropEntries}). Following the definition, we showed that $\bs\psi^S$ is a worst-case service. But, as nowhere in this argument are (\ref{E:SMDefEntries})-(c, d) used, neither is necessary for $\bs\psi^S$ to be a worst-case service. This observation motivates the following alternative definition.

\begin{definition}\label{D:CM-MPS}
	$M=[m_{ij}]_{i,j\in\mb{N}}$ is a \textbf{cumulative matrix} if, for all $i,j\in\mb{N}$,
	\begin{equation}\label{E:CMDefEntries}
		\textup{(a)}\blank m_{ij} = 0 \textup{ if } i \geq j,
		\blank\textup{and}\blank
		\textup{(b)}\blank m_{ij} \leq m_{i, j+1}.
	\end{equation}
	For a flow with $b$ tasks left unserved prior to slot $t$, we call $\bs\psi^M$ the \textbf{min-plus service} identified by $M$ if
	\begin{equation}\label{E:psiMVec}
		\bs\psi^M(\bs{q}):= \bs{q} \otimes M \blank\blank\forall \bs{q} \in \mb{U}\ua b.
	\end{equation}
\end{definition}

As the omissions of (\ref{E:SMDefEntries})-(c, d) from this definition do not affect (\ref{E:psiS})'s derivations, for all $j\in\mb{N}$, we can replicate (\ref{E:psiS}) in terms of $M$ as
\begin{equation}\label{E:psiM+}
	\psi_j^M(\bs{q}) = \bs{q} \otimes M_{\bs\cdot j} = \min_{i\in\mb{N}} (q_i+m_{ij}) = \min_{i \leq j} (q_i+m_{ij}).
\end{equation}
But, by omitting (\ref{E:SMDefEntries})-(c, d), it would seem that we unduly enlarge the set of min-plus services. According to the next theorem, this concern is unwarranted.

\begin{theorem}\label{T:MtoS}
	For a flow with $b$ tasks left unserved prior to slot $t$, given cumulative matrix $M$, construct $S=[s_{ij}]_{i,j\in\mb{N}}$ such that
	\begin{subnumcases}{\label{E:MtoSCases}
		s_{ij} =}
			m_{0j} & \textup{if $i=0$,}\label{E:MtoSCaseA}\\
			\min \left\{(m_{0j}-b)^+, \min_{1 \leq k \leq i} m_{kj}\right\} & \textup{if $i>0$.}\label{E:MtoSCaseB}
	\end{subnumcases}
	Then $S$ is a spectral matrix, and $\bs\psi^S = \bs\psi^M$.
\end{theorem}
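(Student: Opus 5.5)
The proof splits into two parts: first we check that $S$ satisfies (\ref{E:SMDefEntries})-(a)--(d), then we show $\bs\psi^S(\bs{q})=\bs\psi^M(\bs{q})$ for every $\bs{q}\in\mb{U}\ua b$, one column at a time.

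\textbf{$S$ is a spectral matrix.}
For (a), take $i\ge j$. If $i=0$ then $s_{0j}=m_{0j}=0$. If $i>0$, the inner minimum in (\ref{E:MtoSCaseB}) includes $k=i$, and $m_{ij}=0$, so $s_{ij}=0$.
For (b), $m_{0j}\le m_{0,j+1}$ and each $m_{kj}\le m_{k,j+1}$. The map $x\mapsto(x-b)^+$ is monotone, and so is a minimum of monotone terms, so $s_{ij}\le s_{i,j+1}$.
For (c), when $i>0$, the minimum defining $s_{i+1,j}$ runs over a superset of the terms defining $s_{ij}$, hence $s_{i+1,j}\le s_{ij}$. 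When $i=0$, $s_{1j}\le (m_{0j}-b)^+\le m_{0j}=s_{0j}$.
For (d), $s_{ij}\le (m_{0j}-b)^+=(s_{0j}-b)^+$ directly from (\ref{E:MtoSCaseA}).

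\textbf{Equality of the services.} Fix $j$ and $\bs{q}\in\mb{U}\ua b$. By (\ref{E:psiS}) and (\ref{E:psiM+}),
\[
\psi^S_j(\bs{q})=\min_{i}(q_i+s_{ij}),\qquad \psi^M_j(\bs{q})=\min_i(q_i+m_{ij}).
\]

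For $\psi^S_j\le\psi^M_j$, note that $s_{0j}=m_{0j}$ and, for $i>0$, $s_{ij}\le m_{ij}$ (take $k=i$). So every term $q_i+s_{ij}$ is at most the matching term $q_i+m_{ij}$.

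For $\psi^S_j\ge\psi^M_j$, it suffices to show that, for each $i>0$, $q_i+s_{ij}$ is at least some term $q_{i'}+m_{i'j}$. Two facts do the work: $q_i\ge q_k$ for $k\le i$ (monotonicity of $\bs{q}$), and $q_i\ge b$ for $i\ge1$ (since $\bs{q}\in\mb{U}\ua b$). Now consider which term attains $s_{ij}$.

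If the minimum is attained by $m_{kj}$ with $1\le k\le i$, then $q_i+m_{kj}\ge q_k+m_{kj}\ge\psi^M_j(\bs{q})$.

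If the minimum is attained by $(m_{0j}-b)^+$, then
\[
q_i+(m_{0j}-b)^+\ \ge\ q_i+m_{0j}-b\ \ge\ m_{0j}=q_0+m_{0j}\ \ge\ \psi^M_j(\bs{q}),
\]
using $q_i\ge b$ and $q_0=0$.

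Hence every term of $\psi^S_j(\bs{q})$ dominates $\psi^M_j(\bs{q})$, and equality follows.

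The only delicate point is this last case. It is the one place the conditioning on $b$, through $q_i\ge b$ and $q_0=0$, is genuinely needed, and it explains why the term $(m_{0j}-b)^+$ appears in (\ref{E:MtoSCaseB}). Everything else is direct term comparison. Infinite entries cause no trouble, since all the inequalities used hold in $\mb{N}^+$.
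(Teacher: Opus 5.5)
Your proposal is correct and follows essentially the paper's own route. The paper likewise splits $s_{ij}$ into the $(m_{0j}-b)^+$ part, bounded below by $m_{0j}$ via $q_i - b\delta_i = a_i \geq 0$, which matches your use of $q_i \geq b$ and $q_0 = 0$, and the $m_{kj}$ part, handled via $q_i \geq q_k$ after interchanging the minima. The only addition on your side is that you write out the verification of the four spectral-matrix conditions, which the paper dismisses as easy to verify.
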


According to this theorem, the same min-plus service can be identified by multiple cumulative matrices. While this is redundant, it enables a simple formulation of the update rule.

\subsubsection*{\textup{[}The Update Rule\textup{]}} A class of worst-case services is {\it \textbf{update invariant}} if it is preserved by the update rule of Theorem~\ref{T:WCSUpdate}, that is, if $\bs\psi$ belongs to the class, so does $\bs{\dot\psi}$. Significantly, an update invariant class defines a closed subspace in the state space and we can confine the operation of state-based scheduling to this subspace. The next theorem shows that min-plus services are update invariant and establishes their specific update rule.

\begin{theorem}\label{T:MUpdate}
	In Theorem~\ref{T:WCSUpdate}, if $\bs\psi$ is a min-plus service that can be identified by a cumulative matrix, $M$, that is, $\bs\psi = \bs\psi^M$, then, if $q \geq d \geq p$, $\bs{\dot\psi}$ is also a min-plus service that can be identified by a cumulative matrix, $\dot{M}=[\dot{m}_{ij}]_{i,j\in\mb{N}}$, that is, $\bs{\dot\psi} = \bs\psi^{\dot{M}}$, where
	\begin{subnumcases}{\label{E:MUpdateCases}
		\dot{m}_{ij}=}
			(\min \{m_{0, j+1}, q+m_{1, j+1}\}-d)^+ & \textup{if $i=0$,}\label{E:MUpdateCaseA}\\
			m_{i+1, j+1}                            & \textup{if $i>0$.}\label{E:MUpdateCaseB}
	\end{subnumcases}
\end{theorem}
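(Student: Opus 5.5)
We compute $\bs{\dot\psi}$ directly from the formula of Theorem~\ref{T:WCSUpdate} and show that it coincides with $\bs{\dot q}\otimes\dot M$ for every $\bs{\dot q}\in\mb{U}\ua\dot b$. The representation (\ref{E:psiM+}) is the starting point. Fix $\bs{\dot q}$ and let $\bs q=\mc{R}(\bs{\dot q}-\dot b\bs\delta)+q\bs\delta$. By (\ref{E:qUpdateCases}), $q_0=0$, $q_1=q$, and $q_{i+1}=\dot q_i+d$ for $i>0$. Then for every $j\in\mb{N}$,
\[
	\dot\psi_j(\bs{\dot q})=\bigl(\psi^M_{j+1}(\bs q)-d\bigr)^+=\Bigl(\min\Bigl\{m_{0,j+1},\; q+m_{1,j+1},\; \min_{i\geq 1}(\dot q_i+d+m_{i+1,j+1})\Bigr\}-d\Bigr)^+ .
\]
Here we split the index of the minimum in (\ref{E:psiM+}) as $i=0$, $i=1$, and $i\geq 2$, and reindex the last group by $i\mapsto i+1$.

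Next we pull the subtraction of $d$ inside the minimum and use $(\min\{x,y\}-d)^+=\min\{(x-d)^+,(y-d)^+\}$. The first two terms become $\dot m_{0j}$ as defined in (\ref{E:MUpdateCaseA}). Each remaining term is $(\dot q_i+m_{i+1,j+1})^+=\dot q_i+\dot m_{ij}$ for $i\geq1$, since these are nonnegative. Because $\dot q_0=0$, the first group is $\dot q_0+\dot m_{0j}$. Therefore $\dot\psi_j(\bs{\dot q})=\min_{i\in\mb{N}}(\dot q_i+\dot m_{ij})=[\bs{\dot q}\otimes\dot M]_j$. Infinite values of $\dot q_i$ or $m_{ij}$ cause no trouble in this manipulation, because all quantities lie in $\mb{N}^+$.

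It remains to check that $\dot M$ is a cumulative matrix in the sense of (\ref{E:CMDefEntries}).

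\textbf{Monotonicity in $j$.} This is inherited directly from $m_{i,j+1}\leq m_{i,j+2}$ together with the monotonicity of $\min$ and $(\cdot)^+$.

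\textbf{Vanishing for $i\geq j$.} This is the main obstacle, because the shift breaks it at the diagonal.
For $i>0$ and $i\geq j$ we have $i+1\geq j+1$, so $\dot m_{ij}=m_{i+1,j+1}=0$. For $i=j=0$ we need $(\min\{m_{01},q+m_{11}\}-d)^+=0$. Since $m_{11}=0$, this quantity is $(\min\{m_{01},q\}-d)^+$. Here we use the hypothesis $d\geq p$. By (\ref{E:psiM+}), $p=\max_{\bs q\in\mb{U}|q}\psi^M_1(\bs q)=\max_{\bs q\in\mb{U}|q}\min\{m_{01},q\}=\min\{m_{01},q\}$, using $q_0=0$ and $q_1=q$. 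Hence $\min\{m_{01},q\}-d\leq 0$, so the entry vanishes.

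This shows that $\dot M$ is a cumulative matrix, which together with the identity above gives $\bs{\dot\psi}=\bs\psi^{\dot M}$. That $\bs{\dot\psi}$ is a worst-case service is already part of Theorem~\ref{T:WCSUpdate}, so it is not needed separately.
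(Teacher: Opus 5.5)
Your proposal is correct and follows essentially the same route as the paper. You check that $\dot M$ is a cumulative matrix, with $\dot m_{00}=0$ coming from $d\ge p=\min\{m_{01},q\}$. You then expand $(\psi^M_{j+1}(\bs{q})-d)^+$ by splitting off the $i=0,1$ terms, reindexing with $q_{i+1}=\dot q_i+d$, and distributing the $-d$ through the minimum. The only difference is that you prove the min-plus identity before the cumulative-matrix check, which is immaterial.
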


We can rewrite this update rule in terms of spectral matrices. Notice that if we treat a spectral matrix, $S$, as a general cumulative matrix and then use (\ref{E:MUpdateCases}) to find $\dot{M}$, the resulting $\dot{M}$ may not be a spectral matrix, but we can always use (\ref{E:MtoSCases}) to turn it into $\dot{S}=[\dot{s}_{ij}]_{i,j\in\mb{N}}$. This is straightforward conceptually, but does require a bit of derivation. Alternatively, once update invariance has been established, according to Theorem~\ref{T:SSig}, $\dot{s}_{ij} = \lambda_{ij}(\bs\psi^{\dot{S}})$. So, for all $i,j\in\mb{N}$, (\ref{E:SigUpdateCases}), via (\ref{E:SCondSigCases}), yields
\begin{subnumcases}{\label{E:SUpdateCases}
	\dot{s}_{ij}=}
		(\hat{s}_{0, j+1}-d)^+ =(\min \{s_{0, j+1}, q+s_{1, j+1}\}-d)^+ & \textup{if  $i=0$,}\label{E:SUpdateCaseA}\\
		\hat{s}_{i+1, j+1} =\min \{(s_{0, j+1}-q)^+, s_{i+1, j+1}\} & \textup{if $i>0$.}\label{E:SUpdateCaseB}
\end{subnumcases}

\section{Dual-Curve Services}\label{S:DVS}

In $M$'s update rule, (\ref{E:MUpdateCaseB}) is especially simple. It implies that, if $m_{ij} = m_{i+1, j+1}$ for all $i > 0$, then $\dot{m}_{ij} = m_{ij}$ for all $i > 0$. So, in this case, while the $0$th row of $M$ remains a dynamic cumulative vector, all subsequent rows are static, right-shifted versions of one another. It follows that efficiency can be further improved by simply compressing these static rows into a single, static cumulative vector. It is this observation that motivates the introduction of dual-curve services. In the remainder of this section, we first define dual-curve services and relate them to service curves. We then investigate dual-curve systems and explore their connections to EDF scheduling. We conclude the section, and the paper, by revisiting Example \ref{Ex:appetizer}.

\subsection{Definition and Relation to Service Curves}\label{SS:VPS}

\begin{definition}\label{D:DVS}
	Given a pair of cumulative vectors, \mbox{$\bs{u}, \bs{v} \in \mb{U}$}, construct a cumulative matrix, $M^{(\bs{u}, \bs{v})} = [m_{ij}^{(\bs{u}, \bs{v})}]_{i,j\in\mb{N}}$, such that
	\begin{subnumcases}{\label{E:rsCases}
		m_{ij}^{(\bs{u}, \bs{v})}:=}
			u_j             & \textup{if $i=0$,}\label{E:rsCaseA}\\
			v_{(j-i)^+}     & \textup{if $i>0$.}\label{E:rsCaseB}
	\end{subnumcases}
	The min-plus service identified by $M^{(\bs{u}, \bs{v})}$, $\bs\psi^{M^{(\bs{u}, \bs{v})}}$, is called the \textbf{dual-curve service} identified by $(\bs{u}, \bs{v})$ and denoted by~$\bs\psi^{(\bs{u}, \bs{v})}$.
\end{definition}

Most dual-curve service properties are straightforward specializations of the analogous min-plus service properties. For all $\bs{q} \in \mb{U} \ua b$ and $j\in\mb{N}$, using (\ref{E:psiM+}) and (\ref{E:rsCases}), we have
\begin{equation}\label{E:psiuv}
	\psi_j^{(\bs{u}, \bs{v})}(\bs{q}) = \min_{i \leq j} (q_i+m_{ij}^{(\bs{u}, \bs{v})}) = \min \left\{u_j, \min_{1 \leq i \leq j} (q_i+v_{j-i}) \right\}.
\end{equation}
Additionally, dual-curve services are update invariant. In Theorem~\ref{T:MUpdate}, if $M$ can be identified by $(\bs{u}, \bs{v})$, that is, $M = M^{(\bs{u}, \bs{v})}$, using (\ref{E:MUpdateCases}) and (\ref{E:rsCases}), it is easy to verify that $\dot{M}$ can be identified by $(\bs{\dot{u}}, \bs{v})$, that is, $\dot{M} = M^{(\bs{\dot{u}}, \bs{v})}$, so that $\bs{v}$, being static, remains the same, and $\bs{u}$, being dynamic, is updated to $\bs{\dot{u}}=[\dot{u}_j]_{j\in\mb{N}} \in \mb{U}$, with
\begin{equation}\label{E:uUpdateEntries}
	\text{(a)}\blank \dot{u}_j = (\hat{u}_{j+1}-d)^+, \blank\text{or}\blank
	\text{(b)}\blank \bs{\dot{u}} = \mc{R}^{-1}(\bs{\hat{u}}-d\bs\delta)^+,
\end{equation}
where
\begin{equation}\label{E:uhatEntries}
	\text{(a)}\blank \hat{u}_{j+1} := \min\{u_{j+1}, q+v_j\},
	\blank\text{or}\blank
	\text{(b)}\blank \bs{\hat{u}} := \min\{\bs{u}, q\bs\delta+\mc{R}\bs{v}\}.
\end{equation}

Consider next the spectra and conditional spectra of dual-curve services. For all $i, j \in \mb{N}$, denote $\lambda_{ij}(\bs\psi^{(\bs{u}, \bs{v})})$ by $\lambda_{ij}^{(\bs{u}, \bs{v})}$ and  $\lambda_{ij}(\bs\psi^{(\bs{u}, \bs{v})}|q)$ by $\hat\lambda_{ij}^{(\bs{u}, \bs{v})}$. So, using (\ref{E:SSig}), (\ref{E:MtoSCases}), and (\ref{E:rsCases}), we have
\begin{subnumcases}{\label{E:rsMtoSCases}
	\lambda_{ij}^{(\bs{u}, \bs{v})} = s_{ij}^{(\bs{u}, \bs{v})} =} 
		m_{0j}^{(\bs{u}, \bs{v})} = u_j                                        & \text{if $i=0$,}\label{E:rsMtoSCaseA}\\
		\min \left\{(m_{0j}^{(\bs{u}, \bs{v})}-b)^+, \min_{1 \leq k \leq i} m_{kj}^{(\bs{u}, \bs{v})}\right\} = \min \{(u_j-b)^+, v_{(j-i)^+}\} & \text{if $i>0$.}\label{E:rsMtoSCaseB}
\end{subnumcases}
Using (\ref{E:SCondSigCases}), (\ref{E:rsMtoSCases}), (\ref{E:uhatEntries}), and the fact that $q \geq b$, we also have
\begin{subnumcases}{\label{E:rsMCondSigCases}
	\hat\lambda_{ij}^{(\bs{u}, \bs{v})} = \hat{s}_{ij}^{(\bs{u}, \bs{v})} =}
		\min\{s_{0j}^{(\bs{u}, \bs{v})}, q+s_{1j}^{(\bs{u}, \bs{v})}\} =\min\{u_j, q+v_{(j-1)^+}\} =\hat{u}_j                       & \text{if $i=0$,}\label{E:rsMCondSigCaseA}\\
		\min\{(s_{0j}^{(\bs{u}, \bs{v})}-q)^+, s_{ij}^{(\bs{u}, \bs{v})}\} =\min \{(u_j-q)^+, v_{(j-i)^+}\} & \text{if $i>0$.}\label{E:rsMCondSigCaseB}
\end{subnumcases}

\subsubsection*{\textup{[}Relation to Service Curves\textup{]}} If $\bs{u} = \bs{v}$, according to (\ref{E:psiuv}), $\psi_j^{(\bs{v}, \bs{v})}(\bs{q}) = \min_{i \leq j} (q_i+v_{j-i})$. This special case recovers the well-known min-plus convolution defining service curves.\footnote{Like service curves, dual-curve services are composable, that is, multiple dual-curve services in series can be modeled by a single dual-curve service. For their composition rule, see Appendix \ref{A:composition}.} But, according to \cite{Cruz:1995}, to specify service using service curves, it is necessary that $b=0$. So, only in the case that $b=0$ is $\bs\psi^{(\bs{v}, \bs{v})}$ equivalent to the service curve specified by $\bs{v}$. As the relations $\bs{u} = \bs{v}$ and $b=0$ are not preserved by, respectively, (\ref{E:uUpdateEntries}) and (\ref{E:bUpdate}), service curves are not update invariant. Accordingly, by introducing dual-curve services, we are only extending service curves to their dynamic closure, that is, we are only allowing for cases where $\bs{u} \neq \bs{v}$ or $b>0$. But this extension is essential to state-based scheduling because we cannot update a state to something yet undefined.

Comprehensive introductions to service curves can be found in \cite{Chang:2000, Boudec:2001, Bouillard:2018}. As special cases of dual-curve services, their well-documented versatility makes clear the versatility of dual-curve services, and by extension, min-plus, and general worst-case, services.

\subsection{Dual-Curve Systems and EDF Scheduling}

A {\it \textbf{dual-curve system}} is a worst-case system in which every worst-case service is a dual-curve service. It is update invariant because its constituent dual-curve services are update invariant. For all $\omega\in\Omega$, let the dual-curve service guaranteed to flow~$\omega$ be identified by $(\bs{u}^\omega, \bs{v}^\omega)$, and denote the resulting dual-curve system by $(\bs{u}^{[\Omega]}, \bs{v}^{[\Omega]})$. Then, using (\ref{E:rsMtoSCases}), and Theorem~\ref{T:schedulability}, it is easy to verify that $(\bs{u}^{[\Omega]}, \bs{v}^{[\Omega]})$ is schedulable if and only if
\begin{equation}\label{E:uvschedulability}
	 \forall j\in\mb{N}, \blank\blank \max\left\{u_j^{\l\Omega\r}, \sum_{\omega\in\Omega} \min \left\{(u_\infty^\omega-b^\omega)^+, v_j^\omega\right\}\right\} \leq jc.
\end{equation}
In the case that $u_\infty^\omega = \infty$ for all $\omega\in\Omega$, (\ref{E:uvschedulability}) reduces to
\begin{equation}\label{E:uvschedulability+}
	\max\{\bs{u}^{\l\Omega\r}, \bs{v}^{\l\Omega\r}\} \leq \bs{c} = [c_j]_{j\in\mb{N}} := [jc]_{j\in\mb{N}}.
\end{equation}

According to (\ref{E:rsMCondSigCases}) and (\ref{E:uhatEntries})-(a), $\hat\lambda_{0,j+1}^{(\bs{u}, \bs{v})} = \hat{u}_{j+1}$ and $\hat\lambda_{1,j+1}^{(\bs{u}, \bs{v})} = \min\{(u_{j+1}-q)^+, v_j\} = (\hat{u}_{j+1}-q)^+$, which corroborate (\ref{E:CondSigPropE}). So, for all $\Gamma \subseteq \Omega$, using (\ref{E:beta}), we have
\begin{equation}\label{E:betaDVS}
	\beta(\Gamma) = \max_{j\in\mb{N}} \left(\sum_{\omega\in\Gamma}\hat\lambda_{0, j+1}^{(\bs{u}^\omega, \bs{v}^\omega)}+\sum_{\omega\in\overline\Gamma}\hat\lambda_{1, j+1}^{(\bs{u}^\omega, \bs{v}^\omega)}-c_{1, j+1}\right)	= \max_{j \in \mb{N}} \left(
	\hat{u}_{j+1}^{\l\Gamma\r} + \sum_{\omega\in\overline\Gamma} (\hat{u}_{j+1}^\omega-q^\omega)^+ - jc
	\right).
\end{equation}
Using (\ref{E:p+}), (\ref{E:rsMCondSigCaseA}), and (\ref{E:uhatEntries}), we also have
\begin{equation}\label{E:puvEntries}
	\text{(a)}\blank p_j^{(\bs{u}, \bs{v})} = \min \{\hat{u}_j, q\} = \min \{u_j, q\},
	\blank\text{or}\blank
	\text{(b)}\blank \bs{p}^{(\bs{u}, \bs{v})} = \min \{\bs{\hat{u}}, q\bs\delta\} = \min \{\bs{u}, q\bs\delta\}.
\end{equation}
These specializations enable us to use Theorems~\ref{T:polytope} and \ref{T:FP} to determine the feasible polytopes and permutohedra of dual-curve systems, and to use Definitions~\ref{D:MSS} and \ref{D:HMSS} to identify the systems' max-slack, and per-class max-slack, schedules.

Because of their efficiency, dual-curve systems approach near practical viability. So we expect state-based schedulers, when implemented, will first be implemented for such systems. Despite their efficiency, it is still impossible to implement state-based schedulers for dual-curve systems in their most general form because, well, countably infinite is still infinite. To further reduce the curves' dimensionality, we can make both $\bs{u}$ and $\bs{v}$ piecewise-linear. This piecewise-linear property is update invariant because, according to (\ref{E:uUpdateEntries}) and (\ref{E:uhatEntries}), if both $\bs{u}$ and $\bs{v}$ are piecewise-linear, so is $\bs{\dot{u}}$.

\subsubsection*{\textup{[}EDF Scheduling\textup{]}} The premise of EDF scheduling is that each task can be assigned a deadline upon arrival. Unfortunately, this premise does not hold for general worst-case systems. According to (\ref{E:tau}), to guarantee $\bs{d} \geq \bs\psi(\bs{q})$, the $h$th task queued in the buffer must be served no later than slot $t+\tau_h(\bs\psi(\bs{q}))$, $h=1,2,3,\dots,q$. The problem is that, since the scheduler is causal, it cannot foresee which $\bs{q} \in \mb{U}|q$ will be realized, and a different $\bs{q}$ may lead to a different $\tau_h(\bs\psi(\bs{q}))$. However, this is not a problem for dual-curve systems because, using (\ref{E:tau}), it is easy to verify that
\begin{equation}\label{E:tauDeadline}
	\forall	\bs{q} \in \mb{U}|q, \blank\blank \tau_h(\bs\psi^{(\bs{u}, \bs{v})}(\bs{q})) = \tau_h(\bs{p}^{(\bs{u}, \bs{v})}), \blank\blank h=1,2,3,\dots,q,
\end{equation}
by observing that, for all $\bs{q} \in \mb{U}|q$ and $j \in \mb{N}$, as $q_i \geq q$ if $1\leq i \leq j$, according to (\ref{E:psiuv}) and (\ref{E:puvEntries})-(a), $\min\{\psi_j^{(\bs{u}, \bs{v})}(\bs{q}), q\} = \min \{u_j, q\} = p_j^{(\bs{u}, \bs{v})}$. Then, by assigning the $h$th task deadline $t+\tau_h(\bs{p}^{(\bs{u}, \bs{v})})$, EDF scheduling can be applied to dual-curve systems. Moreover, when so applied, EDF scheduling results in max-slack schedules. To see this, notice that, for each $\omega\in\Omega$, using (\ref{E:MSSDefEntries}) and (\ref{E:tau}), we have
\begin{equation}\label{E:tauMSSEntries}
	\text{(a)}\blank h^\omega \leq e_\mu^\omega \text{ if } \tau_{h^\omega}(\bs{p}^\omega) < j_\mu,
	\blank\text{and}\blank
	\text{(b)}\blank h^\omega > e_\mu^\omega \text{ if } \tau_{h^\omega}(\bs{p}^\omega) > j_\mu,
	\blank\blank h^\omega = 1, 2, 3,\ldots,q^\omega.
\end{equation}
That is to say, when $e_\mu^\omega$ tasks from flow $\omega$ are served, the $h^\omega$th task is served if $\tau_{h^\omega}(\bs{p}^\omega) < j_\mu$, but is not served if $\tau_{h^\omega}(\bs{p}^\omega) > j_\mu$. As $j_\mu$ is constant across all flows, this implies that $e_\mu^{[\Omega]}$ must serve tasks in non-decreasing order of their respective $\tau_{h^\omega}(\bs{p}^\omega)$ values, which is exactly the service order dictated by EDF scheduling. 

So, for a dual-curve system, through (\ref{E:tauDeadline}) and (\ref{E:tauMSSEntries}), an EDF scheduler automatically generates max-slack schedules. According to the reasoning that follows Theorem \ref{T:MSS}, if any schedule is feasible, so is the work-conserving max-slack schedule. Thus, if any scheduler can guarantee a dual-curve system, so can the work-conserving EDF scheduler. This optimality of EDF scheduling is well-known in other contexts \cite{Georgiadis:1997, Stankovic:1998}. A corollary is that service curves, as special cases of dual-curve services, can be guaranteed by the work-conserving EDF scheduler. This is exactly what was proposed in \cite{Sariowan:1999}. Interestingly, it was realized in \cite{Stoica:2000} that, from time to time, the EDF scheduler need not be work-conserving. In our terminology, \cite{Stoica:2000} proposed to set the EDF scheduler's $\mu = \beta(\Omega)$, instead of $\min\{c, q^{\l\Omega\r}\}$, so that excess capacity, $c-\beta(\Omega)$, can be freely allocated to improve fairness.

\subsection{Revisiting Example \ref{Ex:appetizer}}

\begin{example}\label{Ex:EDF}
	In slot~$t$, the services guaranteed to flows~$1$ and $2$ in Example \ref{Ex:appetizer} can be modeled by dual-curve services $(b_0^1\mc{R}^{98}\bs\delta, \bs{0})$ and $(b_0^2\mc{R}^{99}\bs\delta, \bs{0})$, respectively, where $b_0^1=b_0^2=50c$ and $\bs{0}$ is the all-zero vector.\footnote{Consider flow~$1$. For all $\bs{q}^1 \in \mb{U} \ua b_0^1$, $\bs{q}^1 \geq \bs{u}^1 = b_0^1\mc{R}^{98}\bs\delta$, so it is immediate from (\ref{E:psiuv}) that $\bs\psi^{(\bs{u}^1, \bs{0})}(\bs{q}^1) = \bs{u}^1 = b_0^1\mc{R}^{98}\bs\delta$, that is, the $b_0^1$ tasks from flow~$1$ must be served before slot $t+99$. The same reasoning applies to flow~$2$.} Similarly, in slot $t+i$, the services can be modeled by $(b_i^1\mc{R}^{98-i}\bs\delta, \bs{0})$ and $(b_i^2\mc{R}^{99-i}\bs\delta, \bs{0})$, $i=0,1,2,\dots,98$. So, using (\ref{E:betaDVS}), the baseline function in slot $t+i$, $\beta_i$, takes the following values:\footnote{In our case, as clearly $u_\infty^{[\Omega]} \leq q^{[\Omega]}$, according to (\ref{E:uhatEntries})-(a), $\hat{u}_{j+1}^{[\Omega]} = u_{j+1}^{[\Omega]}$, so according to (\ref{E:betaDVS}), $\beta(\Gamma) = \max_{j\in\mb{N}}(u_{j+1}^{\l\Gamma\r}-jc)$.}
	\[
		\beta_i(\{1\}) = \max_{j=0, 98-i} \{[b_i^1\mc{R}^{98-i}\bs\delta]_{j+1} - jc\} = (b_i^1+ic-98c)^+,	
	\]
	\[
		\beta_i(\{2\}) = \max_{j=0, 99-i}  \{[b_i^2\mc{R}^{99-i}\bs\delta]_{j+1} - jc\} = (b_i^2+ic-99c)^+,
	\]
	\[
		\beta_i(\{1, 2\}) = \max_{j=0, 98-i, 99-i} \{[b_i^1\mc{R}^{98-i}\bs\delta]_{j+1} + [b_i^2\mc{R}^{99-i}\bs\delta]_{j+1} - jc\} = \max\{(b_i^1+ic-98c)^+, b_i^1+b_i^2+ic-99c\}.
	\]
	Denoting $\beta_i(\{1\})$, $\beta_i(\{2\})$, and $\beta_i(\{1,2\})$ by $\beta_i^1$, $\beta_i^2$, and $\beta_i^{12}$ respectively, they can be reduced to
	\begin{equation}\label{E:beta12}
		\textup{(a)}\blank \beta_i^1 = (2c-b_i^2)^+,
		\blank
		\textup{(b)}\blank \beta_i^2 = (c-b_i^1)^+,
		\blank\textup{and}\blank
		\textup{(c)}\blank \beta_i^{12} = c,
		\blank\blank i=0,1,2,\ldots,98,
	\end{equation}
	by observing that the following relations hold:\footnote{If (\ref{E:b12}) holds, $b_i^1+ic-98c = 2c-b_i^2 \leq c$, $b_i^2+ic-99c = c-b_i^1$, and $b_i^1+b_i^2+ic-99c = c$, so (\ref{E:beta12}) follows.}
	\begin{equation}\label{E:b12}
		\textup{(a)}\blank b_i^1 \leq (99 - i)c,
		\blank
		\textup{(b)}\blank b_i^2 \geq c,
		\blank\textup{and}\blank
		\textup{(c)}\blank b_i^1+b_i^2 = (100 - i)c, 
		\blank\blank i=0,1,2,\ldots,99.\footnote{Notice that the schedulability condition, (\ref{E:uvschedulability}), only requires that $b_i^1 \leq (99 - i)c$ and $b_i^1+b_i^2 \leq (100 - i)c$, so (\ref{E:b12}) essentially says that the states of our system are confined to the {\it Pareto} frontier in Fig. \ref{F:StatePath}.}
	\end{equation}
	To see why, suppose that they hold for $i$, and consider a feasible schedule in slot $t+i$, $[d^1(i), d^2(i)]$. According to (\ref{E:capacity}), (\ref{E:beta}), and (\ref{E:beta12})-(c), $c \geq d^1(i)+d^2(i) \geq \beta_i^{12} = c$, so $d^1(i)+d^2(i) = c$, that is, the server must be work-conserving. Then, as $b_{i+1}^1 = b_i^1-d^1(i)$ and $b_{i+1}^2 = b_i^2-d^2(i)$, according to (\ref{E:b12})-(c), $b_{i+1}^1+b_{i+2}^2 = (99 - i)c$. Moreover, as $b_{i+1}^2 = b_i^2-d^2(i) = b_i^2 - c + d^1(i)$, according to (\ref{E:beta}) and (\ref{E:beta12})-(a), $b_{i+1}^2 \geq b_i^2 - c + \beta_i^1 \geq c$, implying that $b_{i+1}^1 \leq (98 - i)c$. So, if (\ref{E:b12}) holds for $i$, it holds for $i+1$. Clearly (\ref{E:b12}) holds for $i=0$, so by induction, it holds for $i=0,1,2,\ldots,99$.
	
	According to (\ref{E:b12}), $b_{99}^1=0$ and $b_{99}^2=c$, so in slot $t+99$, a state-based scheduler has no choice but to serve $c$~tasks from flow~$2$. During interval $[t, t+99)$, the scheduler's flexibility is completely characterized by (\ref{E:beta12}). Specifically, in slot $t+i$, $d^1(i)+d^2(i) = c$, and according to (\ref{E:beta}), $d^1(i) \geq \beta_i^1$ and $d^2(i) \geq \beta_i^2$, so the feasible polytope is a line segment.\footnote{In general, the feasible polytope for a two-flow system is a polygon, such as the hexagon, $ABCDEF$, in Fig. \ref{F:FeasibleRegion}. But, in our case, since the server must be work-conserving, the polygon reduces to a line segment.} One endpoint, $[\beta_i^1, c-\beta_i^1]$, prioritizes flow~$2$, while the other, $[c-\beta_i^2, \beta_i^2]$, prioritizes flow~$1$. To fine-tune the balance, any point along the line segment can be selected. For instance, to ensure fairness, the midpoint, $[c+\beta_i^1-\beta_i^2, c-\beta_i^1+\beta_i^2]/2$, can be selected. If the midpoint is consistently selected, using (\ref{E:beta12}), it is easy to verify, by induction, that for $i = 0, 1, 2, \ldots, 96$, $b_i^1 = b_i^2 = (100-i)c/2$, $\beta_i^1 = \beta_i^2 = 0$, and $d^1(i) = d^2(i) = c/2$. This implies that $b_{97}^1 = b_{97}^2 = 3c/2$, $\beta_{97}^1 = c/2$, $\beta_{97}^2 = 0$, $d^1(97) = 3c/4$, and $d^2(97) = c/4$, which in turn implies that $b_{98}^1 = 3c/4$, $b_{98}^2 = 5c/4$, $\beta_{98}^1 = 3c/4$, $\beta_{98}^2 = c/4$, $d^1(98) = 3c/4$, and $d^2(98) = c/4$. That is to say, the server's capacity is split half-and-half during interval $[t, t+97)$, and at a $3:1$ ratio in slots $t+97$ and $t+98$. In contrast, an EDF scheduler has no choice but to prioritize flow~$1$ over flow~$2$. Notice that \cite{Stoica:2000}'s proposal is of no help in this case because the server must be work-conserving.
\end{example}

This example has simply elaborated on the smart scheduler's flexibility as outlined in Example~\ref{Ex:appetizer}. So, at least in this case, all the formalism may seem to be overkill. However, in more general cases, for instance, in the case of general piecewise-linear dual-curve systems, characterizing and exploiting the scheduler's flexibility cannot be so easy, which is why state-based scheduling is introduced.

\newpage
\bibliographystyle{ACM-Reference-Format}
\bibliography{ReferencesACM}

\newpage
\numberwithin{equation}{section}
\numberwithin{figure}{section}
\appendix

\section{Proofs for Section \ref{S:WCS}}\label{A:WCSProof}

\begin{proof}[Proof of Theorem~\ref{T:WCSUpdate}]
	The necessity of (\ref{E:WCSUpdateB}) follows directly from our argument leading to this theorem. So we need only show its sufficiency. As $d \geq p$, (\ref{E:WCSImmediate}) implies that $(\bs\psi(\bs{q})-d\bs\delta)^+ \in \mb{U}|0$ for all $\bs{q} \in \mb{U}|q$, justifying the use of $\mc{R}^{-1}$ in (\ref{E:WCSUpdateB}). Now, if (\ref{E:WCSUpdateB}) holds, for all $\bs{q} \in \mb{U}|q$, (\ref{E:vecdUpdate}) implies that
	\[
	\bs{d} = \mc{R}\bs{\dot{d}}+d\bs\delta \geq (\bs\psi(\bs{q})-d\bs\delta)^+ +d\bs\delta \geq \bs\psi(\bs{q}).
	\]
	So $\bs\psi$ is guaranteed.	It remains to show that $\bs{\dot\psi}$ is a worst-case service. For all $\bs{\dot{q}} \in \mb{U}\ua\dot{b}$ and $j>0$, using (\ref{E:WCSUpdateB}), (\ref{E:WCSDef}), and (\ref{E:qUpdateCaseB}), we have
	\[
	\dot\psi_j(\bs{\dot{q}}) = (\psi_{j+1}(\bs{q})-d)^+ \leq (q_{j+1}-d)^+ = \dot{q}_j^+ = \dot{q}_j,
	\]
	where the final equality holds because, according to (\ref{E:bUpdate}), \mbox{$d \leq q$} guarantees that $\dot{b} \geq 0$, and thus $\dot{q}_j \geq 0$. It follows that $\bs{\dot\psi}(\bs{\dot{q}}) \leq \bs{\dot{q}}$. So, according to Definition~\ref{D:WCS}, $\bs{\dot\psi}$ is indeed a worst-case service.
\end{proof}

\begin{proof}[Proof of Theorem~\ref{T:SigProp}]
	First, if $i \geq j$, since $\psi_j(\bs{q}) \leq \psi_i(\bs{q}) \leq q_i$, according to (\ref{E:signature}), $\lambda_{ij} = 0$. Second, since $\psi_j(\bs{q}) \leq \psi_{j+1}(\bs{q})$ and $q_i \leq q_{i+1}$, according to (\ref{E:signature}), $\lambda_{ij} \leq \lambda_{i, j+1}$ and $\lambda_{ij} \geq \lambda_{i+1, j}$. Finally, if $i>0$, since $q_i \geq b$ for all $\bs{q}\in\mb{U}\ua b$, using (\ref{E:signature}), we have
	\[
	\lambda_{ij} \leq \max_{\bs{q} \in \mb{U} \ua b} (\psi_j(\bs{q})-b)^+ = \left(\max_{\bs{q} \in \mb{U} \ua b} \psi_j(\bs{q})-b\right)^+ = (\lambda_{0j}-b)^+.
	\]
\end{proof}

\begin{proof}[Proof of Theorem~\ref{T:SigUpdate}]
	This theorem follows directly from our argument leading to it.
\end{proof}

\begin{proof}[Proof of Theorem~\ref{T:CondSigProp}]
	First, if $i \geq j$, since $\psi_j(\bs{q}) \leq \psi_i(\bs{q}) \leq q_i$, according to (\ref{E:CondSig}), $\hat\lambda_{ij} = 0$. Second, since $\psi_j(\bs{q}) \leq \psi_{j+1}(\bs{q})$ and $q_i \leq q_{i+1}$, according to (\ref{E:CondSig}), $\hat\lambda_{ij} \leq \hat\lambda_{i, j+1}$ and $\hat\lambda_{ij} \geq \hat\lambda_{i+1, j}$. Finally, if $i>0$, since $q_i \geq q$ for all $\bs{q}\in\mb{U}|q$, using (\ref{E:CondSig}), we have
	\[
	\hat\lambda_{ij} \leq \max_{\bs{q} \in \mb{U}|q} (\psi_j(\bs{q})-q)^+ = \left(\max_{\bs{q} \in \mb{U}|q} \psi_j(\bs{q})-q\right)^+ = (\hat\lambda_{0j}-q)^+.
	\]
\end{proof}

\section{A Proof for Section \ref{S:schedulability}}\label{A:schedulabilityProof}

\begin{proof}[Proof of Theorem~\ref{T:schedulability}]
	From our argument leading to this theorem, if (\ref{E:schedulability}) is not satisfied, $\bs\psi^{[\Omega]}$ cannot be guaranteed. So, according to Definition~\ref{D:schedulability}, we need only show that, if (\ref{E:schedulability}) is satisfied, that is, if $\bs\psi^{[\Omega]}$ is schedulable, given any $a^{[\Omega]}$, at least one feasible schedule exists. This will be established by Theorem~\ref{T:FP}. Given that (\ref{E:schedulability}) is the schedulability condition, (\ref{E:schedulability+}) follows directly from Definition~\ref{D:schedulability}. However, comparing the feasible schedule requirements to those of Definition~\ref{D:schedulability}, it is immediate that the requirement that $d^{[\Omega]} \geq p^{[\Omega]}$ has been dropped. This is because $d^{[\Omega]} \geq p^{[\Omega]}$ is implied by (\ref{E:schedulability+}). To see this, notice that, according to (\ref{E:schedulability+}), $\dot\lambda_{00}^{\l\Omega\r} \leq \dot{c}_{00} = 0$. So, according to (\ref{E:SigUpdateCaseA}) and (\ref{E:p-lambda}),
	\[
	\dot\lambda_{00}^{\l\Omega\r} = \sum_{\omega\in\Omega} (\hat\lambda_{01}^\omega-d^\omega)^+ = \sum_{\omega\in\Omega} (p^\omega-d^\omega)^+ \leq 0,
	\]
	implying that $d^{[\Omega]} \geq p^{[\Omega]}$.
\end{proof}

\section{Proofs for Section \ref{S:feasible}}\label{A:feasibleProof}

\begin{proof}[Proof of Theorem~\ref{T:polytope}]
	The necessity of this condition follows directly from (\ref{E:beta})'s derivation. To establish its sufficiency, according to Theorem~\ref{T:schedulability}, we need only show that this condition ensures that $\dot\lambda_{ij}^{\l\Omega\r} \leq c_{i+1, j+1}$ for all $i,j \in \mb{N}$. In the case that $i>0$, this holds by default because, if $\bs\psi^{[\Omega]}$ is schedulable, according to (\ref{E:SigUpdateCaseB}), (\ref{E:CondSigPropF}), and (\ref{E:schedulability}), $\dot\lambda_{ij}^{\l\Omega\r} = \hat\lambda_{i+1, j+1}^{\l\Omega\r} \leq  \lambda_{i+1, j+1}^{\l\Omega\r} \leq c_{i+1, j+1}$. In the case that $i=0$, on the one hand, using (\ref{E:SigUpdateCaseA}), and applying the fact that $\max\{{x, y\}}+z = \max\{x+z, y+z\}$ repeatedly, we have
	\[
	\dot\lambda_{0j}^{\l\Omega\r} = \sum_{\omega\in\Omega} \max\{\hat\lambda_{0, j+1}^\omega-d^\omega, 0\} = \max_{\Gamma\subseteq\Omega} (\hat\lambda_{0, j+1}^{\l\Gamma\r}-d^{\l\Gamma\r}).
	\]
	On the other hand, if (\ref{E:beta}) holds, for all $\Gamma\subseteq\Omega$, as $\hat\lambda_{ij}$ is non-negative by definition, we have
	\[
	d^{\l\Gamma\r} \geq \beta(\Gamma) \geq \hat\lambda_{0, j+1}^{\l\Gamma\r} + \hat\lambda_{1, j+1}^{\l\overline\Gamma\r}-c_{1, j+1} \geq \hat\lambda_{0, j+1}^{\l\Gamma\r}-c_{1, j+1}.
	\]
	It follows that
	\[
	\dot\lambda_{0j}^{\l\Omega\r} = \max_{\Gamma\subseteq\Omega} (\hat\lambda_{0, j+1}^{\l\Gamma\r}-d^{\l\Gamma\r}) \leq c_{1, j+1}.
	\]
	So $\dot\lambda_{ij}^{\l\Omega\r} \leq c_{i+1, j+1}$ for all $i,j \in \mb{N}$.
\end{proof}

\begin{proof}[Proof of Theorem~\ref{T:betaProp}]
	To prove this theorem, alternative formulations of $\beta$ are useful. To derive them, notice that, for all $j\in\mb{N}$, using (\ref{E:p+}) and (\ref{E:CondSigPropE}), we have
	\begin{equation}\label{E:p}
		p_j = \min\{\hat\lambda_{0j}, q\} = \hat\lambda_{0j} - (\hat\lambda_{0j}-q)^+ = \hat\lambda_{0j} - \hat\lambda_{1j}.
	\end{equation}
	It is then immediate from (\ref{E:beta}) that, for all $\Gamma\subseteq\Omega$,
	\begin{equation}\label{E:beta+}
		\beta(\Gamma) = \max_{j \in \mb{N}} (p_{j+1}^{\l\Gamma\r} + \hat\lambda_{1, j+1}^{\l\Omega\r} - c_{1, j+1}),
	\end{equation}
	so
	\begin{equation}\label{E:beta++}
		\beta(\Gamma) = p_{j_\beta^\Gamma+1}^{\l\Gamma\r} + \hat\lambda_{1, j_\beta^\Gamma+1}^{\l\Omega\r} -c_{1, j_\beta^\Gamma+1},
	\end{equation}
	where
	\begin{equation}\label{E:jbeta}
		j_\beta^\Gamma := \argmax_{j \in \mb{N}} (p_{j+1}^{\l\Gamma\r} + \hat\lambda_{1, j+1}^{\l\Omega\r} - c_{1, j+1}),
	\end{equation}
	with the understanding that, whenever comparisons result in ties, $j_\beta^\Gamma$ should be minimized.
	
	To show that $\beta$ is supermodular, according to Definition \ref{D:supermodular},  we need to show that $\beta:2^\Omega \rightarrow \mb{N}$ satisfies that
	\begin{equation}\label{E:betaPropEntries}
		\textup{(a)}\blank \beta(\phi) = 0,
		\blank\textup{and}\blank
		\textup{(b)}\blank  \forall \Gamma,\Gamma'\subseteq\Omega, ~\beta(\Gamma)+\beta(\Gamma') \leq \beta(\Gamma+\Gamma')+\beta(\Gamma\Gamma').
	\end{equation}
	By definition, $\beta$ can only take integer values, so we need only show it to be non-negative to restrict its range to $\mb{N}$. For all $\Gamma\subseteq\Omega$, using (\ref{E:beta}), and the facts that $\hat\lambda_{ij} \geq 0$ and $c_{11}=0$, we have
	\[
	\beta(\Gamma) \geq (\hat\lambda_{0, j+1}^{\l\Gamma\r} + \hat\lambda_{1, j+1}^{\l\overline\Gamma\r}-c_{1, j+1})|_{j=0} = \hat\lambda_{01}^{\l\Gamma\r} + \hat\lambda_{11}^{\l\overline\Gamma\r} -c_{11} \geq 0.
	\]
	So $\beta$ is indeed non-negative. It follows that $\beta(\phi) \geq 0$. But, if $\bs\psi^{[\Omega]}$ is schedulable, using (\ref{E:beta++}), (\ref{E:CondSigPropF}), and (\ref{E:schedulability}), we also have
	\[
	\beta(\phi) = \hat\lambda_{1, j_\beta^\phi+1}^{\l\Omega\r} - c_{1, j_\beta^\phi+1} \leq \lambda_{1, j_\beta^\phi+1}^{\l\Omega\r} - c_{1, j_\beta^\phi+1} \leq 0.
	\]
	So $\beta(\phi) = 0$. It remains to show that (\ref{E:betaPropEntries})-(b) holds. For all $\Gamma,\Gamma'\subseteq\Omega$, assume, without loss of generality, that $j_\beta^\Gamma \geq j_\beta^{\Gamma'}$. Then,
	\[
	p_{j_\beta^\Gamma+1}^{\l\Gamma\r} + p_{j_\beta^{\Gamma'}+1}^{\l\Gamma'\r} = p_{j_\beta^\Gamma+1}^{\l\Gamma\r} + p_{j_\beta^{\Gamma'}+1}^{\l\Gamma'\setminus\Gamma\r} + p_{j_\beta^{\Gamma'}+1}^{\l\Gamma\Gamma'\r}
	\leq p_{j_\beta^\Gamma+1}^{\l\Gamma\r} + p_{j_\beta^\Gamma+1}^{\l\Gamma'\setminus\Gamma\r} + p_{j_\beta^{\Gamma'}+1}^{\l\Gamma\Gamma'\r}
	= p_{j_\beta^\Gamma+1}^{\l\Gamma+\Gamma'\r} + p_{j_\beta^{\Gamma'}+1}^{\l\Gamma\Gamma'\r}.
	\]
	So, according to (\ref{E:beta++}) and (\ref{E:beta+}),
	\[
	\begin{IEEEeqnarraybox}[][c]{rCl}
		\beta(\Gamma)+\beta(\Gamma')
		& =    & p_{j_\beta^\Gamma+1}^{\l\Gamma\r} + \hat\lambda_{1, j_\beta^\Gamma+1}^{\l\Omega\r} - c_{1, j_\beta^\Gamma+1} + p_{j_\beta^{\Gamma'}+1}^{\l\Gamma'\r} + \hat\lambda_{1, j_\beta^{\Gamma'}+1}^{\l\Omega\r} - c_{1, j_\beta^{\Gamma'}+1}\\
		& \leq & p_{j_\beta^\Gamma+1}^{\l\Gamma+\Gamma'\r} + \hat\lambda_{1, j_\beta^\Gamma+1}^{\l\Omega\r} - c_{1, j_\beta^\Gamma+1} + p_{j_\beta^{\Gamma'}+1}^{\l\Gamma\Gamma'\r} + \hat\lambda_{1, j_\beta^{\Gamma'}+1}^{\l\Omega\r} - c_{1, j_\beta^{\Gamma'}+1}\\
		& \leq & \beta(\Gamma+\Gamma')+\beta(\Gamma\Gamma').
	\end{IEEEeqnarraybox}
	\]
\end{proof}

\begin{proof}[Proof of Theorem~\ref{T:FP}]
	To prove this theorem, notice that, for all $\Gamma,\Gamma'\subseteq\Omega$, using (\ref{E:beta++}) and (\ref{E:beta+}), we have
	\[
	\beta(\Gamma)-\beta(\Gamma\Gamma') \leq p_{j_\beta^\Gamma+1}^{\l\Gamma\r} + \hat\lambda_{1, j_\beta^\Gamma+1}^{\l\Omega\r} - c_{1, j_\beta^\Gamma+1} - (p_{j_\beta^\Gamma+1}^{\l\Gamma\Gamma'\r} + \hat\lambda_{1, j_\beta^\Gamma+1}^{\l\Omega\r} - c_{1, j_\beta^\Gamma+1}) =  p_{j_\beta^\Gamma+1}^{\l\Gamma\r}-p_{j_\beta^\Gamma+1}^{\l\Gamma\Gamma'\r} = p_{j_\beta^\Gamma+1}^{\l\Gamma\setminus\Gamma'\r}.
	\]
	So, according to (\ref{E:p+}),
	\begin{equation}\label{E:betaProp+}
		\beta(\Gamma)-\beta(\Gamma\Gamma') \leq q^{\l\Gamma\setminus\Gamma'\r},
	\end{equation}
	implying that $\beta(\Gamma)$ and $\beta(\Gamma\Gamma')$, as lower bounds of $d^{\l\Gamma\r}$ and $d^{\l\Gamma\Gamma'\r}$, can be {\it jointly} tight without contradicting the causality constraint, (\ref{E:causalityB}).
		
	To show that $\beta_\mu$ is supermodular, according to Definition \ref{D:supermodular},  we need to show that $\beta_\mu:2^\Omega \rightarrow \mb{N}$ satisfies that
	\begin{equation}\label{E:betaMuPropEntries}
		\textup{(a)}\blank \beta_\mu(\phi) = 0,
		\blank\textup{and}\blank
		\textup{(b)}\blank  \forall \Gamma,\Gamma'\subseteq\Omega, ~\beta_\mu(\Gamma)+\beta_\mu(\Gamma') \leq \beta_\mu(\Gamma+\Gamma')+\beta_\mu(\Gamma\Gamma').
	\end{equation}
	If $\bs\psi^{[\Omega]}$ is schedulable, according to Theorem~\ref{T:betaProp}, $\beta$ is supermodular. It is then immediate from (\ref{E:betaMu}) that, not only can $\beta_\mu$ only take integer values but also, for all $\Gamma\subseteq\Omega$, $\beta_\mu(\Gamma) \geq \beta(\Gamma) \geq 0$. So the range of $\beta_\mu$ is restricted to $\mb{N}$. Additionally, since $\beta(\phi) = 0$, and since, according to (\ref{E:muFeasible}), $\mu$'s feasibility implies that $\mu \leq q^{\l\Omega\r}$, according to (\ref{E:betaMu}),
	\[
	\beta_\mu(\phi) = \max\{\beta(\phi), \mu-q^{\l\Omega\r}\} = 0.
	\]
	It remains to show that (\ref{E:betaMuPropEntries})-(b) holds. We need only consider four cases:
	\begin{itemize}
		\item[C1] if $\beta_\mu(\Gamma) = \beta(\Gamma)$ and $\beta_\mu(\Gamma') = \beta(\Gamma')$, (\ref{E:betaMuPropEntries})-(b) follows directly from (\ref{E:betaPropEntries})-(b);
		\item[C2] if $\beta_\mu(\Gamma) = \mu-q^{\l\overline\Gamma\r}$ and $\beta_\mu(\Gamma') = \mu-q^{\l\overline{\Gamma'}\r}$, according to (\ref{E:betaMu}),
		\[
		\beta_\mu(\Gamma)+\beta_\mu(\Gamma') = \mu-q^{\l\overline\Gamma\r}+\mu-q^{\l\overline{\Gamma'}\r}= \mu-q^{\l\overline{\Gamma+\Gamma'}\r}+\mu-q^{\l\overline{\Gamma\Gamma'}\r} \leq  \beta_\mu(\Gamma+\Gamma')+\beta_\mu(\Gamma\Gamma');
		\]
		\item[C3] if $\beta_\mu(\Gamma) = \beta(\Gamma)$ and $\beta_\mu(\Gamma') = \mu-q^{\l\overline{\Gamma'}\r}$, according to (\ref{E:betaProp+}) and (\ref{E:betaMu}),
		\[
		\blank\blank\blank\blank\beta_\mu(\Gamma)+\beta_\mu(\Gamma') \leq  \beta(\Gamma\Gamma')+q^{\l\Gamma\setminus\Gamma'\r} + \mu-q^{\l\overline{\Gamma'}\r} =  \mu-q^{\l\overline{\Gamma+\Gamma'}\r}+\beta(\Gamma\Gamma') \leq \beta_\mu(\Gamma+\Gamma')+\beta_\mu(\Gamma\Gamma');
		\]
		and finally,
		\item[C4] if $\beta_\mu(\Gamma) = \mu-q^{\l\overline\Gamma\r}$ and $\beta_\mu(\Gamma') = \beta(\Gamma')$, aside from interchanging $\Gamma$ and $\Gamma'$, C4 is C3.
	\end{itemize}
	So, in all cases, (\ref{E:betaMuPropEntries})-(b) must hold.
	
	Next we show that $\mb{F}_\mu = \mb{P}(\beta_\mu)$. From our argument leading to this theorem, $\mb{F}_\mu \subseteq \mb{P}(\beta_\mu)$, so we need only show that $\mb{P}(\beta_\mu) \subseteq \mb{F}_\mu$. According to Theorem~\ref{T:polytope}, we then need only show that (\ref{E:betaMu}) implies (\ref{E:causalityB}), (\ref{E:capacity}), and (\ref{E:beta}). First, for all $\omega\in\Omega$, using (\ref{E:betaMu}), we have
	\[
	d_\mu^\omega = d_\mu^{\l\Omega\r} - d_\mu^{\l\overline{\{\omega\}}\r} \leq \mu-\beta_\mu(\overline{\{\omega\}}) = \mu-\max\{\beta(\overline{\{\omega\}}), \mu-q^\omega\} \leq q^\omega,
	\]
	so (\ref{E:causalityB}) holds. Second, according to (\ref{E:muFeasible}), $\mu$'s feasibility implies that  $d_\mu^{\l\Omega\r} = \mu \leq c$, so (\ref{E:capacity}) holds. Finally, according to (\ref{E:betaMu}), $d_\mu^{\l\Gamma\r} \geq \beta_\mu(\Gamma) \geq \beta(\Gamma)$ for all $\Gamma\subseteq\Omega$, so (\ref{E:beta}) holds.
\end{proof}

\section{Proofs for Section \ref{S:MSS}}\label{A:MSSProof}

\begin{proof}[Proof of Theorem~\ref{L:pFeasible}]
	Clearly $p_j^{[\Omega]}$ is valid if $p_j^{\l\Omega\r} \leq c$ because, according to (\ref{E:p+}), $p_j^{[\Omega]} \leq q^{[\Omega]}$. So, according to Theorem~\ref{T:polytope}, we need only show that $p_j^{\l\Omega\r} \geq \beta(\Omega)$ implies that $p_j^{\l\Gamma\r} \geq \beta(\Gamma)$ for all $\Gamma\subseteq\Omega$. To see this, on the one hand, if $\bs\psi^{[\Omega]}$ is schedulable, according to (\ref{E:CondSigPropF}) and (\ref{E:schedulability}), $\hat\lambda_{1, j_\beta^\Gamma+1}^{\l\Omega\r} \leq \lambda_{1, j_\beta^\Gamma+1}^{\l\Omega\r} \leq c_{1, j_\beta^\Gamma+1}$. So, if $j > j_\beta^\Gamma$, (\ref{E:beta++}) implies that $p_j^{\l\Gamma\r} \geq p_{j_\beta^\Gamma+1}^{\l\Gamma\r} \geq \beta(\Gamma)$. On the other hand, since $p_j^{\l\Omega\r} \geq \beta(\Omega)$, if $j \leq j_\beta^\Gamma$, using (\ref{E:beta+}) and (\ref{E:beta++}), we have
	\[
	p_j^{\l\Gamma\r} = p_j^{\l\Omega\r} - p_{j}^{\l\overline\Gamma\r} \geq \beta(\Omega) - p_{j_\beta^\Gamma+1}^{\l\overline\Gamma\r} \geq p_{j_\beta^\Gamma+1}^{\l\Omega\r} + \hat\lambda_{1, j_\beta^\Gamma+1}^{\l\Omega\r} -c_{1, j_\beta^\Gamma+1} - p_{j_\beta^\Gamma+1}^{\l\overline\Gamma\r} = p_{j_\beta^\Gamma+1}^{\l\Gamma\r} + \hat\lambda_{1, j_\beta^\Gamma+1}^{\l\Omega\r} -c_{1, j_\beta^\Gamma+1} = \beta(\Gamma).
	\]
\end{proof}

\begin{proof}[Proof of Theorem~\ref{T:MSS}]
	If $\mb{F}_\mu$ is non-empty, given any $d_\mu^{[\Omega]}\in\mb{F}_\mu$, as $d_\mu^{[\Omega]}$ is feasible, $d_\mu^{[\Omega]} \leq q^{[\Omega]}$, so $\mu = d_\mu^{\l\Omega\r} \leq q^{\l\Omega\r}$. Then, comparing (\ref{E:jMu+Entries}) to (\ref{E:MSSDefEntries}), it is immediate that, since $\mu$, $\bs{p}^{[\Omega]}$, and $q^{[\Omega]}$ are all integral, there exists at least one way to split $\mu$ integrally into $e_\mu^{[\Omega]}$. So $e_\mu^{[\Omega]}$ exists. By default, (\ref{E:MSSDefEntries}) requires that $e_\mu^{[\Omega]} \leq q^{[\Omega]}$. Given any \mbox{$d_\mu^{[\Omega]}\in\mb{F}_\mu$}, as $d_\mu^{[\Omega]}$ is feasible, $e_\mu^{\l\Omega\r} = \mu = d_\mu^{\l\Omega\r} \leq c$. So $e_\mu^{[\Omega]}$ is valid. Also, as $d_\mu^{[\Omega]}$ is feasible, according to (\ref{E:schedulability+}), $\dot\lambda_{ij}^{\l\Omega\r}(d_\mu^{[\Omega]}) \leq \dot{c}_{ij}$ for all $i,j \in \mb{N}$. So Theorem~\ref{L:MSS} guarantees that $\dot\lambda_{ij}^{\l\Omega\r}(e_\mu^{[\Omega]}) \leq \dot\lambda_{ij}^{\l\Omega\r}(d_\mu^{[\Omega]}) \leq \dot{c}_{ij}$ for all $i,j \in \mb{N}$. Then, according Theorem~\ref{T:schedulability}, $e_\mu^{[\Omega]}$ is feasible, and $e_\mu^{[\Omega]} \in \mb{F}_\mu$. Notice that the use of Theorem~\ref{L:MSS} here is not circular because, as we will see next, the proof of Theorem~\ref{L:MSS} is independent of this theorem.
\end{proof}

\begin{proof}[Proof of Theorem~\ref{L:MSS}]
	If $e_\mu^{[\Omega]}$ exists, (\ref{E:MSSDefEntries}) requires that $e_\mu^{[\Omega]} \leq q^{[\Omega]}$, so at least one $d_\mu^{[\Omega]} \leq q^{[\Omega]}$ exists. If $i>0$, it is immediate from (\ref{E:SigUpdateCaseB}) that (\ref{E:MSS}) holds by default because $\dot\lambda_{ij}^{\l\Omega\r}$ is constant with respect to $d_\mu^{[\Omega]}$. If $i = 0$, an alternative formulation of $\dot\lambda_{0j}$ is useful. Given any $d \leq q$, according to (\ref{E:lambdaBound}), $\dot\lambda_{0j} \geq \hat\lambda_{1, j+1}$. So, using (\ref{E:SigUpdateCaseA}) and (\ref{E:p}), we have
	\[
	\dot\lambda_{0j} = \hat\lambda_{1, j+1} + (\dot\lambda_{0j} - \hat\lambda_{1, j+1})^+
	= \hat\lambda_{1, j+1} + (\hat\lambda_{0, j+1} - d - \hat\lambda_{1, j+1})^+
	= \hat\lambda_{1, j+1} + (p_{j+1} - d)^+.
	\]
	Then, given any $d_\mu^{[\Omega]} \leq q^{[\Omega]}$,
	\[
	\dot\lambda_{0j}^{\l\Omega\r}(d_\mu^{[\Omega]}) = \hat\lambda_{1,j+1}^{\l\Omega\r} + \sum_{\omega\in\Omega} (p_{j+1}^\omega-d_\mu^\omega)^+ \geq \hat\lambda_{1,j+1}^{\l\Omega\r} + (p_{j+1}^{\l\Omega\r}-\mu)^+.
	\]
	This bound can be achieved for all $j\in\mb{N}$ simultaneously when $d_\mu^{[\Omega]} = e_\mu^{[\Omega]}$ because, if $j+1 \leq j_\mu$, according to (\ref{E:MSSDefEntries}) and (\ref{E:jMu+Entries}), $e_\mu^{[\Omega]} \geq p_{j+1}^{[\Omega]}$ and $\mu \geq p_{j+1}^{\l\Omega\r}$, implying that both sides equal $\hat\lambda_{1,j+1}^{\l\Omega\r}$, while, if $j+1 > j_\mu$, $e_\mu^{[\Omega]} \leq p_{j+1}^{[\Omega]}$ and $\mu < p_{j+1}^{\l\Omega\r}$, implying that both sides equal $\hat\lambda_{1,j+1}^{\l\Omega\r} + p_{j+1}^{\l\Omega\r}-\mu$. So (\ref{E:MSS}) holds.
\end{proof}

\begin{proof}[Proof of Theorem~\ref{T:HMSS}]
	If $\mb{F}(\nu_\mu^{[\mc{P}]})$ is non-empty, given any $d_\mu^{[\Omega]}\in\mb{F}(\nu_\mu^{[\mc{P}]})$, as $d_\mu^{[\Omega]}$ is feasible, $d_\mu^{[\Omega]} \leq q^{[\Omega]}$. So, for each $\Gamma\in\mc{P}$, $\nu_\mu^\Gamma = d_\mu^{\l\Gamma\r} \leq q^{\l\Gamma\r}$. Then, using (\ref{E:jT}) and (\ref{E:tau}), it is easy to verify that
	\begin{equation}\label{E:jT+Entries}
		\textup{(a)}\blank \nu_\mu^\Gamma \in [p_{j_\mu^\Gamma}^{\l\Gamma\r}, p_{j_\mu^\Gamma+1}^{\l\Gamma\r}) \textup{ if } j_\mu^\Gamma < \infty,
		\blank\textup{and}\blank
		\textup{(b)}\blank \nu_\mu^\Gamma \in [p_\infty^{\l\Gamma\r}, q^{\l\Gamma\r}] \text{ if } j_\mu^\Gamma = \infty.
	\end{equation}
	Comparing this to (\ref{E:HMSSDefEntries}), since $\nu_\mu^\Gamma$, $\bs{p}^{[\Gamma]}$, and $q^{[\Gamma]}$ are all integral, there exists at least one way to split $\nu_\mu^\Gamma$ integrally into $e^{[\Gamma]}(\nu_\mu^{[\mc{P}]})$. So $e^{[\Gamma]}(\nu_\mu^{[\mc{P}]})$ exists, and by concatenation, $e^{[\Omega]}(\nu_\mu^{[\mc{P}]})$ exists.
	
	By default, (\ref{E:HMSSDefEntries}) requires that $e^{[\Omega]}(\nu_\mu^{[\mc{P}]}) \leq q^{[\Omega]}$. Given any $d_\mu^{[\Omega]}\in\mb{F}(\nu_\mu^{[\mc{P}]})$, as $d_\mu^{[\Omega]}$ is feasible, $e^{\l\Omega\r}(\nu_\mu^{[\mc{P}]}) = \nu_\mu^{\l\mc{P}\r} = \mu = d_\mu^{\l\Omega\r} \leq c$. So $e^{[\Omega]}(\nu_\mu^{[\mc{P}]})$ is valid. Also, as $d_\mu^{[\Omega]}$ is feasible, according to (\ref{E:schedulability+}), $\dot\lambda_{ij}^{\l\Omega\r}(d_\mu^{[\Omega]}) \leq \dot{c}_{ij}$ for all $i,j \in \mb{N}$. Additionally, replicating, almost verbatim, Theorem~\ref{L:MSS}'s proof,  for each $\Gamma\in\mc{P}$, we have
	\begin{equation}\label{E:HMSS}
		\dot\lambda_{ij}^{\l\Gamma\r}(e^{[\Gamma]}(\nu_\mu^{[\mc{P}]})) = \min_{d^{[\Gamma]} \leq q^{[\Gamma]}, d^{\l\Gamma\r} = \nu_\mu^\Gamma} \dot\lambda_{ij}^{\l\Gamma\r}(d^{[\Gamma]}).
	\end{equation}
	By concatenation, this guarantees that $\dot\lambda_{ij}^{\l\Omega\r}(e^{[\Omega]}(\nu_\mu^{[\mc{P}]})) \leq \dot\lambda_{ij}^{\l\Omega\r}(d_\mu^{[\Omega]}) \leq \dot{c}_{ij}$ for all $i,j \in \mb{N}$. Then, according to Theorem~\ref{T:schedulability}, $e^{[\Omega]}(\nu_\mu^{[\mc{P}]})$ is feasible, and $e^{[\Omega]}(\nu_\mu^{[\mc{P}]}) \in \mb{F}(\nu_\mu^{[\mc{P}]})$.	
\end{proof}

\begin{proof}[Proof of Theorem~\ref{T:HMSSCond}]
	The condition's necessity follows directly from our argument leading to this theorem, so we need only show its sufficiency. Consider first, the case that $\nu_\mu^{[\mc{P}]}$ is a vertex of $\mb{P}(\beta_\mu^\mc{P})$. Recall, from Section~\ref{SS:polymatroid}, that each vertex of $\mb{P}(\beta_\mu^\mc{P})$ can be identified by a permutation. A permutation over $\mc{P}$ is a bijective map, $\sigma: \mc{P} \rightarrow \{1, 2, \ldots, |\mc{P}|\}$. The vertex identified by $\sigma$, $v_\sigma^{[\mc{P}]}(\beta_\mu^\mc{P})$, is the unique solution to the system of linear equations defined by
	\begin{equation}\label{E:vertexSumExtEntries}
		\textup{(a)}\blank v_\sigma^{\l\mc{S}_\sigma^j\r} (\beta_\mu^\mc{P}) = \beta_\mu^\mc{P}(\mc{S}_\sigma^j), ~j = 0,1,2,\ldots,|\mc{P}|,
		\blank\textup{with}\blank
		\textup{(b)}\blank \mc{S}_\sigma^j := \{\Gamma\in\mc{P} | \sigma(\Gamma) \leq j\},
	\end{equation}
	so for each $\Gamma\in\mc{P}$,
	\begin{equation}\label{E:vertexExt}
		v_\sigma^\Gamma(\beta_\mu^\mc{P}) = \beta_\mu^\mc{P}(\mc{S}_\sigma^{\sigma(\Gamma)}) - \beta_\mu^\mc{P}(\mc{S}_\sigma^{\sigma(\Gamma)-1}).
	\end{equation}
	
	Now, construct $\pi_\sigma$, a permutation over $\Omega$, such that for all $\Gamma,\Gamma'\in\mc{P}$, $\omega\in\Gamma$ and $\omega'\in\Gamma'$, $\pi_\sigma(\omega) < \pi_\sigma(\omega')$ if $\sigma(\Gamma) < \sigma(\Gamma')$. Intuitively, $\pi_\sigma$ can be constructed by first sorting all $\Gamma\in\mc{P}$ according to $\sigma$ and then expanding each $\Gamma$ to its constituent elements. Denoting $\Gamma_{\mc{S}_\sigma^j}$ by $\Gamma_\sigma^j$, implying that $\Gamma_\sigma^j = \bigcup_{\Gamma\in\mc{S}_\sigma^j} \Gamma$, it is immediate from (\ref{E:vertexSumExtEntries})-(b) and (\ref{E:vertexSumEntries})-(b) that $\Gamma_\sigma^j = \Gamma_{\pi_\sigma}^{|\Gamma_\sigma^j|}$. Then, using (\ref{E:vertexSumEntries})-(a), and the definition of $\beta_\mu^\mc{P}$ in (\ref{E:permuBetaPEntries})-(a), we have
	\[
	v_{\pi_\sigma}^{\l\Gamma_\sigma^j\r}(\beta_\mu) = v_{\pi_\sigma}^{\l\Gamma_{\pi_\sigma}^{|\Gamma_\sigma^j|}\r}(\beta_\mu) = \beta_\mu(\Gamma_{\pi_\sigma}^{|\Gamma_\sigma^j|}) = \beta_\mu(\Gamma_\sigma^j) = \beta_\mu^\mc{P}(\mc{S}_\sigma^j),
	\]
	so for each $\Gamma\in\mc{P}$, according to (\ref{E:vertexExt}),
	\begin{equation}\label{E:vertexRelation}
		v_{\pi_\sigma}^{\l\Gamma\r}(\beta_\mu) = v_{\pi_\sigma}^{\l\Gamma_\sigma^{\sigma(\Gamma)}\r}(\beta_\mu) - v_{\pi_\sigma}^{\l\Gamma_\sigma^{\sigma(\Gamma)-1}\r}(\beta_\mu) = \beta_\mu^\mc{P}(\mc{S}_\sigma^{\sigma(\Gamma)}) - \beta_\mu^\mc{P}(\mc{S}_\sigma^{\sigma(\Gamma)-1}) = v_\sigma^\Gamma(\beta_\mu^\mc{P}),
	\end{equation}
	implying that $v_{\pi_\sigma}^{[\Omega]}(\beta_\mu) \in \mb{H}(v_\sigma^{[\mc{P}]}(\beta_\mu^\mc{P}))$. According to Theorem \ref{T:FP}, $v_{\pi_\sigma}^{[\Omega]}(\beta_\mu) \in \mb{F}_\mu$. It follows that $v_{\pi_\sigma}^{[\Omega]}(\beta_\mu) \in \mb{F}(v_\sigma^{[\mc{P}]}(\beta_\mu^\mc{P}))$, $\mb{F}(v_\sigma^{[\mc{P}]}(\beta_\mu^\mc{P}))$ is non-empty, and $v_\sigma^{[\mc{P}]}(\beta_\mu^\mc{P})$ is feasible.
	
	Consider next, a general $\nu_\mu^{[\mc{P}]} \in \mb{P}(\beta_\mu^\mc{P})$. Let $\Pi^\mc{P}$ be the set of all permutations over $\mc{P}$. As $\nu_\mu^{[\mc{P}]}$ is a convex combination of $\mb{P}(\beta_\mu^\mc{P})$'s vertices, there must exist, for all $\sigma \in \Pi^\mc{P}$, $w_\sigma \in [0, 1]$ in $\mb{R}$, with $\sum_{\sigma\in\Pi^\mc{P}} w_\sigma = 1$, such that
	\[
	\nu_\mu^{[\mc{P}]} = \sum_{\sigma\in\Pi^\mc{P}} w_\sigma v_\sigma^{[\mc{P}]}(\beta_\mu^\mc{P}).
	\]
	Now construct
	\[
	d_\mu^{[\Omega]} = \sum_{\sigma\in\Pi^\mc{P}} w_\sigma v_{\pi_\sigma}^{[\Omega]}(\beta_\mu).
	\]
	On the one hand, it is immediate from (\ref{E:vertexRelation}) that, for each $\Gamma\in\mc{P}$,
	\[
	d_\mu^{\l\Gamma\r} = \sum_{\sigma\in\Pi^\mc{P}} w_\sigma v_{\pi_\sigma}^{\l\Gamma\r}(\beta_\mu) = \sum_{\sigma\in\Pi^\mc{P}} w_\sigma v_\sigma^\Gamma(\beta_\mu^\mc{P}) = \nu_\mu^\Gamma,
	\]
	implying that $d_\mu^{[\Omega]} \in \mb{H}(\nu_\mu^{[\mc{P}]})$. On the other hand, since $v_{\pi_\sigma}^{[\Omega]}(\beta_\mu) \in \mb{F}_\mu$, $d_\mu^{[\Omega]} \in \mb{F}_\mu$. It follows that $d_\mu^{[\Omega]} \in \mb{F}(\nu_\mu^{[\mc{P}]})$, $\mb{F}(\nu_\mu^{[\mc{P}]})$ is non-empty, and $\nu_\mu^{[\mc{P}]}$ is feasible.
	
	One subtlety does need to be addressed. Close examination of the preceding proof reveals that what is proved is that $\mb{F}(\nu_\mu^{[\mc{P}]})$ is non-empty in $\mb{R}^n$, not that it contains any integral point. This, however, is not an obstacle to the joint application of this theorem and Theorem~\ref{T:HMSS}. In fact, nowhere is the integrality of $\mb{F}(\nu_\mu^{[\mc{P}]})$ ever used in Theorem~\ref{T:HMSS}'s proof, so it holds as long as $\mb{F}(\nu_\mu^{[\mc{P}]})$ is non-empty in $\mb{R}^n$. In light of this observation, $\mb{F}(\nu_\mu^{[\mc{P}]})$'s non-emptiness in $\mb{R}^n$ is not only necessary but also sufficient for its integrality. If $\mb{F}(\nu_\mu^{[\mc{P}]})$ is non-empty in $\mb{R}^n$, Theorem~\ref{T:HMSS} guarantees that $e^{[\Omega]}(\nu_\mu^{[\mc{P}]}) \in \mb{F}(\nu_\mu^{[\mc{P}]})$. But, by definition, $e^{[\Omega]}(\nu_\mu^{[\mc{P}]})$ is integral. So $\mb{F}(\nu_\mu^{[\mc{P}]})$ contains at least one integral point.
\end{proof}	

\section{Proofs for Section \ref{S:MPS}}\label{A:MPSProof}

\begin{proof}[Proof of Theorem~\ref{T:SSig}]
	To prove this theorem, we introduce $\bs\varepsilon = [\varepsilon_j]_{j\in\mb{N}}:=[0, \infty, \infty, \ldots] \in \mb{U}$, that is, $\varepsilon_0 := 0$ and $\varepsilon_j := \infty$ for all $j>0$. Intuitively, $\bs\varepsilon$ models the arrival of an infinite burst of tasks in slot~$t$. Using (\ref{E:psiS}) and (\ref{E:SMDefEntries})-(c), we have
	\begin{equation}\label{E:epsilon++}
		\psi_j^S(\mc{R}^i\bs\varepsilon+b\bs\delta) = \min_{k\in\mb{N}} ([\mc{R}^i\bs\varepsilon]_k+b\delta_k+s_{kj}) = \min_{k \leq i} (b\delta_k+s_{kj}) = \min \{s_{0j}, b\delta_i+s_{ij}\},          
	\end{equation}
	where the second equality holds because, by definition, $[\mc{R}^i\bs\varepsilon]_k = 0$ if $k \leq i$ and $[\mc{R}^i\bs\varepsilon]_k = \infty$ if $k > i$. Then, on the one hand, since by default, $\mc{R}^i\bs\varepsilon+b\bs\delta \in \mb{U}\ua b$, using (\ref{E:signature}), (\ref{E:epsilon++}), and (\ref{E:SMDefEntries})-(d), we have
	\[
	\begin{IEEEeqnarraybox}[][c]{rCl}
		\lambda_{ij}(\bs\psi^S) \geq (\psi_j^S(\bs{q})-q_i)^+|_{\bs{q}=\mc{R}^i\bs\varepsilon+b\bs\delta}
		& =    & (\psi_j^S(\mc{R}^i\bs\varepsilon+b\bs\delta)-[\mc{R}^i\bs\varepsilon]_i-b\delta_i)^+ \\
		& =    & (\min\{s_{0j}, b\delta_i+s_{ij}\}-b\delta_i)^+ \\
		& =    & \min\{(s_{0j}-b\delta_i)^+, s_{ij}\} \\
		& =    & s_{ij}.
	\end{IEEEeqnarraybox}
	\]
	On the other hand, using (\ref{E:signature}) and (\ref{E:psiS}), we have
	\[
	\lambda_{ij}(\bs\psi^S) = \max_{\bs{q}\in\mb{U}\ua b} \left(\min_{k\in\mb{N}} (q_k+s_{kj})-q_i\right)^+
	\leq \max_{\bs{q}\in\mb{U}\ua b} \left(q_i+s_{ij}-q_i\right)^+ =  s_{ij}.
	\]
	It follows that $\lambda_{ij}(\bs\psi^S) = s_{ij}$.
\end{proof}

\begin{proof}[Proof of Theorem~\ref{T:SCondSig}]
	In the case that $i=0$, on the one hand, since by default, $\mc{R}\bs\varepsilon+q\bs\delta \in \mb{U}|q$, using (\ref{E:CondSig}) and (\ref{E:epsilon++}), we have
	\[
	\hat{s}_{0j} \geq \psi_j^S(\bs{q})|_{\bs{q}=\mc{R}\bs\varepsilon+q\bs\delta} = \psi_j^S(\mc{R}\bs\varepsilon+q\bs\delta) = \min \{s_{0j}, q+s_{1j}\}.
	\]
	On the other hand, using (\ref{E:CondSig}) and (\ref{E:psiS}), we have
	\[
	\hat{s}_{0j} = \max_{\bs{q}\in\mb{U}|q} \min_{k\in\mb{N}} (q_k+s_{kj}) \leq \max_{\bs{q}\in\mb{U}|q} \min \{s_{0j}, q_1+s_{1j}\} = \min \{s_{0j}, q+s_{1j}\}.
	\]
	It follows that (\ref{E:SCondSigCaseA}) holds.
	
	In the case that $i>0$, on the one hand, since by default, $\mc{R}^i\bs\varepsilon+q\bs\delta \in \mb{U}|q$, using (\ref{E:CondSig}) and (\ref{E:epsilon++}), we have
	\[
	\begin{IEEEeqnarraybox}[][c]{rCl}
		\hat{s}_{ij} \geq (\psi_j^S(\bs{q})-q_i)^+|_{\bs{q}=\mc{R}^i\bs\varepsilon+q\bs\delta}
		& =    & (\psi_j^S(\mc{R}^i\bs\varepsilon+q\bs\delta)-[\mc{R}^i\bs\varepsilon]_i-q)^+ \\
		& =    & (\min \{s_{0j}, q+s_{ij}\}-q)^+ \\
		& =    & \min \{(s_{0j}-q)^+, s_{ij}\},
	\end{IEEEeqnarraybox}
	\]
	where the second equality holds because, by definition, $[\mc{R}^i\bs\varepsilon]_i = 0$. On the other hand, using (\ref{E:CondSig}) and (\ref{E:psiS}), we have
	\[
	\begin{IEEEeqnarraybox}[][c]{rCl}
		\hat{s}_{ij}  = \max_{\bs{q}\in\mb{U}|q} \left(\min_{k\in\mb{N}} (q_k+s_{kj})-q_i\right)^+
		& \leq & \max_{\bs{q}\in\mb{U}|q} (\min \{s_{0j}, q_i+s_{ij}\}-q_i)^+ \\
		& =    & \max_{\bs{q}\in\mb{U}|q} \min \{(s_{0j}-q_i)^+, s_{ij}\} \\
		& \leq & \min \{(s_{0j}-q)^+, s_{ij}\},
	\end{IEEEeqnarraybox}
	\]
	where the final inequality holds because, as $i>0$, $q_i \geq q_1 = q$ for all $\bs{q}\in\mb{U}|q$. It follows that (\ref{E:SCondSigCaseB}) holds.
\end{proof}

\begin{proof}[Proof of Theorem~\ref{T:MtoS}]
	It is easy to verify that $S$ is indeed a spectral matrix. Notice that (\ref{E:MtoSCases}) can be rewritten as
	\[
		s_{ij} = \min \left\{(m_{0j}-b\delta_i)^+, \min_{k \leq i} m_{kj}\right\}.
	\]
	So, for all $\bs{q}\in\mb{U}\ua b$ and $j\in\mb{N}$, according to (\ref{E:psiS}),
	\[
	\psi_j^S(\bs{q}) = \min \left\{\min_{i \in \mb{N}} [q_i+(m_{0j}-b\delta_i)^+], \min_{i \in \mb{N}} \min_{k \leq i} (q_i+m_{kj})\right\}.
	\]
	On the one hand, for the left term of the outer minimization, using (\ref{E:vecArrival}) and (\ref{E:psiM+}), we have
	\[
	\min_{i \in \mb{N}} [q_i+(m_{0j}-b\delta_i)^+] \geq \min_{i \in \mb{N}} (q_i-b\delta_i)+m_{0j} = \min_{i \in \mb{N}} a_i+m_{0j} = m_{0j} \geq \psi_j^M(\bs{q}),
	\]
	where the final inequality holds because $m_{0j} = q_0+m_{0j} \geq \psi_j^M(\bs{q})$. On the other hand, for the right term, reversing the order of the two $\min$ operators, and using (\ref{E:psiM+}), we have
	\[
	\min_{i \in \mb{N}} \min_{k \leq i} (q_i+m_{kj}) = \min_{k \in \mb{N}} \min_{i \geq k} (q_i+m_{kj}) = \min_{k \in \mb{N}} (q_k+m_{kj}) = \psi_j^M(\bs{q}),
	\]
	where the second equality holds because $q_i \geq q_k$ for all $i \geq k$. It follows that $\psi_j^S(\bs{q}) = \psi_j^M(\bs{q})$, so $\bs\psi^S = \bs\psi^M$.
\end{proof}

\begin{proof}[Proof of Theorem~\ref{T:MUpdate}]
	We first show that $\dot{M}$ is a cumulative matrix. Using (\ref{E:MUpdateCases}) and (\ref{E:CMDefEntries}), it is easy to verify that $\dot{m}_{ij} \leq \dot{m}_{i, j+1}$ for all $i,j \in \mb{N}$, and that $\dot{m}_{ij} = 0$ if $i>0$ and $i \geq j$. So we need only show that $\dot{m}_{00} = 0$. Notice that, since $d \geq p$, according to (\ref{E:WCSImmediate}), (\ref{E:psiM+}), and (\ref{E:CMDefEntries})-(a), we have
	\[
	d \geq \max_{\bs{q} \in \mb{U}|q} \psi_1^M(\bs{q}) = \max_{\bs{q} \in \mb{U}|q} \min \{m_{01}, q_1+m_{11}\}
	= \min \{m_{01}, q\}.
	\]
	So, according to (\ref{E:MUpdateCaseA}) and (\ref{E:CMDefEntries})-(a),
	\[
	\dot{m}_{00} = (\min \{m_{01}, q+m_{11}\}-d)^+= (\min \{m_{01}, q\}-d)^+ = 0.
	\]
	
	It remains to show that $\bs{\dot\psi}$ can be identified by $\dot{M}$. For all $\bs{\dot{q}} \in \mb{U}\ua\dot{b}$ and $j \in \mb{N}$, using (\ref{E:WCSUpdateB}) and (\ref{E:psiM+}), we have
	\[
	\dot\psi_j(\bs{\dot{q}}) = [\mc{R}^{-1}(\bs\psi^M(\bs{q})-d\bs\delta)^+]_j = (\psi_{j+1}^M(\bs{q})-d)^+ = \left(\min_{i \in \mb{N}} (q_i+m_{i, j+1})-d\right)^+,
	\]
	so
	\[
	\begin{IEEEeqnarraybox}[][c]{rCl}
		\dot\psi_j(\bs{\dot{q}})
		&=& \left(\min \left\{m_{0, j+1}, q_1+m_{1, j+1}, \min_{i>1} (q_i+m_{i, j+1})\right\}-d\right)^+ \\
		&=& \left(\min \left\{m_{0, j+1}, q_1+m_{1, j+1}, \min_{i>0} (q_{i+1}+m_{i+1, j+1})\right\}-d\right)^+.
	\end{IEEEeqnarraybox}
	\]
	Then, according to (\ref{E:qUpdateCaseB}) and (\ref{E:MUpdateCases}),
	\[
	\begin{IEEEeqnarraybox}[][c]{rCl}
		\dot\psi_j(\bs{\dot{q}})
		&=& \left(\min \left\{m_{0, j+1}, q+m_{1, j+1}, \min_{i>0} (\dot{q}_i+d+m_{i+1, j+1})\right\}-d\right)^+ \\
		&=& \min \left\{(\min\{m_{0, j+1}, q+m_{1, j+1}\}-d)^+, \min_{i>0} (\dot{q}_i+m_{i+1, j+1})\right\} \\
		&=& \min_{i \in \mb{N}}(\dot{q}_i+\dot{m}_{ij}).
	\end{IEEEeqnarraybox}
	\]
	It follows that $\bs{\dot\psi}(\bs{\dot{q}}) = \bs{\dot{q}} \otimes \dot{M}$, so $\bs{\dot\psi} = \bs\psi^{\dot{M}}$.
\end{proof}

\section{Performance Bounds of Worst-Case Services}\label{A:performance}

In \cite{Cruz:1991A, Cruz:1991B, Parekh:1993, Parekh:1994}, cumulative curve models were shown to be useful tools for performance analysis. The same methods can be applied here given our cumulative vector models. Consider first flow backlogs. By definition, the number of tasks left unserved prior to slot $t+j$ is
\begin{equation}\label{E:bj}
	b_j := q_j-d_j, \blank\blank j = 1, 2, 3, \ldots,
\end{equation}
according to which, $b_1 = \dot{b}$, $b_2 = \ddot{b}$, and so on. As illustrated in Fig.~\ref{F:Bounds}, if the flow is guaranteed $\bs\psi$, this backlog is bounded by
\begin{equation}\label{E:distV}
	b_j \leq q_j-\psi_j(\bs{q}) \leq \max_{j > 0} (q_j-\psi_j(\bs{q})).
\end{equation}
Consider next task delays. Recall that, according to (\ref{E:tau}), the $h$th task in $\bs{d}$ is served in slot $t+\tau_h(\bs{d})$. Notice that this service corresponds to the $h$th task in $\bs{q}$, not $\bs{a}$, because the $b$ tasks left unserved prior to slot $t$ have to be served before any new arrival. One subtlety here is that there is not enough information to determine when the $b$ tasks arrived. However, if we disregard delays experienced prior to slot~$t$ and treat the $b$ tasks as if they are new arrivals, the $h$th task in $\bs{d}$ can be viewed as arriving in slot $t+\tau_h(\bs{q})$, and its delay is
\begin{equation}\label{E:thetah}
	\theta_h := \tau_h(\bs{d})-\tau_h(\bs{q}), \blank\blank h = 1, 2, 3, \ldots.
\end{equation}
Then, as illustrated in Fig.~\ref{F:Bounds}, if the flow is guaranteed $\bs\psi$, this delay is bounded by
\begin{equation}\label{E:distH}
	\theta_h \leq \tau_h(\bs\psi(\bs{q}))-\tau_h(\bs{q}) \leq \max_{h>0} (\tau_h(\bs\psi(\bs{q}))-\tau_h(\bs{q})).
\end{equation}

Notice that the bounds in (\ref{E:distV}) and (\ref{E:distH}) are tied to a specific $\bs{q}$. Confining $\bs{q}$ to a subset of $\mb{U} \ua b$ so that the worst case of the worst-case bounds can be determined, or endowing $\mb{U} \ua b$ with a probability measure so that distributions on the bounds can be derived, extends the bounds to cases when $\bs{q}$ is uncertain. Examples of the former approach can be found in \cite{Cruz:1991A, Cruz:1991B, Parekh:1993, Parekh:1994}, while an example of the latter approach can be found in \cite{Sidi:1993}, where tasks arrive according to a {\it Poisson} process and are served according to a leaky-bucket scheme.

Not only can we bound the performance guaranteed by given worst-case services, we can also design worst-case services that guarantee given bounds. In the case of service curves, which, according to Section \ref{SS:VPS}, are special cases of worst-case services, approaches to designing service curves to guarantee given bounds are well documented \cite{Chang:2000, Boudec:2001, Bouillard:2018}. More generally, using (\ref{E:distV}) and (\ref{E:distH}), we can design worst-case services that guarantee backlog bounds slot-by-slot and delay bounds task-by-task. In the simplest cases, we set
\begin{equation}\label{E:CBS}
	\bs\psi(\bs{q}) = (\bs{q}-\bar{b}\bs\delta)^+ \blank\blank\forall \bs{q} \in \mb{U} \ua b,
\end{equation}
to guarantee that all flow backlogs are uniformly bounded by $\bar{b}$, and
\begin{equation}\label{E:CDS}
	\bs\psi(\bs{q}) = \mc{R}^{\bar\theta}\bs{q} \blank\blank\forall \bs{q} \in \mb{U} \ua b,
\end{equation}
to guarantee that all task delays are uniformly bounded by $\bar\theta$. Since both cases cover the entirety of $\mb{U} \ua b$, neither can be guaranteed by a finite-capacity server because, for instance, it is impossible to guarantee the service in (\ref{E:CBS}) if $\bs{q} \geq (\bar{b}+c+1)\bs\delta$ or the service in (\ref{E:CDS}) if $\bs{q} \geq (\bar\theta c + 1) \bs\delta$. However, in both cases, this issue can be avoided by confining coverage to a judiciously selected subset of $\mb{U} \ua b$.

\section{Interpreting Theorem \ref{T:SigUpdate}}\label{A:SigUpdateInterpret}

The spectral update relation in (\ref{E:SigUpdateCases}) has a straightforward interpretation. By definition, both $\dot\lambda_{ij}$ and $\hat\lambda_{i+1, j+1}$ specify the least capacity that must be reserved during interval $[t+i+1, t+j+1)$. The difference is that $\hat\lambda_{i+1, j+1}$ is computed relative to slot~$t$, {\it before} any task has been served, by letting $d_{i+1} = q_{i+1}$ for each $\bs{q} \in \mb{U} | q$, whereas $\dot\lambda_{ij}$ is computed relative to $t+1$, {\it after} the service of $d$ tasks, by letting $\dot{d}_i = \dot{q}_i$ for each $\bs{\dot{q}} \in \mb{U} \ua \dot{b}$. But, given any $d \leq q$, according to (\ref{E:vecqUpdate}), $\bs{\dot{q}} \in \mb{U} \ua \dot{b}$ is equivalent to $\bs{q} \in \mb{U} | q$, and moreover, if $i > 0$, according to (\ref{E:vecdUpdate}) and (\ref{E:qUpdateCaseB}), $\dot{d}_i = \dot{q}_i$ implies that $d_{i+1} = \dot{d}_i+d = \dot{q}_i+d = q_{i+1}$. So $\dot\lambda_{ij}=\hat\lambda_{i+1, j+1}$ in the case that $i>0$.

In the case that $i=0$, {\it pre}-service, the least capacity that must be reserved during interval $[t+1, t+j+1)$ is still $\hat\lambda_{1, j+1}$. It is computed by letting $d_1 = q_1$, that is, $d=q$. But, {\it post}-service, it cannot always remain $\hat\lambda_{1, j+1}$, because $d \leq q$ tasks have been actually served. According to (\ref{E:SigUpdateCaseA}), it now becomes $(\hat\lambda_{0, j+1}-d)^+$, the least capacity that must be reserved during $[t,t+j+1)$ minus the reserve released when the $d$ tasks are served. This is consistent with the fact that, according to (\ref{E:lambdaBound}), $\dot{\lambda}_{0j} \geq \hat\lambda_{1, j+1}$.

\section{Multiplexing Gains of Worst-Case Systems}\label{A:gains}

By definition, the schedulability condition distinguishes what can be guaranteed from what cannot. In particular, (\ref{E:schedulability}) can be rewritten as
\begin{equation}\label{E:rhoSchedulability}
	c \geq \rho(\Omega) := \max_{i<j} \frac{\lambda_{ij}^{\l\Omega\r}}{j-i}~.
\end{equation}
That is to say, if $c < \rho(\Omega)$, no matter how smart the scheduler is, it is impossible to simultaneously guarantee $\bs\psi^\omega$ for all $\omega\in\Omega$. Extending the definition of $\rho(\Omega)$ in (\ref{E:rhoSchedulability}) to all $\Gamma\subseteq\Omega$ so that
\begin{equation}\label{E:rho}
	\rho(\Gamma):= \max_{i<j} \frac{\lambda_{ij}^{\l\Gamma\r}}{j-i}~,
\end{equation}
enables finer distinctions to be made. By construction, if all flows in $\Gamma$ are served by a single server, $\rho(\Gamma)$ is the least capacity required by this server to simultaneously guarantee $\bs\psi^\omega$ for all $\omega\in\Gamma$. If each flow is served by a separate server, then, $\sum_{\omega\in\Omega} \rho(\{\omega\})$ is the least total capacity required by these servers. Let
\begin{equation}\label{E:eta}
	\eta:= \frac{\sum_{\omega\in\Omega} \rho(\{\omega\})}{\rho(\Omega)}~.
\end{equation}
Clearly, the larger the $\eta$, the greater capacity utilization that can be achieved through multiplexing. For this reason, we call $\eta$ the {\it \textbf{multiplexing gain}} of worst-case system $\bs\psi^{[\Omega]}$.

For all $\Gamma,\Gamma'\subseteq\Omega$, observe that
\begin{equation}\label{E:rhoProp}
	\rho(\Gamma+\Gamma') \leq \rho(\Gamma)+\rho(\Gamma'),
\end{equation}
because
\[
	\max_{i<j} \frac{\lambda_{ij}^{\l\Gamma+\Gamma'\r}}{j-i} = \frac{\lambda_{i_*j_*}^{\l\Gamma+\Gamma'\r}}{j_*-i_*} \leq \frac{\lambda_{i_*j_*}^{\l\Gamma\r}}{j_*-i_*} + \frac{\lambda_{i_*j_*}^{\l\Gamma'\r}}{j_*-i_*}\\
	\leq \max_{i<j} \frac{\lambda_{ij}^{\l\Gamma\r}}{j-i}+\max_{i<j} \frac{\lambda_{ij}^{\l\Gamma'\r}}{j-i}~,
\]
where $i_*$ and $j_*$ maximize $\frac{\lambda_{ij}^{\l\Gamma+\Gamma'\r}}{j-i}$. It follows from (\ref{E:eta}) and (\ref{E:rhoProp}) that $\eta \geq 1$. Moreover, it is easy to see that $\eta=1$ if for all $\omega,\omega'\in\Omega$, $\lambda_{ij}^\omega \propto \lambda_{ij}^{\omega'}$, while $\eta$ increases as the correlation among the $\lambda_{ij}^\omega$'s decreases. Roughly speaking, the more diverse the guarantees required by $\bs\psi^{[\Omega]}$, the larger the multiplexing gain.

As the number of flows grows, the complexity of many scheduling policies grows so fast that, even when $\eta$ is large enough to justify multiplexing, divide-and-conquer schemes remain attractive. To be precise, let $\mc{P} \subseteq 2^\Omega$ be a partition of $\Omega$. If, for each $\Gamma \in \mc{P}$, all flows in $\Gamma$ are served by a separate server, then, $\sum_{\Gamma\in\mc{P}} \rho(\Gamma)$ is the least total capacity required by these servers. Paralleling (\ref{E:eta}), let
\begin{equation}\label{E:etaP}
	\eta^\mc{P}:= \frac{\sum_{\omega\in\Omega} \rho(\{\omega\})}{\sum_{\Gamma\in\mc{P}} \rho(\Gamma)}~.
\end{equation}
Using (\ref{E:rhoProp}) and (\ref{E:eta}), it is easy to verify that $1 \leq \eta^\mc{P} \leq \eta$. Now, if a $\mc{P}$ such that $\max_{\Gamma \in \mc{P}} |\Gamma|$ is sufficiently small and $\eta^\mc{P}$ is sufficiently close to $\eta$ can be identified, then scheduling complexity can be reduced without sacrificing much of the multiplexing gain.

\section{The Baseline Permutohedron}\label{A:permuBase}

An alternative, and complementary, approach to selecting feasible schedules is motivated by the following observation. In the case that $\mu = \beta(\Omega)$, using (\ref{E:betaMu}), it is easy to verify that $\beta_\mu = \beta$, that is, $\beta_\mu(\Gamma) = \beta(\Gamma)$ for all $\Gamma\subseteq\Omega$, because, according to (\ref{E:betaProp+}), $\beta(\Omega) - \beta(\Gamma) \leq q^{\l\overline\Gamma\r}$. So, in this case, $\mb{P}(\beta_\mu) = \mb{P}(\beta)$. We call $\mb{P}(\beta)$ the {\it \textbf{baseline permutohedron}}. It is the bottom face of $\mb{F}$ in the sense that $d^{\l\Omega\r}$ is minimized in $\mb{F}$ if $d^{[\Omega]} \in \mb{P}(\beta)$. In Fig.~\ref{F:FeasibleRegion}, $\mb{P}(\beta)$ is line segment $\overline{AB}$.

To use $\mb{P}(\beta)$ to select a feasible schedule, first select $d_*^{[\Omega]}$ from $\mb{P}(\beta)$, and then select $d^{[\Omega]}$ such that $d_*^{[\Omega]} \leq d^{[\Omega]} \leq q^{[\Omega]}$ and $d^{\l\Omega\r} \leq c$. In the first step, $d_*^{[\Omega]}$ can be freely selected to be, for instance, a vertex, the vertex centroid, the max-slack schedule, or a per-class max-slack schedule. In the second step, significantly, nothing is required but a free allocation of excess capacity, $c-\beta(\Omega)$, subject to the causality constraint, $d^{[\Omega]} \leq q^{[\Omega]}$. So, for instance, a generalized-processor-sharing policy, as analyzed in \cite{Parekh:1993, Parekh:1994}, can be used to allocate this excess according to any preset ratios, or even ratios set dynamically by $d_*^{[\Omega]}$. This usage of excess capacity was also proposed in \cite{Stoica:2000} to improve the EDF scheduler's fairness when it is used to guarantee service curves.

\section{Selecting Per-Class Max-Slack Schedules}\label{A:selectPerClass}

Recall from the last paragraph of Section~\ref{SS:FP} that, by selecting a vertex of $\mb{P}(\beta_\mu)$, we can serve flows according to a priority order. Similarly, by selecting a vertex of $\mb{P}(\beta_\mu^\mc{P})$, $v_\sigma^{[\mc{P}]}(\beta_\mu^\mc{P})$, where $\sigma$ is a permutation over $\mc{P}$ and $v_\sigma^{[\mc{P}]}(\beta_\mu^\mc{P})$ can be identified by (\ref{E:vertexExt}), and thus
\begin{equation}\label{E:HMSSvertex}
	e_\sigma^{[\Omega]}(\beta_\mu^\mc{P}) := e^{[\Omega]}(v_\sigma^{[\mc{P}]}(\beta_\mu^\mc{P})),
\end{equation}
an inter-class priority order is enforced such that the larger $\sigma(\Gamma)$ is, the higher the priority that class~$\Gamma$ enjoys. For instance, in the case that $\mc{P} = \{\Gamma,\overline\Gamma\}$, according to (\ref{E:permuBetaPEntries}), $\mb{P}(\beta_\mu^\mc{P})$ is a line segment. One vertex, $[\beta_\mu(\Gamma), \mu-\beta_\mu(\Gamma)]$, assigns to class~$\overline\Gamma$ the highest priority possible relative to class~$\Gamma$, and the other, $[\mu-\beta_\mu(\overline\Gamma), \beta_\mu(\overline\Gamma)]$, reverses the priority assignment to favor $\Gamma$.

Recall also that, by selecting the vertex centroid of $\mb{P}(\beta_\mu)$, we can serve flows according to a fairness criterion. Similarly, by selecting the vertex centroid,
\begin{equation}\label{E:u-f}
	v_\text{F}^{[\mc{P}]}(\beta_\mu^\mc{P}) = \frac{1}{|\mc{P}|!} \sum_{\sigma\in\Pi^\mc{P}} v_\sigma^{[\mc{P}]}(\beta_\mu^\mc{P}),
\end{equation}
and thus
\begin{equation}\label{E:v-f}
	e_\text{F}^{[\Omega]}(\beta_\mu^\mc{P}) := e^{[\Omega]}(v_\text{F}^{[\mc{P}]}(\beta_\mu^\mc{P})),
\end{equation}
where $v_\text{F}^{[\mc{P}]}(\beta_\mu^\mc{P})$ can be rounded if it is not an integral point, an inter-class fairness criterion is enforced. But this fairness is not intra-class. In fact, a flow can still be starved by other flows in the same class. To address this concern, an alternative approach is inspired by the multi-level feedback queue widely used in operating systems \cite{ArpaciDusseau:2018} (ch. 8). The idea is to fix an inter-class priority order, but allow each flow's priority to be adjusted dynamically. In particular, if a flow has been starved, it will be moved to a higher-priority class, while if a flow has been served more than adequately, it will be moved to a lower-priority class.

\section{Composing Dual-Curve Services}\label{A:composition}

A worst-case service, $\bs\psi$, is {\it \textbf{monotone}} if $\bs{q}\geq\bs{q'}$ implies that $\bs\psi(\bs{q}) \geq \bs\psi(\bs{q'})$. Intuitively monotonicity ensures that increasing the number of task arrivals never degrades service. The next theorem establishes that monotone services are composable.

\begin{theorem}\label{T:composability}
	As illustrated in Fig.~\ref{F:Composability}, when a flow is guaranteed monotone services $\bs\psi^\textup{I}$ and $\bs\psi^\textup{II}$ by two servers operating in series, the effective service can be modeled by a single server, with $\bs{a}= \bs{a}^\textup{I}$, $b = b^\textup{I}+b^\textup{II}$, and $\bs{d}= \bs{d}^\textup{II}$, that guarantees monotone service $\bs\psi= \bs\psi^\textup{II} \star \bs\psi^\textup{I}$, where
	\begin{equation}\label{E:composability}
		\bs\psi^\textup{II} \star \bs\psi^\textup{I}(\bs{q}) := \bs\psi^\textup{II}(\bs\psi^\textup{I}(\bs{q}-b^{\textup{II}}\bs\delta)+b^{\textup{II}}\bs\delta) \blank\blank \forall \bs{q} \in \mb{U} \ua b.
	\end{equation}
\end{theorem}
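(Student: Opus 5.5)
The plan is a direct chaining argument: compose the two guarantees of Definition~\ref{D:WCS} through the series interconnection, and use monotonicity of $\bs\psi^\textup{II}$ to pass the lower bound on server~I's output through server~II.

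First, fix the bookkeeping of the series system in the cumulative-vector language of (\ref{E:vecArrival}).
\begin{itemize}
\item Server~I has arrival vector $\bs{a}^\textup{I}=\bs{a}$, backlog $b^\textup{I}$ and queued arrival vector $\bs{q}^\textup{I}=\bs{a}+b^\textup{I}\bs\delta$.
\item Server~II sees arrival vector $\bs{d}^\textup{I}$, backlog $b^\textup{II}$ and queued arrival vector $\bs{q}^\textup{II}=\bs{d}^\textup{I}+b^\textup{II}\bs\delta$.
\item For the composite server, $b=b^\textup{I}+b^\textup{II}$ and $\bs{q}=\bs{a}+b\bs\delta$, so $\bs{q}^\textup{I}=\bs{q}-b^\textup{II}\bs\delta$.
\end{itemize}

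Second, check that (\ref{E:composability}) is well defined on $\mb{U}\ua b$. If $\bs{q}\in\mb{U}\ua b$, then $\bs{q}-b^\textup{II}\bs\delta$ has zeroth entry $0$. It is non-decreasing, because a constant is subtracted only from the entries with $j>0$ and $q_1\geq b\geq b^\textup{II}$. Its first entry is at least $b^\textup{I}$. Hence $\bs{q}-b^\textup{II}\bs\delta\in\mb{U}\ua b^\textup{I}$, which is the domain of $\bs\psi^\textup{I}$. Since $\bs\psi^\textup{I}$ takes values in $\mb{U}$, the vector $\bs\psi^\textup{I}(\bs{q}-b^\textup{II}\bs\delta)+b^\textup{II}\bs\delta$ lies in $\mb{U}\ua b^\textup{II}$, the domain of $\bs\psi^\textup{II}$. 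Next, (\ref{E:WCSDef}) holds for the composite: applying (\ref{E:WCSDef}) first to $\bs\psi^\textup{II}$ and then to $\bs\psi^\textup{I}$ gives
\[
\bs\psi(\bs{q})\leq\bs\psi^\textup{I}(\bs{q}-b^\textup{II}\bs\delta)+b^\textup{II}\bs\delta\leq\bs{q}-b^\textup{II}\bs\delta+b^\textup{II}\bs\delta=\bs{q}.
\]
Monotonicity of $\bs\psi$ is immediate, because each of the maps shift by $-b^\textup{II}\bs\delta$, $\bs\psi^\textup{I}$, shift by $+b^\textup{II}\bs\delta$, and $\bs\psi^\textup{II}$ is order preserving.

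Third, prove the guarantee (\ref{E:WCSDef+}) for the composite. Take any realized $\bs{q}\in\mb{U}\ua b$. Server~I guarantees $\bs{d}^\textup{I}\geq\bs\psi^\textup{I}(\bs{q}^\textup{I})=\bs\psi^\textup{I}(\bs{q}-b^\textup{II}\bs\delta)$. Therefore
\[
\bs{q}^\textup{II}=\bs{d}^\textup{I}+b^\textup{II}\bs\delta\geq\bs\psi^\textup{I}(\bs{q}-b^\textup{II}\bs\delta)+b^\textup{II}\bs\delta .
\]
Since $\bs{d}^\textup{I}\in\mb{U}$, we have $\bs{q}^\textup{II}\in\mb{U}\ua b^\textup{II}$, so server~II's guarantee applies to this realization and gives $\bs{d}^\textup{II}\geq\bs\psi^\textup{II}(\bs{q}^\textup{II})$. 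Monotonicity of $\bs\psi^\textup{II}$ then gives
\[
\bs{d}=\bs{d}^\textup{II}\geq\bs\psi^\textup{II}\bigl(\bs\psi^\textup{I}(\bs{q}-b^\textup{II}\bs\delta)+b^\textup{II}\bs\delta\bigr)=\bs\psi(\bs{q}).
\]

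The main subtlety is this last step. It relies on server~II's guarantee holding for every input in $\mb{U}\ua b^\textup{II}$, including whatever $\bs{d}^\textup{I}$ server~I actually produces. It is also the only place where monotonicity is used, specifically that of $\bs\psi^\textup{II}$; without it, a larger-than-worst-case output from server~I could weaken server~II's bound.
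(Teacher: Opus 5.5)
Your proof is correct and is essentially the paper's argument: the key chain $\bs{d}^\textup{II}\geq\bs\psi^\textup{II}(\bs{q}^\textup{II})\geq\bs\psi^\textup{II}(\bs\psi^\textup{I}(\bs{q}-b^\textup{II}\bs\delta)+b^\textup{II}\bs\delta)$, using monotonicity of $\bs\psi^\textup{II}$, is exactly what the paper does, except that the paper phrases it through the arrival-vector maps $\bs\varphi^\textup{I}(\bs{a}^\textup{I}):=\bs\psi^\textup{I}(\bs{a}^\textup{I}+b^\textup{I}\bs\delta)$ and $\bs\varphi^\textup{II}$. Your check of (\ref{E:WCSDef}) for the composite is slightly more economical than the paper's, since you apply $\bs\psi^\textup{II}(\bs{x})\leq\bs{x}$ directly instead of using monotonicity to get $\bs\psi(\bs{q})\leq\bs\psi^\textup{II}(\bs{q})$, and your explicit domain verification fills in a step the paper leaves implicit.
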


\begin{figure}[t]
	\centering \scalebox{1.000}{\includegraphics{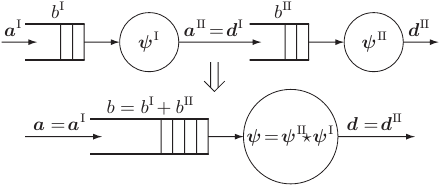}} \caption{When a flow is guaranteed monotone services $\bs\psi^\textup{I}$ and $\bs\psi^\textup{II}$ by two servers operating in series, the effective service can be modeled by a single server, with $\bs{a}= \bs{a}^\textup{I}$, $b = b^\textup{I}+b^\textup{II}$, and $\bs{d}= \bs{d}^\textup{II}$, that guarantees monotone service $\bs\psi= \bs\psi^\textup{II} \star \bs\psi^\textup{I}$. Notice that, as reflected by the relation that $\bs{a}^\textup{II} = \bs{d}^\textup{I}$, we assume that a task can be served by the first server and then the second server in the same slot.}
	\label{F:Composability}
	\Description{Composing two monotone services into one.}
\end{figure}

\begin{proof}
	Since $\bs\psi^\text{II}$ is monotone, using (\ref{E:WCSDef}) twice, we have
	\[
		\bs\psi(\bs{q}) = \bs\psi^\text{II}(\bs\psi^\text{I}(\bs{q}-b^{\text{II}}\bs\delta)+b^{\text{II}}\bs\delta) \leq \bs\psi^\text{II}(\bs{q}-b^{\text{II}}\bs\delta+b^{\text{II}}\bs\delta) = \bs\psi^\text{II}(\bs{q}) \leq \bs{q}.
	\]
	So, according to Definition~\ref{D:WCS}, $\bs\psi$ is a worst-case service. In addition, since both $\bs\psi^\text{I}$ and $\bs\psi^\text{II}$ are monotone, if $\bs{q}\geq\bs{q'}$, it is immediate that
	\[
		\bs\psi(\bs{q}) = \bs\psi^\text{II}(\bs\psi^\text{I}(\bs{q}-b^{\text{II}}\bs\delta)+b^{\text{II}}\bs\delta) \geq \bs\psi^\text{II}(\bs\psi^\text{I}(\bs{q'}-b^{\text{II}}\bs\delta)+b^{\text{II}}\bs\delta)=\bs\psi(\bs{q'}).
	\]
	Hence, $\bs\psi$ is also monotone.
	
	To show that the flow is guaranteed $\bs\psi$, it is convenient to recast worst-case services as functions of $\bs{a}$ instead of $\bs{q}$. Specifically, for all $\bs{a}^\text{I} \in \mb{U}$, let
	\[
		\bs\varphi^\text{I}(\bs{a}^\text{I}) := \bs\psi^\text{I}(\bs{a}^\text{I}+b^\text{I}\bs\delta) = \bs\psi^\text{I}(\bs{q}^\text{I}) \leq \bs{d}^\text{I},
	\]
	and, for all $\bs{a}^\text{II} \in \mb{U}$, let
	\[
		\bs\varphi^\text{II}(\bs{a}^\text{II}) := \bs\psi^\text{II}(\bs{a}^\text{II}+b^\text{II}\bs\delta) = \bs\psi^\text{II}(\bs{q}^\text{II}) \leq \bs{d}^\text{II}.
	\]
	As $\bs\psi^\text{II}$ is monotone, $\bs\varphi^\text{II}$ is also monotone, so
	\[
		\bs{d} = \bs{d}^\text{II} \geq \bs\varphi^\text{II}(\bs{a}^\text{II}) = \bs\varphi^\text{II}(\bs{d}^\text{I}) \geq \bs\varphi^\text{II}(\bs\varphi^\text{I}(\bs{a}^\text{I})) = \bs\varphi^\text{II}(\bs\varphi^\text{I}(\bs{a})).
	\]
	But, by definition,
	\[
		\bs\varphi^\text{II}(\bs\varphi^\text{I}(\bs{a})) = \bs\psi^\text{II}(\bs\psi^\text{I}(\bs{a}^\text{I}+b^\text{I}\bs\delta)+b^\text{II}\bs\delta) = \bs\psi^\text{II}(\bs\psi^\text{I}(\bs{a}+b\bs\delta-b^{\text{II}}\bs\delta)+b^{\text{II}}\bs\delta) = \bs\psi^\text{II}(\bs\psi^\text{I}(\bs{q}-b^{\text{II}}\bs\delta)+b^{\text{II}}\bs\delta).
	\]
	Therefore, according to (\ref{E:composability}), $\bs{d} \geq \bs\psi(\bs{q})$ for all $\bs{q}\in\mb{U}\ua b$, and $\bs\psi$ is guaranteed.
\end{proof}

Using (\ref{E:psiuv}), it is easy to verify that dual-curve services are monotone. They are also composable as shown by the next theorem.

\begin{theorem}\label{T:DVSComposability}
	In Theorem~\ref{T:composability}, if $\bs\psi^\textup{I}$ and $\bs\psi^\textup{II}$ are both dual-curve services that can be identified by $(\bs{u}^\textup{I}, \bs{v}^\textup{I})$ and $(\bs{u}^\textup{II}, \bs{v}^\textup{II})$, respectively, that is, $\bs\psi^\textup{I} = \bs\psi^{(\bs{u}^\textup{I}, \bs{v}^\textup{I})}$ and $\bs\psi^\textup{II} = \bs\psi^{(\bs{u}^\textup{II}, \bs{v}^\textup{II})}$, then $\bs\psi=\bs\psi^\textup{II}\star\bs\psi^\textup{I}$ is also a dual-curve service that can be identified by $(\bs{u}, \bs{v})$, that is, $\bs\psi = \bs\psi^{(\bs{u}, \bs{v})}$, where, for all $j\in\mb{N}$,
	\begin{equation}\label{E:uvComposing}
		\textup{(a)}\blank u_j = \min \left\{u_j^\textup{II}, \min_{1 \leq i \leq j} (u_i^\textup{I} + b^\textup{II} + v_{j-i}^\textup{II})\right\},
		\blank\textup{and}\blank
		\textup{(b)}\blank v_j = \min_{i \leq j} (v_i^\text{I} + v_{j-i}^\text{II}).
	\end{equation}
\end{theorem}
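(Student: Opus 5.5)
The plan is a direct computation: unfold the composition (\ref{E:composability}) with the closed form (\ref{E:psiuv}) for dual-curve services, and regroup the nested minimizations so that the result has the form (\ref{E:psiuv}) for the pair $(\bs{u},\bs{v})$ of (\ref{E:uvComposing}). Theorem~\ref{T:composability} already gives that $\bs\psi=\bs\psi^\textup{II}\star\bs\psi^\textup{I}$ is a worst-case service guaranteed to the flow with $b=b^\textup{I}+b^\textup{II}$. So it only remains to show two things: that $\bs{u},\bs{v}\in\mb{U}$, and that $\bs\psi(\bs{q})=\bs\psi^{(\bs{u},\bs{v})}(\bs{q})$ for all $\bs{q}\in\mb{U}\ua b$.

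First I check that $\bs{u},\bs{v}$ are cumulative vectors.
\begin{itemize}
\item For $\bs{v}$: $v_0=v^\textup{I}_0+v^\textup{II}_0=0$. Also $v_j\le v_{j+1}$, because every term $v^\textup{I}_i+v^\textup{II}_{j+1-i}$ with $i\le j$ dominates the corresponding term at index $j$. The extra term $v^\textup{I}_{j+1}+v^\textup{II}_0$ dominates $v^\textup{I}_j+v^\textup{II}_0$.
\item For $\bs{u}$: $u_0=u^\textup{II}_0=0$, since the inner minimum is empty. Monotonicity follows the same way: $u^\textup{II}$ is nondecreasing, each term $u^\textup{I}_i+b^\textup{II}+v^\textup{II}_{j-i}$ is nondecreasing in $j$, and the new term at $i=j+1$ satisfies $u^\textup{I}_{j+1}+b^\textup{II}+v^\textup{II}_0\ge u^\textup{I}_j+b^\textup{II}+v^\textup{II}_0$.
\end{itemize}
Values in $\mb{N}^+$ are preserved under $\min$ and $+$.

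For the main identity, fix $\bs{q}\in\mb{U}\ua b$ and set $\bs{q}'=\bs{q}-b^\textup{II}\bs\delta\in\mb{U}\ua b^\textup{I}$ and $\bs{x}=\bs\psi^\textup{I}(\bs{q}')+b^\textup{II}\bs\delta$. By (\ref{E:psiuv}), for $k\ge1$,
\[
x_k=\min\Big\{u^\textup{I}_k+b^\textup{II},\ \min_{1\le i\le k}(q_i+v^\textup{I}_{k-i})\Big\},
\]
using $q'_i=q_i-b^\textup{II}$ for $i\ge1$. Applying (\ref{E:psiuv}) again gives $\psi_j(\bs{q})=\min\{u^\textup{II}_j,\min_{1\le k\le j}(x_k+v^\textup{II}_{j-k})\}$. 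Substituting $x_k$ and distributing $+v^\textup{II}_{j-k}$ over the minimum splits the result into two pieces.
\begin{itemize}
\item The piece $\min_{1\le k\le j}(u^\textup{I}_k+b^\textup{II}+v^\textup{II}_{j-k})$ joins $u^\textup{II}_j$ to give exactly $u_j$ in (\ref{E:uvComposing})-(a).
\item The remaining double minimum $\min_{1\le i\le k\le j}(q_i+v^\textup{I}_{k-i}+v^\textup{II}_{j-k})$ is handled by swapping the order of minimization and substituting $l=k-i$. It becomes $\min_{1\le i\le j}\big(q_i+\min_{l\le j-i}(v^\textup{I}_l+v^\textup{II}_{j-i-l})\big)=\min_{1\le i\le j}(q_i+v_{j-i})$, which uses (\ref{E:uvComposing})-(b).
\end{itemize}
Hence $\psi_j(\bs{q})=\min\{u_j,\min_{1\le i\le j}(q_i+v_{j-i})\}=\psi^{(\bs{u},\bs{v})}_j(\bs{q})$ by (\ref{E:psiuv}).

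The only delicate step is the bookkeeping of the shift by $b^\textup{II}\bs\delta$. Subtracting $b^\textup{II}$ at $q'_i$ and adding it back in $x_k$ cancel only for indices $\ge1$. The index-$0$ terms therefore need separate care: $x_0=0$ is never used because $k$ starts at $1$, and $q'_0=q_0=0$ contributes only through $u^\textup{I}$. I would also note that arithmetic with $\infty$ causes no trouble, since only $\min$ and $+$ of nonnegative extended naturals appear.
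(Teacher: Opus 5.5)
Your proof is correct and follows essentially the same route as the paper. You unfold $\bs\psi^\textup{II}\star\bs\psi^\textup{I}$ via (\ref{E:psiuv}), split off the $u^\textup{I}$-terms to form $u_j$, and swap the two inner minimizations with the substitution $l=k-i$ to recover $v_{j-i}$. Your explicit check that $\bs{u},\bs{v}\in\mb{U}$ is a small addition the paper leaves implicit; its $j=0$ case for $\bs{u}$ holds trivially since $u_0=0$.
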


\begin{proof}
	For all $\bs{q} \in \mb{U} \ua b$ and $j\in\mb{N}$, using (\ref{E:composability}) and (\ref{E:psiuv}), we have
	\[
	\begin{IEEEeqnarraybox}[][c]{rCl}
		\psi_j(\bs{q})
		&=& \psi_j^{(\bs{u}^\textup{II}, \bs{v}^\textup{II})}(\bs\psi^{(\bs{u}^\textup{I}, \bs{v}^\textup{I})}(\bs{q}-b^{\textup{II}}\bs\delta)+b^{\textup{II}}\bs\delta)\\
		&=& \min\left\{u_j^\text{II}, \min_{1 \leq i \leq j} (\psi_i^{(\bs{u}^\textup{I}, \bs{v}^\textup{I})}(\bs{q}-b^\text{II}\bs\delta) +b^\text{II}+v_{j-i}^\text{II})\right\}\\
		&=& \min\left\{u_j^\text{II}, \min_{1 \leq i \leq j} \left(\min\left\{u_i^\text{I}, \min_{1 \leq k \leq i}(q_k-b^\text{II}+v_{i-k}^\text{I})\right\} +b^\text{II}+v_{j-i}^\text{II}\right)\right\}\\
		&=& \min \left\{u_j^\textup{II}, \min_{1 \leq i \leq j} (u_i^\textup{I} + b^\textup{II} + v_{j-i}^\textup{II}), \min_{1 \leq i \leq j} \min_{1 \leq k \leq i} (q_k+v_{i-k}^\text{I}+ v_{j-i}^\textup{II})\right\}.
	\end{IEEEeqnarraybox}
	\]
	Using (\ref{E:uvComposing})-(a), and for the last term of the outer minimization, reversing the order of the two $\min$ operators, we have
	\[
		\psi_j(\bs{q}) = \min \left\{u_j, \min_{1 \leq k \leq j} 	\min_{k \leq i \leq j} (q_k+v_{i-k}^\text{I}+ v_{j-i}^\textup{II})\right\} = \min \left\{u_j, \min_{1 \leq k \leq j} 	\left(q_k + \min_{k \leq i \leq j} (v_{i-k}^\text{I}+ v_{j-i}^\textup{II})\right)\right\}.
	\]
	So, according to (\ref{E:uvComposing})-(b),
	\[
		\psi_j(\bs{q}) = \min \left\{u_j, \min_{1 \leq k \leq j} 	\left(q_k + \min_{l \leq j-k} (v_l^\text{I}+ v_{j-k-l}^\textup{II})\right)\right\} = \min \left\{u_j, \min_{1 \leq k \leq j} 	(q_k + v_{j-k})\right\}.
	\]
	Comparing this to (\ref{E:psiuv}), it is immediate that $\bs\psi(\bs{q}) = \bs\psi^{(\bs{u}, \bs{v})}(\bs{q})$, so $\bs\psi = \bs\psi^{(\bs{u}, \bs{v})}$.
\end{proof}

\end{document}